\theoremstyle{thmstyleone}%
\newtheorem{theorem}{Theorem}
\newtheorem{proposition}[theorem]{Proposition}%
\theoremstyle{thmstyletwo}%
\theoremstyle{thmstylethree}%
\newtheorem{definition}{Definition}%
\theoremstyle{thmstylethree}%
\newtheorem{lemma}{Lemma}%
\theoremstyle{thmstylethree}%
\newtheorem{corollary}{Corollary}%
\theoremstyle{plain}
\newtheorem{claim}{Claim}
\newtheorem{remark}{Remark}
\titleclass{\subsubsubsection}{straight}[\subsubsection]
\newcounter{subsubsubsection}[subsubsection]
\renewcommand\thesubsubsubsection{\thesubsubsection.\arabic{subsubsubsection}}
\newcommand{\R}{\mathbb{R}}
\newcommand{\Q}{\mathbb{Q}}
\newcommand{\biglo}[1]{\Tilde{\mathcal{O}}(#1)}
\newcommand{\inside}[1]{\mathring{#1}}
\newcommand{\bigo}[1]{\mathcal{O}(#1)}
\newcommand{\CONGEST}{\textsf{CONGEST}}
\newcommand{\id}{\textsc{id}}
\newcommand{\dfsorder}{\textsc{dfs-order}}
\newcommand{\leftorder}{\textsc{left-dfs-order}}
\newcommand{\rightorder}{\textsc{right-dfs-order}}
\newcommand{\joinpath}{\texttt{join-path}}
\newcommand{\dfsrule}{\textsc{dfs-rule}}
\newcommand{\separatorproblem}{\textsc{separator-problem}}
\newcommand{\weightproblem}{\textsc{weigths-problem}}
\newcommand{\dfsorderproblem}{\textsc{dfs-order-problem}}
\newcommand{\joinproblem}{\textsc{join-problem}}
\newcommand{\markpathproblem}{\textsc{mark-path-problem}}
\newcommand{\leftpi}{\pi_\ell}
\newcommand{\rightpi}{\pi_{r}}
\newcommand{\minproblem}{\textsc{min-problem}}
\newcommand{\maxproblem}{\textsc{max-problem}}
\newcommand{\ancestorsumproblem}{\textsc{ancestor-sum-problem}}
\newcommand{\descendantsumproblem}{\textsc{descendant-sum-problem}}
\newcommand{\rangeproblem}{\textsc{range-problem}}
\newcommand{\sumsubsetproblem}{\textsc{sum-subset-problem}}
\newcommand{\sumtreeproblem}{\textsc{sum-tree-problem}}
\newcommand{\lcaproblem}{\textsc{LCA-problem}}
\newcommand{\detectfaceproblem}{\textsc{detect-face-problem}}
\newcommand{\detectancestor}{\textsc{ancestor-problem}}
\newcommand{\detectdescendant}{\textsc{descendant-problem}}
\newcommand{\hiddenproblem}{\textsc{hidden-problem}}
\newcommand{\fullagumentationproblem}{\textsc{full-augmentation-problem}}
\newcommand{\notcontainedproblem}{\textsc{not-contained-problem}}
\newcommand{\notcontainsproblem}{\textsc{not-contains-problem}}
\newcommand{\rerootproblem}{\textsc{re-root-problem}}
\newtcolorbox{algorithmbox}[2][]{%
	enhanced,colback=white,colframe=black,coltitle=black,
	sharp corners,boxrule=0.4pt,
	fonttitle=\itshape,
	attach boxed title to top left={yshift=-0.3\baselineskip-0.4pt,xshift=2mm},
	boxed title style={tile,size=minimal,left=0.5mm,right=0.5mm,
		colback=white,before upper=\strut},
	title=#2,#1
}
\begin{document}

\title{Deterministic Distributed DFS and Other Problems via Cycle Separators in Planar Graphs}


\author{Benjamin Jauregui \\
IRIF, Université Paris Cité \\
Universidad de Chile \\
  \texttt{jauregui@irif.fr} \\
  Paris, France 
\and 
Pedro Montealegre \\
Universidad Adolfo Ibañez \\
 \texttt{p.montealegre@uai.cl} \\
  Santiago, Chile 
\and
Ivan Rapaport \\
Universidad de Chile \\ 
\texttt{rapaport@dim.uchile.cl} \\
Santiago, Chile
}

\date{}

\maketitle

\begin{abstract}

One of the most basic techniques in algorithm design consists of breaking a problem into subproblems and then proceeding recursively. In the case of graph algorithms, one way to implement this approach is through separator sets. Given a graph $G=(V,E)$, a subset of nodes 
$S \subseteq V$ is called a separator set of $G$ if the size of each connected component of $G-S$ 
is at most $2/3 \cdot |V|$. The most useful separator sets are those that satisfy certain restrictions of cardinality or structure. For over 40 years, various efficient algorithms have been developed for computing separators of different kinds, particularly in planar graphs. Separator sets, combined with a divide and conquer approach, have been fundamental in the design of efficient algorithms in various settings.

In this work, we present the first deterministic algorithm in the distributed \CONGEST\ model that recursively computes a {\it cycle separator} in planar graphs in \(\Tilde{\mathcal{O}}(D)\) rounds. 
This result, as in the centralized setting, has significant implications for distributed planar algorithms. In fact, from this result, we can construct a deterministic algorithm that computes a DFS tree in $\biglo{D}$ rounds. This matches both the best-known randomized algorithm of Ghaffari and Parter (DISC'17) and, up to polylogarithmic factors, the trivial lower bound of \(\Omega(D)\) rounds.

Besides DFS, our deterministic cycle separator algorithm can be used to derandomize several planar-graph algorithms whose only randomized ingredient is the computation of a cycle separator, such as maximum flow (Abd-Elhaleem, Dory, Parter and Weimann, PODC'25), single-source shortest path (Li and Parter, STOC'19), and reachability (Parter, DISC'20).

 \end{abstract}
\clearpage

\tableofcontents
\clearpage 
\section{Introduction.}

Given a graph \( G = (V,E) \), a subset of nodes \( S \subseteq V \) is called a separator set of \( G \) if the size of each connected component of \( G-S \) is at most \( \frac{2}{3} \cdot |V| \). Separator sets were introduced by Lipton and Tarjan as a tool for designing approximation algorithms for NP-complete problems in planar graphs \cite{lipton:1979}. They proved that every planar graph can be divided into connected components with \( |S| = O(\sqrt{n}) \). These separator sets, combined with a divide-and-conquer strategy, enable solving smaller subproblems recursively, thus yielding polynomial-time approximation algorithms for planar graphs for problems such as maximum independent set, maximum matching, and others~\cite{separatorapplications}. Additionally, analogous results and characterizations have been developed for various classes of graphs, including bounded-genus graphs \cite{GILBERT1984391}, minor-excluding graphs \cite{seymournonplanar}, string graphs \cite{Fox2013ApplicationsOA}, and others.

In this article, we address the problem of computing  cycle separators  in planar graphs within a distributed framework, and their application to depth-first search (DFS). We consider the standard \CONGEST\ model~\cite{peleg2000distributed}, which assumes a network of nodes executing a synchronous, failure-free communication protocol. Communication occurs solely through the network links, with a constraint of transmitting no more than $O(\log n)$ bits per round across each link.

\subsection{Context.}

The development of solutions for the DFS problem has followed a challenging path in distributed computing. The DFS problem is notably difficult to solve in a distributed setting. Beyond the classic algorithm by Awerbuch \cite{awerbuch1985new}, which computes a DFS in $\mathcal{O}(n)$ rounds for general graphs, only the randomized algorithm by Ghaffari and Parter \cite{ghaffari:2017} for planar graphs has marked a significant improvement, nearly reaching the lower bound of $\Omega(D)$. Over the past 40 years, no deterministic improvements have been achieved for DFS in any significant graph class.  

The progress of DFS in planar graphs is not coincidental. In recent years, distributed algorithms for planar graphs have seen significant advancements, starting with the work of Ghaffari and Haeupler \cite{ghaffari2016distributed-1}, who introduced the concept of low-congestion shortcuts specifically for planar graphs. This groundbreaking framework has spurred progress in a variety of algorithmic applications for planar graphs, including minimum cuts, minimum spanning trees (MST), reachability, and other essential computational problems \cite{ghaffari2016distributed-1, ghaffari2016distributed-2, ghaffari:2017, parter}. Further details are provided in \Cref{subsec:shortcuts}.

The low-congestion shortcuts technique paved the way for the development of important algorithmic primitives in planar graphs, particularly for the computation of separator sets. To the best of our knowledge, the first application of separator sets in the \CONGEST\ model was by Ghaffari and Parter, who utilized them to compute a depth-first search (DFS) tree in planar graphs \cite{ghaffari:2017}. Their approach to distributively calculating separator sets was based on a randomized adaptation of a classical result by Lipton and Tarjan \cite{lipton:1979}, which states that every planar graph admits a specific type of separator set known as a \emph{cycle separator}. A cycle separator is defined as a separator set that either forms a cycle in \( G \) or a path in \( G \) such that its endpoints can be connected by an edge without crossing any other edges of \( G \).

More specifically, the result by Ghaffari and Parter~\cite{ghaffari:2017} shows that cycle separators can be computed in 2-node-connected\footnote{A graph \(G=(V,E)\) is \emph{\(c\)-node-connected} (or \emph{\(c\)-vertex-connected}) if \(|V|>c\) and \(G-S\) remains connected for every vertex set \(S\subseteq V\) with \(|S|<c\).} planar graphs in $\biglo{D}$ rounds. This approach was later extended by Li and Parter~\cite{parter}, who developed an algorithm for computing cycle separators in 1-node-connected planar graphs, enabling them to solve the SSSP problem in $\biglo{D^2}$ rounds. Subsequently, similar techniques based on randomized cycle separators were used to improve Single-Source Reachability to $\biglo{D}$ rounds~\cite{parter2020}, and to compute Maximum Flow in weighted directed planar graphs in $\biglo{D^2}$ rounds~\cite{abd2025distributed}.  It is noteworthy that all these algorithms are randomized, relying on probabilistic techniques to achieve their results. However, in many of these algorithms, the computation of cycle separators is the only randomized component, while all remaining procedures are deterministic.

Originally, both low-congestion shortcuts and cycle separator techniques only had randomized implementations. However, in \cite{haeupler2018round}, Haeupler, Hershkowitz and Wajc demonstrated  that it is possible to implement low-congestion shortcuts deterministically. This deterministic approach paves the way for generalizing all existing results that use these techniques for planar graphs. For instance, the deterministic shortcut technique directly enables a deterministic \CONGEST\ model algorithm to compute an MST in planar graphs in \( \biglo{D} \) rounds \cite{haeupler2018round}. In contrast,  there is a lower bound of \( \Omega(D + \sqrt{n}) \) in the number of rounds for MST computation in arbitrary graphs, which holds even for randomized algorithms. 

Nevertheless, the cycle separator technique has not yet been similarly extended; until now, only the randomized implementations by Ghaffari and Parter \cite{ghaffari:2017}, and Li and Parter \cite{parter}, have been available. This naturally raises the question: 

\medskip 

\begin{center}
\emph{Is it possible to compute cycle separators deterministically in \( \biglo{D} \) rounds\\ in the \CONGEST\ model?}
\end{center}

\medskip

In this article, we answer this question affirmatively.

\subsection{Our contribution.}

Our first contribution, stated in 
\Cref{teo:separatorcongest}, introduces the first deterministic algorithm in the \CONGEST\ model that computes cycle separators for planar graphs in $\biglo{D}$ rounds. Moreover, our result is even stronger, as it computes separators across multiple connected components in parallel.

\begin{theorem}\label{teo:separatorcongest}
    There exists a deterministic algorithm in the \CONGEST\ model which, given a planar graph $G=(V,E)$ with diameter $D$ and a partition $\mathcal{P} = \{P_1,...,P_k\}$ of $V$, computes in $\biglo{D}$ rounds  a cycle separator of $G[P_i]$ for each $i\in [k]$. 
\end{theorem}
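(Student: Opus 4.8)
The plan is to follow the classical Lipton–Tarjan strategy for cycle separators and to implement each of its steps deterministically in the \CONGEST\ model, taking care that all of them can be run in parallel across the pieces $G[P_1],\dots,G[P_k]$ of the partition $\mathcal{P}$. First, since a cycle separator is only meaningful inside a planar embedding, I would begin by reducing to the case where each $G[P_i]$ is $2$-connected, using the standard trick (as in Ghaffari--Parter) of working in the block-cut tree: a separator of a block of $G[P_i]$, together with the at most two cut vertices linking it to the rest, yields a separator of $G[P_i]$; the recursion on components chooses, via a weighting argument on block sizes, a block that carries at least a constant fraction of the weight. Inside a $2$-connected piece one fixes a BFS tree $T$ rooted at an arbitrary node, so that the depth of $T$ is $O(D)$, and a non-tree edge $e=\{u,v\}$: the fundamental cycle $C_e$ of $e$ in $T$ is a cycle whose two sides of the embedding partition the remaining vertices, and Lipton--Tarjan's argument shows that a suitable choice of $e$ (found by examining the weights of the interiors and exteriors of candidate cycles along the tree, essentially a search over the dual tree of faces) produces a cycle separator of length $O(D)$.

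The second, and main, block of work is to turn each of these combinatorial steps into $\biglo{D}$-round deterministic \CONGEST\ routines. For this I would rely on the deterministic low-congestion shortcuts of Haeupler--Hershkowitz--Wajc, which give $\biglo{D}$-round deterministic primitives for computing BFS/MST-like trees, aggregating (sum/min/max) along and across such trees, and broadcasting; these are exactly the operations needed to (i) build the BFS tree $T$ in each piece, (ii) compute, for a fixed non-tree edge $e$, the number of vertices strictly inside and strictly outside $C_e$ — which amounts to rooted subtree sums plus a correction for the two tree paths from $u$ and $v$ to their lowest common ancestor — and (iii) let the two endpoints of $e$ learn the cycle $C_e$ and its length by walking up $T$ to the \lca{u}{v}. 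The delicate point is locating the *good* non-tree edge: the Lipton--Tarjan proof picks the edge by a monotone search over faces ordered by a planar embedding, and implementing this search distributively requires propagating, along the tree of faces (equivalently, along $T$), the running weight of the region swept so far; I would phrase this as an aggregation over $T$ augmented with the rotation-system information stored locally at each vertex, so that a single $\biglo{D}$-round pass suffices. All of these primitives run simultaneously in all the $G[P_i]$ because the shortcut framework is oblivious to which component a vertex belongs to — congestion on each edge stays $\biglo{1}$ amortized — so the parallel-over-$\mathcal{P}$ statement comes essentially for free once the single-component version is in place.

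Putting it together: in $\biglo{D}$ rounds we (a) compute block-cut structure and pick, per component, a heavy block; (b) make that block $2$-connected-virtually and build its BFS tree; (c) compute the inside/outside weights of the candidate fundamental cycles and select the separating edge $e$; (d) have the nodes of $C_e$ mark themselves and, if needed, recombine with the at most two relevant cut vertices to obtain a cycle (or near-cycle path) separator of $G[P_i]$. Correctness of the separation property is inherited verbatim from Lipton--Tarjan once the weight computations are shown to be exact, and the round bound is $\biglo{D}$ because every ingredient is an $\biglo{D}$-round deterministic shortcut primitive applied a constant number of times. The step I expect to be the main obstacle is (c): correctly and deterministically identifying a single non-tree edge whose fundamental cycle balances the weights, since this is where Lipton--Tarjan use a global sweep of the embedding that has no obvious local formulation; I anticipate needing a careful argument that the sweep can be realized as a prefix-sum along the BFS tree using only locally available rotation-system data, and that ties in the weight comparison can be broken consistently (e.g.\ by \id) so that all nodes agree on the chosen edge.
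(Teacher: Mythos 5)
Your plan hinges on a BFS tree of each piece $G[P_i]$ having depth $O(D)$, and consequently on the cycle separator having length $O(D)$. Neither holds: the parameter $D$ is the diameter of the \emph{whole} graph $G$, while the induced subgraph $G[P_i]$ can have diameter (and hence BFS depth) as large as $\Theta(n)$. A BFS tree of $G[P_i]$ built with $G$'s edges will not stay inside $P_i$, and one built inside $G[P_i]$ can be $\Theta(n)$ deep, so ``walking up $T$ to the \lca{u}{v}'' and ``a single $\biglo{D}$-round pass'' along $T$ do not follow from the low-congestion-shortcut primitives as stated. This is precisely the difficulty the paper treats head-on: it uses \emph{arbitrary} spanning trees $T_i$ (computed by a Bor\r{u}vka-style merge restricted to each $P_i$), and the non-trivial part of the work is showing that the needed tree operations --- computing a DFS-order of a possibly linear-depth tree, marking a possibly linear-length tree path, and locating the lowest common ancestor --- can all still be done in $\biglo{D}$ rounds via a pointer-doubling/merging scheme on top of the shortcuts. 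Your proposal treats these as for-free consequences of BFS depth, which is the gap.

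Two further structural divergences from the paper (less fatal, but worth noting). First, the paper does not reduce to $2$-connected blocks at all: it defines a spanning tree per piece and works with real and \emph{virtual} fundamental edges directly, so the block-cut-tree preprocessing in your step (a) is unnecessary in its framework. Second, your step (c) sketches a sweep ``along the tree of faces (equivalently, along the dual)'' to locate the balanced cycle --- this is exactly the mechanism behind the \emph{randomized} Ghaffari--Parter algorithm, and the paper's whole point is to avoid it. Instead it defines a closed-form deterministic weight $\omega(F_e)$ for each fundamental face purely in terms of left/right DFS orders, subtree sizes and depths (\Cref{def:weights}), proves via \Cref{teo:fundamentalseparator} that a balanced face exists among real fundamental edges or among the restricted class of $(T,F_e)$-compatible ``augmentation'' edges, and then finds it with a constant number of aggregate/range-search subroutines --- no dual-graph simulation and no global sweep. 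You correctly flagged (c) as the main obstacle; the resolution is this weight formula plus the augmentation machinery, not a prefix-sum sweep over a shallow BFS tree.
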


To compute the separators,  Ghaffari and Parter \cite{ghaffari:2017} employ a sophisticated technique for approximating the number of nodes inside the faces of a planar embedding of $G$, known as the \emph{weight of the face}. Their algorithm simulates nodes of the dual graph of $G$, and uses that simulation to compute a good enough approximation of the weights of the faces. 

In contrast, our approach is more similar to the one used by Tarjan and Lipton in \cite{lipton:1979}, which proves that in every spanning tree $T$ of $G$ there is a path in $T$ that is a separator set of $G$. This involves computing \emph{triangulations} of the fundamental faces of $T$, which roughly consists of adding all edges within a face so that all faces inside it become triangles. Implementing the ideas of Tarjan and Lipton in a distributed manner is not straightforward, as triangulating a face could involve to simulate edges between nodes that are far apart in $G$. Instead, we show that it is sufficient to compute what we call an \emph{augmentation} of a face, which consists of adding some specific non-$G$ edges within a face, but not all the edges necessary for a triangulation.  Then, given a spanning tree $T$, we show that we can use our augmentations to find a cycle separator in one of the fundamental faces of $T$. The second key technical contribution of our deterministic algorithm is the definition of a {\it deterministic} formula that enables us to estimate the number of nodes inside each face without requiring randomized approximations.

As a representative application of \Cref{teo:separatorcongest}, we show the first efficient near-optimal  deterministic distributed algorithm for computing a DFS tree in planar graphs.

\begin{theorem}\label{teo:dfscongest}
    There exists a deterministic algorithm in the \CONGEST\ model which, given a planar graph $G~=~(V,E)$ with diameter $D$ and a node $r$, computes a DFS tree of $G$ rooted at $r$ in $\biglo{D}$ rounds.
\end{theorem}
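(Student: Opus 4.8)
The plan is to feed our deterministic separator routine (\Cref{teo:separatorcongest}) into the classical recursive ``DFS via separator paths'' scheme, following the high-level strategy of Ghaffari and Parter \cite{ghaffari:2017}. Throughout $\bigo{\log n}$ \emph{phases} we maintain a partition of $V$ into \emph{active pieces}, each inducing a connected subgraph of $G$ and each carrying a designated root; initially there is a single active piece $V$ with root $r$, and a vertex leaves the active pieces once it has been committed to the DFS tree. In one phase we process all active pieces simultaneously: for a piece $H$ with root $\rho$ we compute a cycle separator $C$ of $G[H]$ via \Cref{teo:separatorcongest} (all pieces at once, in $\biglo{D}$ rounds), extend $C$ to a simple \emph{separator path} $P$ in $G[H]$ starting at $\rho$, commit the vertices and tree edges of $P$ to the DFS tree, and replace $H$ by the connected components $K_1,\dots,K_m$ of $G[H]-V(P)$; the new root of $K_j$ is a neighbour in $K_j$ of the vertex $v_j$ of $P$ that is farthest from $\rho$ along $P$ among those adjacent to $K_j$, and the edge $(v_j,\text{root of }K_j)$ is committed as a tree edge. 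If $V(P)$ already removes a $\tfrac23$-separator of $H$ (see below) then each $K_j$ has at most $\tfrac23|H|$ vertices, so after $\bigo{\log_{3/2} n}$ phases all active pieces are singletons and the committed parent pointers describe a spanning tree $T$ of $G$ rooted at $r$.

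The combinatorial heart is the correctness of this gluing: $T$ is a DFS tree of $G$. One shows by induction on phases that at every phase the active pieces are pairwise non-adjacent in $G$ (they are separated by the union of all previously committed paths), since each split of a piece $H$ is by removal of a path $P\subseteq G[H]$ and any $G$-edge with both endpoints in $H\setminus V(P)$ stays inside one component. Consequently, given any $e=(x,y)\in E(G)\setminus T$, consider the first phase in which one endpoint, say $x$, is committed to a path $P$ of a piece $H$: the other endpoint $y$ is then either on $P$ as well (hence $x,y$ lie on a common committed path and are in ancestor--descendant relation), or $y$ lies in some component $K_j$ of $G[H]-V(P)$, in which case $x$ is, by maximality of $v_j$, an ancestor on $P$ of $v_j$, while $v_j$ is an ancestor of the root of $K_j$ and hence of $y$. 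Either way $e$ joins an ancestor to a descendant, so $T$ is a DFS tree.

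To turn a cycle separator $C$ of $G[H]$ into a separator path from $\rho$, we first compute inside $H$ a path $\pi$ from $\rho$ to $V(C)$ that meets $C$ only at its last vertex $c$ (a multi-source BFS from $V(C)$ restricted to $H$, with a traceback from $\rho$), and then let $P$ be $\pi$ followed by a walk along $C$ starting at $c$: when $C$ is a genuine cycle we walk all the way around it except for one edge, so that $V(P)=V(\pi)\cup V(C)$ and $P$ is simple. When $C$ is of the path type of \Cref{teo:separatorcongest} (a path together with a non-crossing virtual closing edge) a bit more care is needed to realize a simple path from $\rho$ that still removes a $\tfrac23$-separator of $H$, exploiting the non-crossing closing edge; we defer this routine case analysis. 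In all cases $V(C)\subseteq V(P)$ (or the leftover part of $C$ is absorbed so that the same separator guarantee holds), and since $V(C)$ is a separator of $G[H]$, removing the larger set $V(P)$ leaves components of size at most $\tfrac23|H|$ — the shrinkage driving the $\bigo{\log n}$ phase count.

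It remains to bound the cost of a phase by $\biglo{D}$. Besides the single call to \Cref{teo:separatorcongest}, each phase must, for all active pieces in parallel: run the BFS from $V(C)$ and trace back the paths $\pi$; propagate along-$P$ distances and identify, for each component $K_j$, its vertex $v_j$ and its new root; and broadcast the updated partition. Each of these is a shortest-path or aggregation task over a fixed partition of $V$ into connected parts, and is executed in $\biglo{D}$ rounds using the same low-congestion-shortcut infrastructure of $G$ that already underlies \Cref{teo:separatorcongest} (and which, by \cite{haeupler2018round}, is available deterministically). With $\bigo{\log n}$ phases the total is $\biglo{D}$. I expect the main obstacle to be exactly this last step: performing the auxiliary path and aggregation computations for \emph{all} pieces at once within $\biglo{D}$ rounds — in particular propagating information along a committed path $P$ whose length may far exceed $D$, which requires shortcut-assisted (pointer-jumping-style) propagation rather than naive relaying — together with the careful treatment of the path-type separator and the bookkeeping needed to keep \Cref{teo:separatorcongest} applicable (active pieces must remain connected and partition $V$, with already-committed vertices handled as singleton parts).
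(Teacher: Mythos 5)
Your high-level framework — recursive phases, a cycle separator per active piece via \Cref{teo:separatorcongest}, committing a path containing the separator, then recursing on the smaller components — matches the paper's main algorithm, and your correctness argument for the DFS property (every non-tree edge joins an ancestor to a descendant, by the choice of the deepest attachment vertex $v_j$ and non-adjacency of pieces) is sound for the part you write out. The $\bigo{\log n}$ phase count also lines up.

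The genuine gap is exactly the step you dismiss as ``a bit more care … we defer this routine case analysis.'' The separator produced by \Cref{teo:separatorcongest} is almost always of the \emph{path} type: it is the node set of a $T$-path whose endpoints are joined only by a \emph{virtual} (non-$G$) fundamental edge. There is then no edge in $G$ to ``close the cycle,'' so your plan of taking a BFS path $\pi$ from $\rho$ to the nearest vertex $c\in C$ and then ``walking around $C$'' does not yield a simple path whose vertex set contains all of $V(C)$: if $\pi$ hits an interior vertex $v_j$ of the path $v_0,\dots,v_\ell$, you can commit $v_j,\dots,v_\ell$ (say), but $v_0,\dots,v_{j-1}$ are left behind, and removing only $\pi\cup\{v_j,\dots,v_\ell\}$ need not give a $2/3$-balanced cut of $H$. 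You cannot fix this by aiming the BFS at an endpoint of $C$ either, because $\rho$ may be unable to reach $v_0$ or $v_\ell$ without crossing the rest of $C$; nor can you commit both branches $v_j\!\to\!v_0$ and $v_j\!\to\!v_\ell$ in one application of the DFS rule, since each application adds a single root-to-vertex path, not a tripod. This is precisely the obstacle the paper's \Cref{teo:joinkpaths} (\joinproblem) is built to solve: it does not try to swallow $S_i$ in one path, but re-roots the spanning tree at the deepest-attached vertex, commits the path to the farther endpoint of $S_i$ (thereby absorbing at least half of $S_i$), then recurses on the components still meeting the leftover of $S_i$, building a fresh MST with $0$-weight inside the leftover so it stays a path in the new tree. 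That takes $\bigo{\log n}$ rounds of $\biglo{D}$ each — an $\biglo{D}$-round ``join'' subroutine — and it is uniform over both the genuine-cycle and path-type cases, so the one-shot cycle walk you propose is not actually used at all. To make your proof go through you would need to either prove that one-shot absorption is possible in the path-type case (it isn't, in general), or replace that step by something equivalent to the halving recursion of \Cref{teo:joinkpaths}.

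One smaller remark on implementation: you correctly anticipate that committing a path $P$ of length $\gg D$ requires shortcut-assisted propagation; in the paper this is the \markpathproblem\ subroutine (\Cref{teo:markpathcongest}), which does a pointer-jumping-style marking in $\biglo{D}$ rounds, together with \rerootproblem\ for setting parent pointers and depths. So your instinct there is right and matches the paper; the missing piece is specifically the handling of path-type separators.
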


Distributively computing a rooted spanning tree (and in particular a DFS tree) $T$ means that at the end of the algorithm each node $u$ outputs the identifier of the root, the identifier of its parent in $T$, as well as the distance $d_T(u)$ to the root. The algorithm given in \Cref{teo:dfscongest} is also inspired by a recursive approach used in parallel algorithms based on path separators \cite{dfs1987aggarwal}. In our case, the separator sets computed by the algorithm in \Cref{teo:separatorcongest} are the separators used throughout the algorithm.

Although we follow the same high-level approach as the randomized method of \cite{ghaffari:2017} for constructing a DFS tree, replacing their randomized cycle separators with our deterministic ones is not sufficient to obtain a deterministic DFS algorithm. Computing a DFS tree deterministically poses additional challenges beyond the deterministic computation of a separator set. For example, both \cite{ghaffari:2017} and our algorithm require marking a path of up to $\Theta(n)$ nodes.\footnote{In a distributed model, each node on the path must know that it belongs to the path.} While this task can be solved in $\biglo{D}$ rounds using randomized techniques, it is not clear how to achieve the same deterministically in $\biglo{D}$ rounds, rather than via the trivial $\bigo{n}$-round solution. Therefore, beyond using our deterministic cycle-separator algorithm, we develop new ideas to obtain a fully deterministic version of the scheme of \cite{ghaffari:2017} for constructing a DFS tree. See \Cref{subsec:subroutinescongest} for a full description of the subproblems that must be solved deterministically.

More precisely, we show that the problems solved deterministically in \Cref{subsec:subroutinescongest} allow us to add any cycle separator to a partial DFS tree, yielding a tree that remains a partial DFS tree. Furthermore, we show that there exists a \CONGEST\ algorithm that, given a partial solution $T$ (i.e. a partial DFS tree $T$ that includes some nodes of $G$) and a cycle separator $S$ of $G\setminus T$, computes a partial DFS tree  $T'$  that incorporates the nodes of both $T$ and $S$ in $\biglo{D}$ rounds.

In contrast to DFS, for which a deterministic algorithm for computing cycle separators is not sufficient to obtain a deterministic algorithm for computing a DFS tree, there are several problems in the literature, specially for {\it directed} planar graphs, such as weighted SSSP \cite{parter}, maximum flow\cite{abd2025distributed} and reachability  \cite{parter2020}, for which randomness is used exclusively in the cycle-separator subroutine, while the remainder of the algorithm relies solely on deterministic broadcasts and local computations. Therefore, our deterministic cycle separator algorithm also implies the derandomization of the following algorithms presented in Table 1.

\begin{table}[h!]
    \centering
    \begin{tabular}{|c|c|c|}
    \hline
        Problem & Round Complexity & Reference \\\hline
         SSSP& $\biglo{D^2}$ & \cite{parter} \\\hline 
         Single-Source Reachability & $\biglo{D}$ & \cite{parter2020} \\\hline
         Maximum $st$-Flow & $\biglo{D^2}$ & \cite{abd2025distributed}\\\hline
         Strongly Connected Component Detection & $\biglo{D}$ & \cite{parter2020}\\\hline
         
    \end{tabular}
    \caption{Consequences of \Cref{teo:separatorcongest}: for each problem listed above, the randomized algorithm in the cited reference can be derandomized by substituting its cycle-separator subroutine with our deterministic one, yielding the same deterministic round complexity.}
    \label{tab:detalgo}
\end{table}

For a formal definition of each problem and a brief description of the algorithms, we refer the reader to Section \ref{sec:otherapp}. The above list is not exhaustive, and the cited papers contain other results that may also be derandomized by replacing the randomized cycle-separator subroutine with our deterministic one. We refer the reader to those works for additional applications.

\subsection{Structure of the article.}

The article is structured as follows. In \Cref{sec:pre} we present some basic definitions used in the remaining of the paper.  In \Cref{sec:highlevelideas} we present a high-level description of our algorithms and techniques. In \Cref{sec:technicaldetails}, we give the formal proofs of the fundamental results on planar graph theory needed to conclude the correctness of our algorithms. This is followed by \Cref{sec:congestimplementation}, where we describe in detail the implementation of our \CONGEST\ algorithm for computing cycle separators and in \Cref{sec:dfscongest} we describe the implementation of our deterministic DFS tree computation. We describe other direct derandomizations obtained from our deterministic cycle-separator algorithm in \Cref{sec:otherapp}. Finally, in \Cref{sec:conclusion} we discuss extensions and future research directions.

\section{Basic definitions.}\label{sec:pre}

Given a graph $G=(V,E)$ and a node $v\in V$, we define $N_G(v)$ as the set of neighbors of $v$ in $G$, and we may omit the sub index if $G$ is clear by context. Given a subgraph $H$ of $G$, $N(H)$ denotes the set of nodes in $V(G)\setminus V(H)$ that are adjacent to at least one node of $H$, i.e., $N(H) = \cup_{u\in V(H)}N(u)\setminus V(H)$.  Given $G$ and a node $v\in V$, we may abuse notation and denote $v\in H$ instead of $v\in V(H)$.

Given a spanning tree $T$ of a graph $G$ and two nodes $u,v \in V$, the \emph{\(T\)-path} between \(u\) and \(v\) is the (unique) path in \(T\) connecting the nodes. Given a rooted tree $T$ and a node \(v\) of $T$, we denote by $T_v$ the subtree of $T$ rooted in $v$. We also denote by $n_T(v)$ the number of nodes in $T_v$. We say that a node \(u\) is \emph{ancestor} of a node \(v\) in a tree \(T\) if \(v\in T_u\). We say that a node \(v\) is \emph{descendant} of a node \(u\) in a tree \(T\) if \(u\) is ancestor of \(v\).  The \emph{lowest common ancestor} of nodes $u$ and $v$, denoted as LCA($u,v$), or simply LCA if $u$ and $v$ are clear by context, is defined as the deepest node $\omega$ in $T$ that is ancestor of both $u$ and $v$.

Given a planar graph $G$, \emph{a planar combinatorial embedding} $\mathcal{E}$ of $G$ is a function that assigns to each vertex $v \in V$ an ordering $t_v$ of its neighbors $N(v)$ such that, when the ordered neighbors are traversed clockwise in the plane, a geometric planar combinatorial embedding of $G$ is obtained. To avoid heavy notation, we define a {\it planar configuration} $(G,\mathcal{E},T)$ as a triplet where $G=(V,E)$ is a planar graph, $\mathcal{E}$ is a combinatorial planar embedding of $G$ and $T$ a spanning tree of $G$.

Consider a graph \(G = (V, E)\) with a planar combinatorial embedding \(\mathcal{E}\) and a spanning tree \(T\). The edges \(e \in \binom{V}{2} \setminus E(T)\) are called \emph{fundamental edges of} \(T\), or \(T\)-\emph{fundamental edges}. Note that a \(T\)-fundamental edge may or may not be an edge of \(G\). Specifically, a fundamental edge of \(T\) is said to be \emph{real} if \(e \in E(G)\) and \emph{virtual} if \(e \notin E(G)\). The set of fundamental edges of \(T\) has all the real and virtual fundamental edges of \(T\). When the context clearly indicates the spanning tree \(T\), this dependence may be omitted. 

Given a real fundamental edge \(e=uv\) of \(T\), a \emph{real fundamental face} \(F_e\) of \(T\) is defined as the subgraph of \(G\) induced by the set of nodes on the \(T\)-path between \(u\) and \(v\), plus the nodes within this path according to \(\mathcal{E}\). For a graphical example, see \Cref{fig:fundamentalcycles}. A node is considered \emph{inside} \(F_e\) if it lies on the face \(F_e\) and is not part of the \(T\)-path between \(u\) and \(v\) (see Figure \ref{fig:fundamentalcycles}); otherwise, if the node is not in \(F_e\), it is said to be \emph{outside} \(F_e\). The set of nodes inside \(F_e\) is denoted as \(\inside{F_e}\). The nodes on the path from \(u\) to \(v\) in \(T\) constitute the border of the fundamental face \(F_e\) and are denoted as \(C_e\). To avoid ambiguity, the inside zone of a face will be always defined as the bounded zone of the plane defined by the Jordan curve of the border of the face. We assume that given a spanning tree $T$ rooted in $r$, the root will not be inside any face. See \Cref{sec:technicaldetails} for details.

\begin{figure}[h!] 
\centering

    \scalebox{0.6}{\input{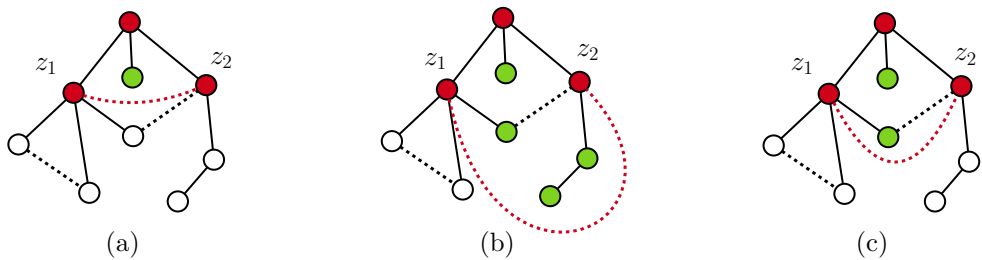}}
 
	\caption{Given embedding $\mathcal{E}$, black solid edges are $T$-edges, dotted black edges are real $T$-fundamental edges. Nodes $z_1$ and $z_2$ are $\mathcal{E}$-compatible, but there are multiple ways to insert the edge $e=\{z_1,z_2\}$.
 The subgraph induced by red and green nodes corresponds to $F_e$. Red nodes are the border $C_e$, green nodes are the node inside $F_e$ (i.e., in $\inside{F}_e$) and white nodes are the nodes outside $F_e$. The dotted red edges represent three ways to insert the virtual fundamental edge in $t_{z_1}$ and $t_{z_1}$, each insertion generates different faces. The nodes in each face are marked with red. }	  \label{fig:fundamentalcycles}
\end{figure}
Given a planar combinatorial embedding \(\mathcal{E}\) and a virtual fundamental edge \(e = \{uv\}\) of \(T\), the edge \(e\) is defined as \(\mathcal{E}\)-\emph{compatible} if a planar combinatorial embedding \(\mathcal{E}' = \{t'_w\}_{w \in V}\) of \(G' = (V, E \cup \{uv\})\) can be defined such that: \(t'_w = t_w\) for every \(w \notin \{u, v\}\); in \(t'_u\), the nodes \(w \in N_G(u)\) retain the same relative order as in \(t_u\); and the same condition holds for \(t'_v\) with respect to \(t_v\). Analogous to the definition given for real fundamental edges, the set \(\mathcal{F}_e\) is defined as the set of subgraphs of \(G\) induced by the nodes in the \(T\)-path between two nodes \(u, v\) of a compatible virtual fundamental edge \(e = uv\), plus the nodes inside the face defined by the virtual edge \(e\) inserted in \(\mathcal{E}\), satisfying the definition of \(\mathcal{E}\)-compatibility. Each subgraph in \(\mathcal{F}_e\) is called a \emph{virtual fundamental face} of \(T\). In other words, \(\mathcal{F}_e\) is the set of all possible virtual fundamental faces that can be generated by an \(\mathcal{E}\)-compatible virtual fundamental edge, depending on how \(e\) is validly inserted into \(\mathcal{E}\). 

\section{High-level description of our main results.}
\label{sec:highlevelideas}

We describe the main ideas and techniques developed to compute cycle separators and DFS trees. 

\subsection{Computation of cycle separators.}\label{separatorhighlevel}

The \separatorproblem\ is defined as follows: given a planar graph \( G = (V, E) \) and a partition \( \mathcal{P} = \{ P_1, \dots, P_k \} \) of \( V \), where each $G[P_i]$ is connected, the goal is to compute a family of sets \( S_1, \dots, S_k \) such that: (1) \( S_i \) is a cycle separator of \( G[P_i] \), and (2) each node in \( P_i \) knows whether it belongs to \( S_i \). As outlined in \Cref{subsec:shortcuts}, a planar embedding of \( G \) can be obtained in \( \biglo{D} \) rounds in the \CONGEST\ model, so we assume that a planar embedding \( \mathcal{E} \) of \( G \) is available. Additionally, each node is aware of its partition  within \( \mathcal{P} \).

The techniques developed to solve the \separatorproblem\ constitute key technical contributions of our work. Detailed explanations of the solution can be found in \Cref{sec:technicaldetails} and \Cref{sec:congestimplementation}. In this section, we discuss the primary technical challenges we encountered and explain how we addressed them to obtain our deterministic algorithm.

As we explained in the introduction, our approach follows the framework of Tarjan and Lipton \cite{lipton:1979}. We consider separator sets defined by a path in a spanning tree \( T \), along with a fundamental edge \( e \) (which may be real or virtual). The Tarjan and Lipton approach involves two tasks that are challenging to implement in the \CONGEST\ model. 

The first task involves counting the number of nodes within the fundamental face \( F_e \) defined by a non-tree edge \( e \). If that count represents a fraction between \( 1/3 \) and \( 2/3 \) of the total number of nodes, then the boundary \( F_e \) defines a separator cycle. Such an edge \( e \) is called a \emph{good edge}. The second task is finding a good edge. Unfortunately, it may happen that no non-tree edge is a good edge. However, by computing triangulations of certain fundamental faces, it is always possible to find a (potentially virtual non-$G$) good edge.

While both tasks are relatively straightforward in a centralized setting, they pose significant challenges in a distributed model. For example, if we wish to simulate communication between two nodes that are not directly connected (as might be required for a triangulation), we risk having to route messages through numerous intermediary nodes. If done in parallel, this creates the potential for severe congestion.

To address both problems, we redefine the two tasks mentioned above to make them more compatible with a distributed algorithm. First, we relax the counting of nodes in a face, and instead consider the \emph{weight of a face}, which essentially still measures the number of nodes inside the face, but allows for a margin of error, as we allow the count to include some of the nodes on the border of the face. Second, to locate the \emph{virtual good edges}, we compute an \emph{augmentation} of a face instead of a \emph{triangulation}. Roughly, this means that we simulate some of the {\it virtual} edges of the face, but not all of those required to create a triangulation. More precisely, if \( e = uv \) is a non-tree edge, in our augmentation we add all edges such that (1) one of their endpoints is \( u \) (it does not matter if we choose \( v \) instead), (2) the other endpoint is a node \( z \) in \( F_e \) reachable from \( u \) (i.e., the edge \( f = uz \) is compatible with the embedding), and (3) all descendants of \( z \) in \( T \) remain within the virtual face \( F_f \).

In the following sections we show not only that our notions of face weight and face augmentations are sufficient to identify a good edge and define a separator cycle, but also that their computation can be achieved in \( \tilde{\mathcal{O}}(D) \) rounds in the \CONGEST\ model.

\subsubsection{DFS orders of a spanning tree.} 

We first define a formula that allows us to deterministically compute, or at least estimate, the number of nodes within each cycle of \( G \). To achieve this, the first key ingredient we require is the notion of a depth-first search (DFS) order of an arbitrary spanning tree.

Consider a tree $T =(V,E)$ rooted in a node $r\in V$, and an embedding  $\mathcal{E}$ of $T$.  The \dfsorder\ of $T$ is a DFS enumeration of the nodes of $V$ that visits the neighbors of a node respecting the order given by $\mathcal{E}$. In fact, we distinguish two types of orderings. In the \leftorder\ (respectively the \rightorder), each node $v\in V$ will select first the unexplored children with the greater (resp. smaller) position in $t_v$\footnote{Recall that $t_v$ is a clockwise ordering of the neighbors of $v$ and we assume that $t_v(e)=1$ for the edge $e$ between $v$ and its parent in $T$}. Hence, the \rightorder\ picks the neighbors in the clockwise ordering, while the \leftorder\ in the counterclockwise order. See \Cref{fig:ordering} for an example of the \leftorder\ and the \rightorder.

 \begin{figure}[h]
     \centering
     \begin{subfigure}{0.4\textwidth}
        \centering
        \tikzset{every picture/.style={line width=0.75pt}} 

\begin{tikzpicture}[x=0.75pt,y=0.75pt,yscale=-1,xscale=1,scale=0.5]

\draw  [color={rgb, 255:red, 0; green, 0; blue, 0 }  ,draw opacity=1 ][line width=1.5]   (295.5, 82.83) circle [x radius= 10.12, y radius= 10.12]   ;
\draw (289,80.73) node [anchor=north west][inner sep=0.75pt]  [font=\tiny]  {$$};
\draw  [color={rgb, 255:red, 0; green, 0; blue, 0 }  ,draw opacity=1 ][line width=1.5]   (371.83, 146.17) circle [x radius= 10.12, y radius= 10.12]   ;
\draw (365.33,144.07) node [anchor=north west][inner sep=0.75pt]  [font=\tiny]  {$$};
\draw  [color={rgb, 255:red, 0; green, 0; blue, 0 }  ,draw opacity=1 ][line width=1.5]   (239.5, 153.83) circle [x radius= 10.12, y radius= 10.12]   ;
\draw (233,151.73) node [anchor=north west][inner sep=0.75pt]  [font=\tiny]  {$$};
\draw  [color={rgb, 255:red, 0; green, 0; blue, 0 }  ,draw opacity=1 ][line width=1.5]   (184.5, 205.83) circle [x radius= 10.12, y radius= 10.12]   ;
\draw (178,203.73) node [anchor=north west][inner sep=0.75pt]  [font=\tiny]  {$$};
\draw  [color={rgb, 255:red, 0; green, 0; blue, 0 }  ,draw opacity=1 ][line width=1.5]   (255.17, 254.5) circle [x radius= 10.12, y radius= 10.12]   ;
\draw (248.67,252.4) node [anchor=north west][inner sep=0.75pt]  [font=\tiny]  {$$};
\draw  [color={rgb, 255:red, 0; green, 0; blue, 0 }  ,draw opacity=1 ][line width=1.5]   (298.83, 197.5) circle [x radius= 10.12, y radius= 10.12]   ;
\draw (292.33,195.4) node [anchor=north west][inner sep=0.75pt]  [font=\tiny]  {$$};
\draw  [color={rgb, 255:red, 0; green, 0; blue, 0 }  ,draw opacity=1 ][line width=1.5]   (298.17, 138.5) circle [x radius= 10.12, y radius= 10.12]   ;
\draw (291.67,136.4) node [anchor=north west][inner sep=0.75pt]  [font=\tiny]  {$$};
\draw  [color={rgb, 255:red, 0; green, 0; blue, 0 }  ,draw opacity=1 ][line width=1.5]   (379.5, 236.17) circle [x radius= 10.12, y radius= 10.12]   ;
\draw (373,234.07) node [anchor=north west][inner sep=0.75pt]  [font=\tiny]  {$$};
\draw  [color={rgb, 255:red, 0; green, 0; blue, 0 }  ,draw opacity=1 ][line width=1.5]   (383.17, 300.17) circle [x radius= 10.12, y radius= 10.12]   ;
\draw (376.67,298.07) node [anchor=north west][inner sep=0.75pt]  [font=\tiny]  {$ $};
\draw (260,65) node [anchor=north west][inner sep=0.75pt]  [font=\Large,color={rgb, 255:red, 0; green, 0; blue, 0 }  ,opacity=1 ]  {$1$};
\draw (200 ,135) node [anchor=north west][inner sep=0.75pt]  [font=\Large,color={rgb, 255:red, 0; green, 0; blue, 0 }  ,opacity=1 ]  {$2$};
\draw (150,192.07) node [anchor=north west][inner sep=0.75pt]  [font=\Large,color={rgb, 255:red, 0; green, 0; blue, 0 }  ,opacity=1 ]  {$3$};
\draw (215,245) node [anchor=north west][inner sep=0.75pt]  [font=\Large,color={rgb, 255:red, 0; green, 0; blue, 0 }  ,opacity=1 ]  {$4$};
\draw (265,189.4) node [anchor=north west][inner sep=0.75pt]  [font=\Large,color={rgb, 255:red, 0; green, 0; blue, 0 }  ,opacity=1 ]  {$5$};
\draw (260,125.07) node [anchor=north west][inner sep=0.75pt]  [font=\Large,color={rgb, 255:red, 0; green, 0; blue, 0 }  ,opacity=1 ]  {$6$};
\draw (330,130) node [anchor=north west][inner sep=0.75pt]  [font=\Large,color={rgb, 255:red, 0; green, 0; blue, 0 }  ,opacity=1 ]  {$7$};
\draw (350,190) node [anchor=north west][inner sep=0.75pt]  [font=\Large,color={rgb, 255:red, 0; green, 0; blue, 0 }  ,opacity=1 ]  {$8$};
\draw (345,285) node [anchor=north west][inner sep=0.75pt]  [font=\Large,color={rgb, 255:red, 0; green, 0; blue, 0 }  ,opacity=1 ]  {$9$};
\draw [color={rgb, 255:red, 0; green, 0; blue, 0 }  ,draw opacity=1 ][line width=1.5]    (289.23,90.78) -- (245.77,145.88) ;
\draw [color={rgb, 255:red, 0; green, 0; blue, 0 }  ,draw opacity=1 ][line width=1.5]    (232.14,160.79) -- (191.86,198.88) ;
\draw [color={rgb, 255:red, 0; green, 0; blue, 0 }  ,draw opacity=1 ][line width=1.5]    (247.65,159.83) -- (290.68,191.5) ;
\draw [color={rgb, 255:red, 0; green, 0; blue, 0 }  ,draw opacity=1 ][line width=1.5]    (241.06,163.84) -- (253.61,244.49) ;
\draw [color={rgb, 255:red, 0; green, 0; blue, 0 }  ,draw opacity=1 ][line width=1.5]    (295.98,92.95) -- (297.68,128.39) ;
\draw [color={rgb, 255:red, 0; green, 0; blue, 0 }  ,draw opacity=1 ][line width=1.5]    (303.29,89.3) -- (364.04,139.7) ;
\draw [color={rgb, 255:red, 0; green, 0; blue, 0 }  ,draw opacity=1 ][line width=1.5]  [dash pattern={on 1.69pt off 2.76pt}]  (307.12,191.68) -- (363.55,151.99) ;
\draw [color={rgb, 255:red, 0; green, 0; blue, 0 }  ,draw opacity=1 ][line width=1.5]    (372.69,156.25) -- (378.64,226.08) ;
\draw [color={rgb, 255:red, 0; green, 0; blue, 0 }  ,draw opacity=1 ][line width=1.5]    (380.08,246.27) -- (382.59,290.06) ;
\draw [color={rgb, 255:red, 0; green, 0; blue, 0 }  ,draw opacity=1 ][line width=1.5]  [dash pattern={on 1.69pt off 2.76pt}]  (265.18,253.02) -- (369.48,237.64) ;
\draw [color={rgb, 255:red, 0; green, 0; blue, 0 }  ,draw opacity=1 ][line width=1.5]  [dash pattern={on 1.69pt off 2.76pt}]  (298.72,187.38) -- (298.28,148.62) ;
\draw [color={rgb, 255:red, 0; green, 0; blue, 0 }  ,draw opacity=1 ][line width=1.5]  [dash pattern={on 1.69pt off 2.76pt}]  (307.97,201.88) -- (370.37,231.79) ;
\draw [color={rgb, 255:red, 0; green, 0; blue, 0 }  ,draw opacity=1 ][line width=1.5]  [dash pattern={on 1.69pt off 2.76pt}]  (192.84,211.58) -- (246.83,248.76) ;

\end{tikzpicture}
         \subcaption{}
     \end{subfigure}
     \begin{subfigure}{0.4\textwidth}
        \centering
        \tikzset{every picture/.style={line width=0.75pt}} 

\begin{tikzpicture}[x=0.75pt,y=0.75pt,yscale=-1,xscale=1,scale=0.5]

\draw  [color={rgb, 255:red, 0; green, 0; blue, 0 }  ,draw opacity=1 ][line width=1.5]   (295.5, 82.83) circle [x radius= 10.12, y radius= 10.12]   ;
\draw (289,80.73) node [anchor=north west][inner sep=0.75pt]  [font=\tiny]  {$$};
\draw  [color={rgb, 255:red, 0; green, 0; blue, 0 }  ,draw opacity=1 ][line width=1.5]   (371.83, 146.17) circle [x radius= 10.12, y radius= 10.12]   ;
\draw (365.33,144.07) node [anchor=north west][inner sep=0.75pt]  [font=\tiny]  {$$};
\draw  [color={rgb, 255:red, 0; green, 0; blue, 0 }  ,draw opacity=1 ][line width=1.5]   (239.5, 153.83) circle [x radius= 10.12, y radius= 10.12]   ;
\draw (233,151.73) node [anchor=north west][inner sep=0.75pt]  [font=\tiny]  {$$};
\draw  [color={rgb, 255:red, 0; green, 0; blue, 0 }  ,draw opacity=1 ][line width=1.5]   (184.5, 205.83) circle [x radius= 10.12, y radius= 10.12]   ;
\draw (178,203.73) node [anchor=north west][inner sep=0.75pt]  [font=\tiny]  {$$};
\draw  [color={rgb, 255:red, 0; green, 0; blue, 0 }  ,draw opacity=1 ][line width=1.5]   (255.17, 254.5) circle [x radius= 10.12, y radius= 10.12]   ;
\draw (248.67,252.4) node [anchor=north west][inner sep=0.75pt]  [font=\tiny]  {$$};
\draw  [color={rgb, 255:red, 0; green, 0; blue, 0 }  ,draw opacity=1 ][line width=1.5]   (298.83, 197.5) circle [x radius= 10.12, y radius= 10.12]   ;
\draw (292.33,195.4) node [anchor=north west][inner sep=0.75pt]  [font=\tiny]  {$$};
\draw  [color={rgb, 255:red, 0; green, 0; blue, 0 }  ,draw opacity=1 ][line width=1.5]   (298.17, 138.5) circle [x radius= 10.12, y radius= 10.12]   ;
\draw (291.67,136.4) node [anchor=north west][inner sep=0.75pt]  [font=\tiny]  {$$};
\draw  [color={rgb, 255:red, 0; green, 0; blue, 0 }  ,draw opacity=1 ][line width=1.5]   (379.5, 236.17) circle [x radius= 10.12, y radius= 10.12]   ;
\draw (373,234.07) node [anchor=north west][inner sep=0.75pt]  [font=\tiny]  {$$};
\draw  [color={rgb, 255:red, 0; green, 0; blue, 0 }  ,draw opacity=1 ][line width=1.5]   (383.17, 300.17) circle [x radius= 10.12, y radius= 10.12]   ;
\draw (376.67,298.07) node [anchor=north west][inner sep=0.75pt]  [font=\tiny]  {$$};
\draw (316,75.4) node [anchor=north west][inner sep=0.75pt]  [font=\Large,color={rgb, 255:red, 0; green, 0; blue, 0 }  ,opacity=1 ]  {$1$};
\draw (390,135.4) node [anchor=north west][inner sep=0.75pt]  [font=\Large,color={rgb, 255:red, 0; green, 0; blue, 0 }  ,opacity=1 ]  {$2$};
\draw (397,228.4) node [anchor=north west][inner sep=0.75pt]  [font=\Large,color={rgb, 255:red, 0; green, 0; blue, 0 }  ,opacity=1 ]  {$3$};
\draw (400,289.4) node [anchor=north west][inner sep=0.75pt]  [font=\Large,color={rgb, 255:red, 0; green, 0; blue, 0 }  ,opacity=1 ]  {$4$};
\draw (314,126.4) node [anchor=north west][inner sep=0.75pt]  [font=\Large,color={rgb, 255:red, 0; green, 0; blue, 0 }  ,opacity=1 ]  {$5$};
\draw (257,139.4) node [anchor=north west][inner sep=0.75pt]  [font=\Large,color={rgb, 255:red, 0; green, 0; blue, 0 }  ,opacity=1 ]  {$6$};
\draw (319,183.4) node [anchor=north west][inner sep=0.75pt]  [font=\Large,color={rgb, 255:red, 0; green, 0; blue, 0 }  ,opacity=1 ]  {$7$};
\draw (269,257.4) node [anchor=north west][inner sep=0.75pt]  [font=\Large,color={rgb, 255:red, 0; green, 0; blue, 0 }  ,opacity=1 ]  {$8$};
\draw (202,193.4) node [anchor=north west][inner sep=0.75pt]  [font=\Large,color={rgb, 255:red, 0; green, 0; blue, 0 }  ,opacity=1 ]  {$9$};
\draw [color={rgb, 255:red, 0; green, 0; blue, 0 }  ,draw opacity=1 ][line width=1.5]    (289.23,90.78) -- (245.77,145.88) ;
\draw [color={rgb, 255:red, 0; green, 0; blue, 0 }  ,draw opacity=1 ][line width=1.5]    (232.14,160.79) -- (191.86,198.88) ;
\draw [color={rgb, 255:red, 0; green, 0; blue, 0 }  ,draw opacity=1 ][line width=1.5]    (247.65,159.83) -- (290.68,191.5) ;
\draw [color={rgb, 255:red, 0; green, 0; blue, 0 }  ,draw opacity=1 ][line width=1.5]    (241.06,163.84) -- (253.61,244.49) ;
\draw [color={rgb, 255:red, 0; green, 0; blue, 0 }  ,draw opacity=1 ][line width=1.5]    (295.98,92.95) -- (297.68,128.39) ;
\draw [color={rgb, 255:red, 0; green, 0; blue, 0 }  ,draw opacity=1 ][line width=1.5]    (303.29,89.3) -- (364.04,139.7) ;
\draw [color={rgb, 255:red, 0; green, 0; blue, 0 }  ,draw opacity=1 ][line width=1.5]  [dash pattern={on 1.69pt off 2.76pt}]  (307.12,191.68) -- (363.55,151.99) ;
\draw [color={rgb, 255:red, 0; green, 0; blue, 0 }  ,draw opacity=1 ][line width=1.5]    (372.69,156.25) -- (378.64,226.08) ;
\draw [color={rgb, 255:red, 0; green, 0; blue, 0 }  ,draw opacity=1 ][line width=1.5]    (380.08,246.27) -- (382.59,290.06) ;
\draw [color={rgb, 255:red, 0; green, 0; blue, 0 }  ,draw opacity=1 ][line width=1.5]  [dash pattern={on 1.69pt off 2.76pt}]  (265.18,253.02) -- (369.48,237.64) ;
\draw [color={rgb, 255:red, 0; green, 0; blue, 0 }  ,draw opacity=1 ][line width=1.5]  [dash pattern={on 1.69pt off 2.76pt}]  (298.72,187.38) -- (298.28,148.62) ;
\draw [color={rgb, 255:red, 0; green, 0; blue, 0 }  ,draw opacity=1 ][line width=1.5]  [dash pattern={on 1.69pt off 2.76pt}]  (307.97,201.88) -- (370.37,231.79) ;
\draw [color={rgb, 255:red, 0; green, 0; blue, 0 }  ,draw opacity=1 ][line width=1.5]  [dash pattern={on 1.69pt off 2.76pt}]  (192.84,211.58) -- (246.83,248.76) ;

\end{tikzpicture}
         \subcaption{}
     \end{subfigure}
     \caption{In (a) a \leftorder\ of $T$ and in (b) a \rightorder\ of $T$. The solid edges correspond to the edges of \(T\), while the non-tree edges 
     are dotted. The clockwise order \(t_v\) of the embedding is consistent with this figure.}
     \label{fig:ordering}
 \end{figure}
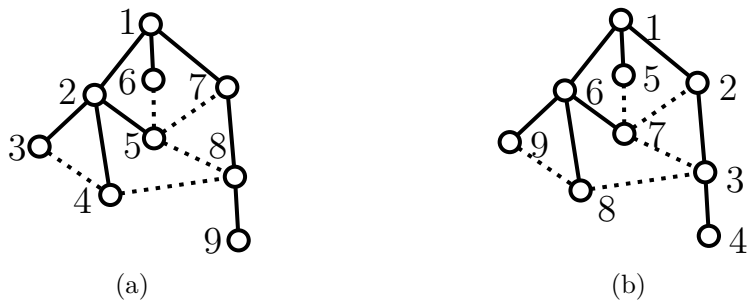

In the following, $\leftpi^T:V\to [n]$ and $\rightpi^T:V\to [n]$ are denoted as the \leftorder{} and \rightorder{} of a tree $T$ rooted at $r$ with respect to the embedding $\mathcal{E}$. If the context clearly indicates $T$, its dependence may be omitted. The \dfsorderproblem\ is then defined as the problem of computing the \leftorder\ and \rightorder.

This problem has been addressed in prior works (e.g., \cite{ghaffari2016distributed-2}); however, to the best of our knowledge, the spanning tree $T$ has always been a BFS tree, where its maximum depth is $D$ and therefore, it can be solved in $\bigo{D}$ rounds. In this work, an arbitrary spanning tree is considered, which can have a maximum depth of up to $\Theta(n)$. In \Cref{sec:congestimplementation} it is shown that these problems can also be solved in $\biglo{D}$ rounds in the \CONGEST\ model.


\subsubsection{Weights of fundamental faces.}

 Let $e = uv$ be a real fundamental edge of $T$, and $F_e$ its unique real fundamental face. A \emph{weight} is assigned to $F_e$, which represents the capacity of the \(T\)-path between \(u\) and \(v\) to serve as a separator set. 
 
 In \cite{ghaffari:2017}, the weight of a real fundamental face was defined as the number of nodes inside it, plus the number of nodes in its border. In their work, they can only compute an $(1+\epsilon)$-approximation of the weight of a face, by using a (random) {\it sketching} idea \cite{ghaffari2013distributed, ghaffari2016distributed-2}, which consists, in a high level ideal,  in $\biglo{\log_{1+\epsilon} n}$ phases of $\biglo{\log n}$ iterations each one, and in each iteration every node inside a face marks itself with certain probability. In every iteration, a selected leader of each face can learn the number of nodes marked inside the face, and it is proven that after these iterations, a $(1+\epsilon)$-approximation of the weight is computed. 

Since the $(1+\varepsilon)$-approximation technique for the weight of a face used in \cite{ghaffari:2017} is inherently randomized, we adopt a different notion of face \emph{weight}. Roughly speaking, our definition counts the number of nodes strictly inside the face, and includes only a subset of the nodes on its boundary (rather than necessarily counting all boundary nodes, as in \cite{ghaffari:2017}).

Using this new notion of weight for a face $F_e$ defined by a fundamental edge $e=uv$, we define $\omega(F_e)$ through an explicit formula that allows the endpoints $u$ and $v$ to compute $\omega(F_e)$ deterministically in at most $\biglo{D}$ rounds. The exact deterministic formula for $\omega(F_e)$ will depend on whether one of the two nodes of $e$ is an ancestor of the other, or if neither node is an ancestor of the other. To describe the formula in the latter case (when neither node is an ancestor of the other), we need to define the notion of $\mathcal{E}$-left and $\mathcal{E}$-right. Roughly speaking, a real fundamental edge $e=uv$ such that $u$ is an ancestor of $v$ is said to be a $\mathcal{E}$-left oriented edge if the edge $e$ is drawn 'to the left' in the embedding $\mathcal{E}$, and $e$ is said to be a $\mathcal{E}$-right oriented edge if the edge $e$ is drawn 'to the right' in the embedding $\mathcal{E}$.

\begin{definition}
    Given a planar configuration $(G,\mathcal{E},T)$, consider a $T$-real fundamental edge $e = uv$ such that $\leftpi(u) < \leftpi(v)$, where $u$ is an ancestor of $v$ in $T$. Let $z \in V$ be the first node in the $T$-path from $u$ to $v$. If $t_u(v) < t_u(z)$ then the real fundamental edge $e$ is said to be a $\mathcal{E}$-left oriented edge. Conversely, if $t_u(v) > t_u(z)$ $e$ is said to be a $\mathcal{E}$-right oriented edge. This definition extends to virtual fundamental edges $f=uv$ inserted in a specific order in 
    $t_u$ and $t_v$ of $\mathcal{E}$.
\end{definition}

For example, in \Cref{fig:weightancestors}, the red and dotted edge $e=uv$ is a $\mathcal{E}$-left oriented edge for that planar instance $(G,\mathcal{E},T)$. Now we can describe the explicit formula to compute the weight of a face in each case. See \Cref{fig:weightancestors2} and \Cref{fig:weightancestors} for a graphical description of the weight defined below.

\begin{definition}
\label{def:weights}
    
Given a planar configuration $(G,\mathcal{E},T)$ and a $T$-fundamental face $F_e$, which is either the unique face of a $T$-real fundamental edge or $F_e \in \mathcal{F}_e$ for some $\mathcal{E}$-compatible virtual fundamental edge $e$, such that $e = uv$ and $\leftpi(u) < \leftpi(v)$. The \emph{weight} of $F_e$, denoted by $\omega(F_e)$, is defined as follows. 
 
 \begin{enumerate}[itemsep=0pt, label*=\arabic*.]
     \item If $u$ is not an ancestor of $v$ in $T$, then
     \[\omega(F_e) = p_{F_e}(v)+p_{F_e}(u) + \leftpi(v)- (\leftpi(u) + n_T(u)) +1.\]
     \item If $u$ is an ancestor of $v$, let $z$ be the first node in the $T$-path from $u$ to $v$ (i.e., $uz\in E(T)$). We distinguish two cases:
     \begin{enumerate}[label*=\arabic*.]
        \item If $e$ is $\mathcal{E}$-left oriented, then 
         \[\omega(F_e) = p_{F_e}(v)+p_{F_e}(u) + \left(\leftpi(v)-\leftpi(z)\right) - (d_T(v)-d_T(z)).\]
         \item If $e$ is $\mathcal{E}$-right oriented, then
         \[\omega(F_e) = p_{F_e}(v)+p_{F_e}(u) + \left(\rightpi(v)-\rightpi(z)\right) - (d_T(v)-d_T(z)).\]
     \end{enumerate}

 \end{enumerate}

 where $p_{F_e}(u) = |V(F_e\cap (T_u\setminus \{u\}))|$, $p_{F_e}(v) = |V(F_e\cap (T_v\setminus \{v\}))|$, $d_T(v)$ is the depth of node $v$ in the spanning tree $T$ and $n_T(u)$ is the number of nodes in the subtree of $T$ rooted in $u$ (without considering $u$).

\end{definition}

In Section~\ref{sec:weights}, it is proven exactly which nodes are counted in each case of \Cref{def:weights}. In \Cref{subsec:subroutinescongest}, an algorithm is provided that allows the nodes of each real fundamental edge to compute the weights exactly as defined in \Cref{def:weights} in $\biglo{D}$ rounds in the \CONGEST\ model. Furthermore, it is shown that the same complexities hold if, for each subset of nodes $P_i$ from a partition $\mathcal{P} = \{P_1, \ldots, P_k\}$, the weights of each $T_i$-real fundamental edge are computed in parallel, with $T_i$ being a spanning tree of $G[P_i]$. 

\begin{figure}[h!]
    \centering
    \scalebox{0.7}{\tikzset{every picture/.style={line width=0.75pt}} 

\begin{tikzpicture}[x=0.75pt,y=0.75pt,yscale=-1,xscale=1]

\draw  [draw opacity=0][fill={rgb, 255:red, 184; green, 233; blue, 134 }  ,fill opacity=0.28 ][dash pattern={on 1.69pt off 2.76pt}][line width=1.5]  (306.58,16.1) .. controls (333.31,-8.43) and (486.36,61.63) .. (488.12,109.69) .. controls (489.88,157.74) and (489.88,269.87) .. (480.19,291.79) .. controls (470.5,313.71) and (398.23,285.05) .. (369.15,285.05) .. controls (340.07,285.05) and (265.74,269.12) .. (246.65,251.32) .. controls (227.56,233.53) and (233.75,230.86) .. (232.22,216.59) .. controls (230.7,202.33) and (231.67,182.28) .. (231.67,162.8) .. controls (231.67,143.32) and (279.85,40.64) .. (306.58,16.1) -- cycle ;
\draw  [draw opacity=0][fill={rgb, 255:red, 74; green, 144; blue, 226 }  ,fill opacity=0.5 ][line width=1.5]  (265.16,199.9) .. controls (274.85,209.17) and (271.33,208.33) .. (276.61,219.29) .. controls (281.9,230.25) and (290.72,243.74) .. (276.61,251.32) .. controls (262.51,258.91) and (259.87,246.26) .. (242.24,220.97) .. controls (224.62,195.68) and (255.46,190.62) .. (265.16,199.9) -- cycle ;
\draw  [draw opacity=0][fill={rgb, 255:red, 74; green, 144; blue, 226 }  ,fill opacity=0.5 ][line width=1.5]  (325.08,23.69) .. controls (357.69,38.02) and (422.03,51.51) .. (444.06,76.81) .. controls (466.09,102.1) and (479.31,93.67) .. (478.43,126.55) .. controls (477.55,159.43) and (477.55,283.36) .. (451.99,288.42) .. controls (441.41,299.38) and (445.71,263.65) .. (407.92,220.13) .. controls (372.67,218.44) and (317.55,254.58) .. (302.17,210.86) .. controls (294.48,188.99) and (282.32,96.66) .. (288.95,54.89) .. controls (295.58,13.11) and (321.01,21.9) .. (325.08,23.69) -- cycle ;
\draw  [draw opacity=0][fill={rgb, 255:red, 74; green, 144; blue, 226 }  ,fill opacity=0.5 ][line width=1.5]  (409.69,229.4) .. controls (422.91,236.99) and (414.97,239.52) .. (407.04,248.79) .. controls (399.11,258.07) and (388.54,255.54) .. (390.3,240.36) .. controls (392.06,225.19) and (396.47,221.82) .. (409.69,229.4) -- cycle ;
\draw [line width=1.5]    (143,237) .. controls (149.83,256.55) and (200.38,266.5) .. (239.98,239.18) ;
\draw [shift={(243,237)}, rotate = 143.13] [fill={rgb, 255:red, 0; green, 0; blue, 0 }  ][line width=0.08]  [draw opacity=0] (11.61,-5.58) -- (0,0) -- (11.61,5.58) -- cycle    ;
\draw [line width=1.5]    (290.43,304.39) .. controls (322.61,303.41) and (378.14,302.11) .. (394.8,258.44) ;
\draw [shift={(396,255)}, rotate = 107.68] [fill={rgb, 255:red, 0; green, 0; blue, 0 }  ][line width=0.08]  [draw opacity=0] (11.61,-5.58) -- (0,0) -- (11.61,5.58) -- cycle    ;
\draw [line width=1.5]    (571,167.05) .. controls (552.48,201.17) and (517.79,224.84) .. (480.85,193.42) ;
\draw [shift={(478,190.89)}, rotate = 42.77] [fill={rgb, 255:red, 0; green, 0; blue, 0 }  ][line width=0.08]  [draw opacity=0] (11.61,-5.58) -- (0,0) -- (11.61,5.58) -- cycle    ;

\draw  [line width=1.5]   (320.09, 33.04) circle [x radius= 10, y radius= 10]   ;
\draw (314.09,30.44) node [anchor=north west][inner sep=0.75pt]  [font=\tiny]  {$$};
\draw  [line width=1.5]   (468.73, 157.25) circle [x radius= 10, y radius= 10]   ;
\draw (462.73,154.65) node [anchor=north west][inner sep=0.75pt]  [font=\tiny]  {$$};
\draw  [line width=1.5]   (208.17, 267.13) circle [x radius= 10, y radius= 10]   ;
\draw (202.17,264.53) node [anchor=north west][inner sep=0.75pt]  [font=\tiny]  {$$};
\draw  [line width=1.5]   (374.14, 130.84) circle [x radius= 10, y radius= 10]   ;
\draw (368.14,128.24) node [anchor=north west][inner sep=0.75pt]  [font=\tiny]  {$$};
\draw (189.33,269.2) node [anchor=north west][inner sep=0.75pt]  [font=\Large]  {$u$};
\draw (451.61,279.3) node [anchor=north west][inner sep=0.75pt]  [font=\Large,rotate=-359.78]  {$v\ $};
\draw (317.62,268.94) node [anchor=north west][inner sep=0.75pt]  [font=\Large]  {$e$};
\draw  [line width=1.5]   (428.19, 80.81) circle [x radius= 10, y radius= 10]   ;
\draw (422.19,78.21) node [anchor=north west][inner sep=0.75pt]  [font=\tiny]  {$$};
\draw  [line width=1.5]   (450.52, 269.38) circle [x radius= 10, y radius= 10]   ;
\draw (444.52,266.78) node [anchor=north west][inner sep=0.75pt]  [font=\tiny]  {$$};
\draw  [line width=1.5]   (209.34, 193.51) circle [x radius= 10, y radius= 10]   ;
\draw (203.34,190.91) node [anchor=north west][inner sep=0.75pt]  [font=\tiny]  {$$};
\draw  [line width=1.5]   (224.03, 104.14) circle [x radius= 10, y radius= 10]   ;
\draw (218.03,101.54) node [anchor=north west][inner sep=0.75pt]  [font=\tiny]  {$$};
\draw  [line width=1.5]   (257.81, 146.85) circle [x radius= 10, y radius= 10]   ;
\draw (251.81,144.25) node [anchor=north west][inner sep=0.75pt]  [font=\tiny]  {$$};
\draw  [line width=1.5]   (312.45, 105.21) circle [x radius= 10, y radius= 10]   ;
\draw (306.45,102.61) node [anchor=north west][inner sep=0.75pt]  [font=\tiny]  {$$};
\draw  [line width=1.5]   (326.55, 166.87) circle [x radius= 10, y radius= 10]   ;
\draw (320.55,164.27) node [anchor=north west][inner sep=0.75pt]  [font=\tiny]  {$$};
\draw  [line width=1.5]   (365.33, 178.34) circle [x radius= 10, y radius= 10]   ;
\draw (359.33,175.74) node [anchor=north west][inner sep=0.75pt]  [font=\tiny]  {$$};
\draw  [line width=1.5]   (421.08, 151.8) circle [x radius= 10, y radius= 10]   ;
\draw (415.08,149.2) node [anchor=north west][inner sep=0.75pt]  [font=\tiny]  {$$};
\draw  [line width=1.5]   (419.97, 214.3) circle [x radius= 10, y radius= 10]   ;
\draw (413.97,211.7) node [anchor=north west][inner sep=0.75pt]  [font=\tiny]  {$$};
\draw  [line width=1.5]   (321.27, 211.55) circle [x radius= 10, y radius= 10]   ;
\draw (315.27,208.95) node [anchor=north west][inner sep=0.75pt]  [font=\tiny]  {$$};
\draw  [line width=1.5]   (254, 208.96) circle [x radius= 10, y radius= 10]   ;
\draw (248,206.36) node [anchor=north west][inner sep=0.75pt]  [font=\tiny]  {$$};
\draw  [line width=1.5]   (268.1, 236.78) circle [x radius= 10, y radius= 10]   ;
\draw (262.1,234.18) node [anchor=north west][inner sep=0.75pt]  [font=\tiny]  {$$};
\draw  [line width=1.5]   (402.05, 237.35) circle [x radius= 10, y radius= 10]   ;
\draw (396.05,234.75) node [anchor=north west][inner sep=0.75pt]  [font=\tiny]  {$$};
\draw (119,211) node [anchor=north west][inner sep=0.75pt]   [align=left] {$\displaystyle p_{F_{e}}( u)$};
\draw (240,295) node [anchor=north west][inner sep=0.75pt]   [align=left] {$\displaystyle p_{F}{}_{e}( v)$};
\draw (500,146) node [anchor=north west][inner sep=0.75pt]   [align=left] {$\displaystyle \pi _{\ell }( v) -\ ( \pi _{\ell }( u) \ +\ n_{T}( u)) \ +1$};
\draw [line width=1.5]    (329.24,37.08) -- (419.05,76.77) ;
\draw [line width=1.5]    (432.88,89.65) -- (464.05,148.42) ;
\draw [line width=1.5]    (467.13,167.12) -- (452.12,259.51) ;
\draw [line width=1.5]    (208.33,257.14) -- (209.18,203.5) ;
\draw [line width=1.5]    (210.97,183.64) -- (222.41,114.01) ;
\draw [line width=1.5]    (232.07,98.19) -- (312.05,38.99) ;
\draw [line width=1.5]    (420.85,87.61) -- (381.48,124.04) ;
\draw [line width=1.5]  [dash pattern={on 1.69pt off 2.76pt}]  (218.17,267.23) -- (440.52,269.29) ;
\draw [line width=1.5]    (216.55,186.57) -- (250.61,153.79) ;
\draw [line width=1.5]    (364.9,127) -- (321.69,109.05) ;
\draw [line width=1.5]    (426.47,206.7) -- (462.23,164.85) ;
\draw [line width=1.5]    (383.28,134.92) -- (411.95,147.72) ;
\draw [line width=1.5]  [dash pattern={on 1.69pt off 2.76pt}]  (378.96,139.6) -- (415.16,205.53) ;
\draw [line width=1.5]    (372.32,140.67) -- (367.15,168.51) ;
\draw [line width=1.5]    (366.17,136.87) -- (334.53,160.83) ;
\draw [line width=1.5]  [dash pattern={on 1.69pt off 2.76pt}]  (267.72,145.49) -- (364.23,132.2) ;
\draw [line width=1.5]  [dash pattern={on 1.69pt off 2.76pt}]  (316.8,169.09) -- (219.1,191.29) ;
\draw [line width=1.5]  [dash pattern={on 1.69pt off 2.76pt}]  (331.26,211.83) -- (409.97,214.02) ;
\draw [line width=1.5]    (322.44,201.62) -- (325.38,176.8) ;
\draw [line width=1.5]    (214.36,259.28) -- (247.81,216.82) ;
\draw [line width=1.5]    (217.09,262.61) -- (259.17,241.3) ;
\draw [line width=1.5]    (442.18,263.87) -- (410.39,242.86) ;

\end{tikzpicture}}
    \caption{How we count the number of nodes in $F_e$ when $e=uv$,  $u$ is not an ancestor of $v$. Nodes in the light green area are the nodes counted in $\omega(F_e)$}
    \label{fig:weightancestors2}
\end{figure}
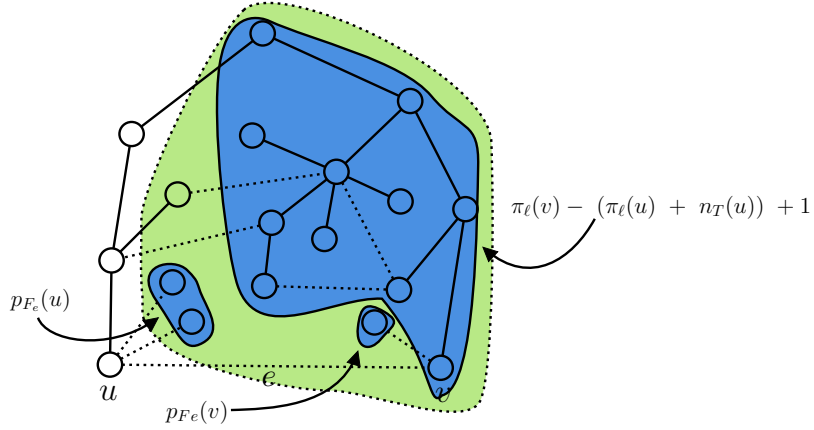

\begin{figure}[h!]
    \centering
    \scalebox{0.9}{
    \input{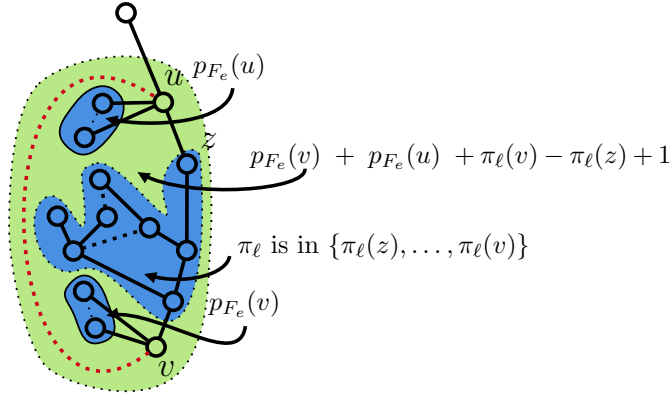}}
    \vspace{-1cm}
    \caption{How we count the number of nodes in $F_e$ when $e=uv$ when node $u$ is an ancestor of $v$ and $t_u(v)>t_u(z)$. Nodes in the light green area are the nodes counted in $\omega(F_e)$  }
    \label{fig:weightancestors}
\end{figure}

\subsubsection{Tree compatible edges and augmentations.}
\label{subsec:highlevelaugmentation}

As already explained, if there exists a real fundamental edge $e$ such that $\omega(F_e)$ has weight in $[n/3,2n/3]$, then the $T$-path between the endpoints of $e$ is a cycle separator. In the bad scenario in which no real fundamental edge has weight in this interval, we would like to find an \(\mathcal{E}\)-compatible fundamental edge \(e\) (recall that these edges are not $G$ edge, but there is a way to connect the endpoint in $\mathcal E$ such that planarity is not broken) such that the weight of some face in $\mathcal F_e$ is in the desired interval.

Given an \(\mathcal{E}\)-compatible fundamental edge \(e\), if a fundamental face \(F_e \in \mathcal{F}_e\) exists such that the number of nodes inside \(F_e\) is in the range \([|P_i|/3, 2|P_i|/3]\), it can be ensured that the \(T\)-path between \(u\) and \(v\) serves as a separator set. This is because, given that this face \(F_e\) is defined by a valid insertion of \(e\) into \(\mathcal{E}\), there cannot be any edge between a node inside \(F_e\) and a node outside \(F_e\). This is formally proven in \Cref{teo:balancedface}.

Therefore, in the case when no real fundamental face has a weight in the range \(\left[|P_i|/3, 2|P_i|/3\right]\), a naive approach would compute the weight of each virtual fundamental face \(F_e \in \mathcal{F}_e\) for all \(\mathcal{E}\)-compatible edges \(e \in \binom{V}{2} - E(G)\)\footnote{Recall that different faces \(F_e, F_e' \in \mathcal{F}_e\) indicate that \(e\) is inserted in different orders in \(\mathcal{E}\), and therefore their weights can differ.}. The problem is that computing the weights of all possible faces for all \(\mathcal{E}\)-compatible edges could cause congestion in the edges. Therefore, we define a restricted notion of virtual fundamental face for an \(\mathcal{E}\)-compatible edge.

\begin{definition}\label{def:lazyrechable}
Let \( (G, \mathcal{E}, T) \) be a planar configuration, \( e = uv \) be a fundamental edge, and \( z_1, z_2 \in F_{e} \) be a pair of non-adjacent nodes. The virtual fundamental edge \( f = z_1z_2 \) is called a \emph{\((T, F_{uv})\)-compatible edge} if:
(1) \( f = z_1z_2 \) can be inserted in \(\mathcal{E}\) such that \( f \) is contained in \( F_{e} \) and does not cross any other edge. (2) All nodes in \( V(T_{z_1}) \cup V(T_{z_2}) \) that are contained in \( F_e \) are also contained in \( V(F_f) \).

 Given a \((T, F_e)\)-compatible edge \( f \), a face in \(\mathcal{F}_f\) satisfying the definition of \((T, F_e)\)-compatibility is denoted as \( F_{f}^\ell \). Given a  node $z_1 \in F_e$, other node node $z_2\in F_e$ is called  \((T, F_e)\)-compatible with $z_1$ if $z_1z_2$ is a \((T, F_e)\)-compatible edge. See \Cref{fig:triangulation} for an example.

\end{definition}

\begin{figure}[h!]
\centering
\resizebox{\textwidth}{!}{\input{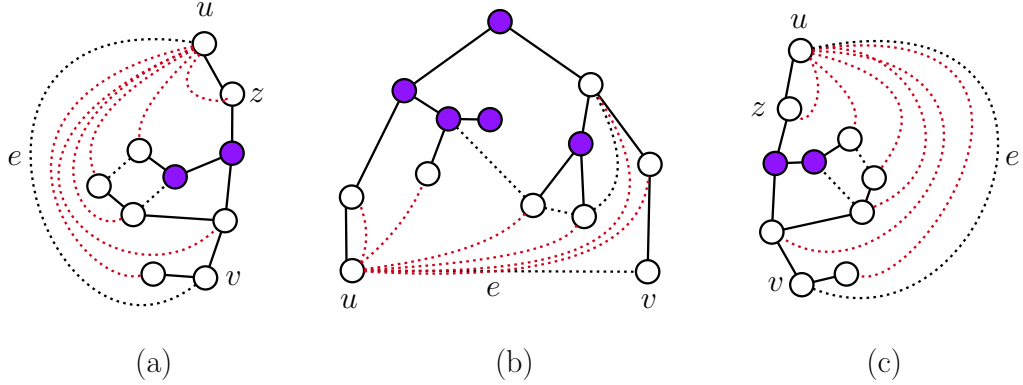}}
\caption{An example of the $(T,F_e)$-compatible nodes with the node $u$. Black edges are $T$-edges, black dashed edges represent real fundamental edges, red dashed edges are the virtual fundamental edges added in a $(T,F_e)$-compatible way and purple nodes are not $(T,F_e)$-compatible with node~$u$.}
    \label{fig:triangulation}
    
\end{figure}

By using the weights defined in \Cref{def:weights} together with the notion of augmentation of a $(T,F_e)$-compatible edge, the following lemma becomes the core technical result (proved in \Cref{subsec:proofteo}) that enables us to compute our cycle separator. Indeed, the algorithm proving \Cref{teo:separatorcongest} can be viewed essentially as a distributed implementation of the proof of the lemma below.

\begin{lemma}\label{teo:fundamentalseparator}
Given a planar configuration $(G,\mathcal{E}, T)$ such that $G$ has $n$ nodes. At least one of the following conditions holds:
\begin{enumerate}
    \item There exists a real fundamental edge $e$ satisfying $\omega(F_e)\in [n/3,2n/3]$. 
    \item There exists a real fundamental edge $e$ and a $(T,F_e)$-compatible edge $f$ such that $\omega(F_f) \in [n/3,2n/3]$. 
    \item There exists a real fundamental edge $e$, or a $(T,F_{e})$-compatible edge $f$ such that the $T$-path between the extremes of $e$ or $f$, respectively, has more than $n/3$ nodes.\footnote{This is a particular and easy case.}

\end{enumerate}
Furthermore, if a (real or virtual) \(T\)-fundamental edge \(e\) satisfies any of the conditions above, then the \(T\)-path between the endpoints of \(e\) induces a cycle separator of \(G\).

\end{lemma}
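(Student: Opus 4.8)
The statement has two parts --- that at least one of the three conditions holds, and that the $T$-path witnessing the chosen condition is a cycle separator --- and I would dispose of the second (``furthermore'') part first, since it only uses the bookkeeping of Section~\ref{sec:weights}. Let $F$ be either a real fundamental face $F_e$ or the face $F_f^\ell$ of a $(T,F_e)$-compatible edge $f$; write $C$ for its bounding $T$-path and $g$ for the closing edge ($e$ or $f$). In both cases $C\cup\{g\}$ is a simple closed curve crossed by no edge of $G$ --- immediate when $g$ is real, and part of the requirement in \Cref{def:lazyrechable} when $g$ is virtual --- so $V\setminus C$ splits as $\inside{F}\cup\mathrm{out}(F)$ (the nodes on the two sides of the curve), with each side a union of connected components of $G-C$. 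Section~\ref{sec:weights} pins down exactly which nodes \Cref{def:weights} counts, giving $|\inside{F}|\le\omega(F)\le|\inside{F}|+|C|$. Hence if $\omega(F)\in[n/3,2n/3]$ (conditions~1 and~2) then $|\inside{F}|\le\omega(F)\le 2n/3$ and $|\mathrm{out}(F)|=n-|C|-|\inside{F}|\le n-\omega(F)\le 2n/3$, so $C$ is a separator, and a cycle separator since it closes up through the non-crossing edge $g$; this is exactly the mechanism behind \Cref{teo:balancedface}. And if $|C|>n/3$ (condition~3) then $|V\setminus C|<2n/3$, so every component of $G-C$ is small and $C$ is again a cycle separator.

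For the first part I would argue by contradiction, in the spirit of Lipton and Tarjan~\cite{lipton:1979}. Assume none of the three conditions holds. Then every real fundamental edge $e$ and every $(T,F_e)$-compatible edge $f$ has $|C|\le n/3$ and face-weight outside $[n/3,2n/3]$; and since $|\inside{F}|\le\omega(F)$, a face of weight $<n/3$ has fewer than $n/3$ internal nodes. The first step is to exhibit a real fundamental edge $e_0=uv$ with $\omega(F_{e_0})>2n/3$: one shows that if every real fundamental edge had weight $<n/3$, then the real fundamental faces, together with the $T$-paths bounding them, could not account for all $n$ vertices of $G$ without some $T$-path exceeding $n/3$ --- contradicting the negation of condition~3. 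Then I fix $e_0=uv$ with $|\inside{F_{e_0}}|$ minimum among real fundamental edges of weight $>2n/3$.

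Next I would ``partially triangulate'' $F_{e_0}$ from $u$ via the augmentation: let $z_1,\dots,z_m$ be the nodes of $F_{e_0}$ that are $(T,F_{e_0})$-compatible with $u$, listed in the cyclic order in which they are seen from $u$ along $\partial F_{e_0}$, so the faces $F_{uz_1}^\ell\subseteq\dots\subseteq F_{uz_m}^\ell$ increase toward $F_{e_0}$. Using the formula of \Cref{def:weights} together with the subtree-containment clause of \Cref{def:lazyrechable} (all descendants of $z_j$ lying in $F_{e_0}$ stay in $F_{uz_j}^\ell$) and the contiguity of subtrees in the DFS orders $\pi_\ell$ and $\pi_r$, one checks that the weights $\omega(F_{uz_j}^\ell)$ increase with $j$ from a value $<n/3$ up to (essentially) $\omega(F_{e_0})>2n/3$, and that each increment equals the number of nodes of $F_{e_0}$ in the ``wedge'' between the rays $uz_j$ and $uz_{j+1}$. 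By the negation of condition~2 no $\omega(F_{uz_j}^\ell)$ lies in $[n/3,2n/3]$, so some wedge --- the one straddling the jump from weight $<n/3$ to weight $>2n/3$, or the last wedge toward $v$ --- has more than $n/3$ nodes; this wedge is itself a strictly smaller admissible region, bounded by a real or $(T,F_{e_0})$-compatible edge, which either has weight in $[n/3,2n/3]$, or has a $T$-path with more than $n/3$ nodes, or has weight $>2n/3$ with fewer internal nodes than $F_{e_0}$ --- each possibility contradicts the negation of condition~1, 2, or~3, or the minimality of $e_0$. This is the sought contradiction.

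I expect the main obstacle to be precisely that last step: certifying that whenever no face of the augmentation fan is balanced, a large weight jump always isolates a strictly smaller region that still carries more than $2/3$ of the vertices and is bounded by one of the admissible (real or $(T,F_e)$-compatible) closing edges. This is where the exact shape of \Cref{def:weights} is used --- charging part of the boundary to the DFS orders $\pi_\ell$ and $\pi_r$, and counting only subtrees that remain inside the face --- since these are what make the weight move through the fan in controlled increments and keep the ``gained'' region within the class of fundamental faces permitted by the lemma. A secondary delicate point is the existence of a weight-$>2n/3$ real fundamental edge in the first step, which is also where any global structural assumption on $G$ enters.
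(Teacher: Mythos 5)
Your ``furthermore'' reduction to the bound $|\inside F|\le\omega(F)\le|\inside F|+|C|$ and the two-sided counting is fine and matches the paper's \Cref{teo:balancedface}. The gap is in the first step of your contradiction argument. You assert that, if no real fundamental edge has weight $\ge n/3$, then ``the real fundamental faces, together with the $T$-paths bounding them, could not account for all $n$ vertices without some $T$-path exceeding $n/3$,'' and conclude that a real fundamental edge of weight $>2n/3$ must exist. Neither claim follows, and both fail in simple examples: take $T$ a star with center $r$ and $n-1$ leaves, with a single non-tree edge $uv$ between two leaves that are adjacent in the cyclic order $t_r$. Then the only real fundamental face is the triangle $u,r,v$, which has $\omega(F_{uv})=1$, no interior, and no non-adjacent pair of nodes, so there is no $(T,F_{uv})$-compatible edge; every $T$-path bounding a fundamental face has $3\ll n/3$ nodes. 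So ``all weights $<n/3$'' does not force a long bounding path, and it certainly does not produce a real fundamental edge of weight $>2n/3$. (You also never touch the degenerate case in which $G$ is a tree, where there are no fundamental edges at all.)

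The paper closes this case quite differently (\Cref{lemma:verysmall}): it picks a real fundamental edge $e$ contained in no other, partitions $V$ into $\inside{F_e}$, $F_\ell^e$, $F_r^e$, and when one of the outer parts exceeds $2n/3$ it inserts a \emph{virtual} edge $r_Tu'$ from the tree root to a carefully chosen leaf of that outer region, building an auxiliary embedded graph $G'$ in which $F_{r_Tu'}$ is a real fundamental face of weight $>2n/3$; it then invokes \Cref{lemma:balancedincritical} in $G'$. Crucially, the closing edge used is neither a real fundamental edge of $G$ nor a $(T,F_e)$-compatible edge in the sense of \Cref{def:lazyrechable} for any real face of $G$, so this case cannot be reached inside the framework of conditions~1--3 as you interpret them; some explicit construction along these lines is required. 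Your fan/wedge argument for the heavy-face case is a reasonable sketch and resembles the paper's \Cref{lemma:balancedincritical}, though the paper instead isolates a single extremal leaf $t$, tests whether it is hidden, and when it is, certifies directly that the $T$-path to the farther endpoint $z_2$ of the hiding edge is a separator, rather than locating a ``large jump'' in a monotone sequence of augmented weights; either route has to be checked against the exact bookkeeping of $\omega$ on $\tilde F_e$ versus $\inside F_e$, which you correctly flag as delicate.
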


We now provide the high-level ideas to prove \Cref{teo:fundamentalseparator}. Given a planar configuration \((G, \mathcal{E}, T)\), if there exists a \( T \)-real fundamental face \( F_e \) such that \( \omega(F_e) \in [n/3, 2n/3] \), then the nodes in the \( T \)-path between the endpoints of \( e \) form a separator set of \( G \). On the other hand, if no \( T \)-real fundamental face has a weight in the range \([n/3, 2n/3]\), there are three possible scenarios: (1) \( G \) has no \( T \)-real fundamental edges, (2) \( G \) has \( T \)-real fundamental edges, and at least one of them has a weight greater than \( 2n/3 \), or (3) \( G \) has \( T \)-real fundamental edges, and all of them have a weight smaller than \( n/3 \).

If case (1) holds, then \( G \) is a tree and must have a centroid node \( v_0 \) such that the subtree \( T_{v_0} \) rooted at \( v_0 \) contains at least \( n/3 \) nodes and fewer than \( 2n/3 \) nodes. 

Case (2) is the most complicated case. Let \( e \) be a real fundamental edge such that \( \omega(F_e) > 2n/3 \) with \( e = uv \). In this case, we show that there exists a node \( z \in \inside{F_e} \) that is \((T, F_e)\)-compatible with \( u \) such that \( \omega(F_{uz}^\ell) \in [n/3, 2n/3] \). Then, the \( T \)-path between \( u \) and \( z \) is a cycle separator of \( G \).

Finally, if case (3) holds, let \( e \) be a \( T \)-real fundamental edge that is not contained in any other \( T \)-real fundamental face. It can be proven that if there are at most \( 2n/3 \) nodes outside \( F_e \), then the nodes in the \( T \)-path between the endpoints of \( e \) form a cycle separator of \( G \). If there are more than \( 2n/3 \) nodes outside \( F_e \), the case reduces to one analogous to case (2), proving that there exists an \( \mathcal{E} \)-compatible virtual fundamental edge \( f \) that admits a fundamental face \( F_f \in \mathcal{F}_f \) with \( \omega(F_f) > 2n/3 \). See \Cref{lemma:balancedincritical}, \Cref{lemma:verysmall}, and \Cref{subsec:proofteo} for details.

\Cref{teo:fundamentalseparator} implies that the search for a cycle separator reduces to finding a (real or virtual) fundamental edge \( e \) such that \( \omega(F_e) \in [|P_i|/3, 2|P_i|/3] \). In \Cref{subsec:separatoralgcongest}, we show that this search can be carried out through the application of a series of relatively simple subroutines, consisting of performing up-casts and down-casts over \( T \) of local information  located in the nodes (e.g. presence of edges, number of nodes in certain subtrees, etc.). In fact, we show that most of our subroutines can be represented as \emph{part-wise aggregation problems}. These problems were defined by Haeupler and Ghaffari as an algorithmic primitive that can be solved in \( \tilde{\mathcal{O}}(D) \) rounds using the low-congestion shortcut technique \cite{haeupler2018round}. All these details are described in \Cref{subsec:shortcuts}.

\subsection{Depth first search trees.}
\label{subsec:dfshighlevelideas}

We now give a high-level description of the deterministic algorithm that computes a DFS in planar graphs. The deterministic DFS algorithm will be referred to as the \emph{main algorithm}. In each recursive call, referred to as a \emph{phase}, a partial DFS-tree \(T_{d}\) is already constructed. In the first phase, \(T_{d}\) consists only of the node \(r\) without any edge. Then, in each phase, the following steps are computed in parallel for each connected component \(C_i\) of \(G - T_d\).

\begin{enumerate}[align= left, label=\textbf{Step \arabic*.}, ref=\textbf{Step \arabic*}.]
	\item  A cycle separator \(S_i\) of \(C_i\) is computed. 
	\item  The nodes in \(S_i\) are added to the partial DFS-tree \(T_{d}\) following the \dfsrule.\
\end{enumerate}

\noindent
The \dfsrule\ is defined as follows. Let \(T_d\) be a partial DFS tree. To expand \(T_d\), pick a node \(w\in V\setminus V(T_d)\) that lies in some connected component \(C\) of \(G-T_d\). Let \(r_C\in C\) be a node that has a neighbor in \(T_d\) of maximum depth (ties broken arbitrarily). To preserve the DFS structure, we add to \(T_d\) the entire path in \(C\) from \(r_C\) to \(w\), and connect \(r_C\) to \(T_d\) via its deepest neighbor in \(T_d\). Any node added to \(T_d\) keeps its parent \id\ and its depth in \(T_d\); these values remain fixed even if \(T_d\) is extended further. This procedure is the \dfsrule\ (see \Cref{fig:virtualface}(a)). Throughout, a \emph{partial DFS tree} is a rooted subgraph \(T_d\subseteq G\) (rooted at \(r\)) whose vertices were introduced by repeated applications of the \dfsrule.
\medskip

The high-level approach is the same as in the randomized algorithm of \cite{ghaffari:2017}. However, the way we implement Steps~1 and~2 deterministically is fundamentally different. For Step~1, at each recursion level we work on the subgraphs induced by the connected components of $G-T_d$. These vertex sets form a node-disjoint collection. Therefore, Step~1 follows directly from \Cref{teo:separatorcongest}: we can compute a cycle separator in each induced subgraph within $\biglo{D}$ rounds. Indeed, \Cref{teo:separatorcongest} extends straightforwardly from a partition of $V(G)$ to any node-disjoint collection of vertex subsets.

It remains to explain how to implement Step~2 deterministically.

\subsubsection{Joining each separator set to the partial DFS tree.}

In each phase \( j \) of the main algorithm, once Step 1 concludes, a cycle separator for each connected component \( C_i \) of \( G - T_d \) has been computed. With this, the formal problem to be solved in Step 2, which its called \joinproblem, is the following: Given a partial DFS tree $T_d$ of $G$, the connected components $\mathcal{C}=\{C_1,...,C_k\}$ of $G-T_d$, and a set of marked nodes $S_i$ in each $C_i$ that is a cycle separator, to add all the nodes of each $S_i$ to the DFS tree $T_d$. 

\begin{lemma}\label{teo:joinkpaths}

There exists a deterministic algorithm in the \CONGEST\ model which solves \joinproblem\ in $\biglo{D}$ rounds.
    
\end{lemma}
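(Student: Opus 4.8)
The plan is to process all components $C_1,\dots,C_k$ of $G-T_d$ simultaneously and to reduce \joinproblem\ to a constant number of \emph{part-wise aggregation/broadcast} operations over this node-disjoint collection of connected subgraphs, plus one run of the \dfsorderproblem\ on a spanning tree of each component. Part-wise aggregation on a node-disjoint collection of connected subgraphs of a planar graph runs in $\biglo{D}$ rounds by the low-congestion shortcut framework of \cite{haeupler2018round}, and the DFS-order algorithm runs in $\biglo{D}$ rounds even on a spanning tree of depth $\Theta(n)$; hence the whole procedure is $\biglo{D}$. We describe one component, writing $C$, $S$, $T_C$ for $C_i$, $S_i$ and the spanning tree of $C_i$ produced while computing the separator (so $S$ is the $T_C$-path between the endpoints $u,v$ of a real or virtual fundamental edge, and $S\cup\{uv\}$ is a cycle when $S$ is a cyclic-type separator). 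First, in one round every $x\in C$ learns the $T_d$-depths of its neighbours in $T_d$ and stores their maximum $d^\star(x)$ (or $-\infty$); a single $\max$ part-wise aggregation then identifies $r_C\in C$ attaining $\max_x d^\star(x)$ and the value $d^\star$, which by connectivity of $G$ exists and is exactly the node prescribed by the \dfsrule. We then run the \dfsorderproblem\ on $T_C$ so that every node of $C$ knows $\leftpi^{T_C}$, $\rightpi^{T_C}$, its $T_C$-subtree size, and its $T_C$-depth, which suffices for local ancestor/descendant tests in $T_C$ once a few positions are broadcast.

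Next, following the scheme of \cite{ghaffari:2017}, the planar embedding determines an entry vertex $s\in S$ and a direction in which $S$ is traversed as a simple path $s=q'_0,q'_1,\dots,q'_{|S|-1}$ covering all of $S$ (a walk around the cycle starting at $s$ in the cyclic case; a prescribed endpoint $s$ in the path case), such that appending to $T_d$ a simple path from $r_C$ through $C$ to $s$ and then along this traversal again yields a partial DFS tree. We take that path to be $Q$, the concatenation of the $T_C$-path $r_C\leadsto s$ with $q'_0,\dots,q'_{|S|-1}$, reduced to a simple path $q_0=r_C,q_1,\dots,q_L$; its first vertex $q_0$ gets as parent the deepest $T_d$-neighbour of $r_C$ and depth $d^\star+1$, and in general $q_t$ must get parent $q_{t-1}$ and depth $d^\star+1+t$. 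Since $Q$ is the concatenation of $O(1)$ root-to-descendant paths of $T_C$ (possibly reversed), glued at lowest common ancestors, we compute these $O(1)$ LCAs and broadcast over $C$ the $\leftpi^{T_C}$-image, subtree size and $T_C$-depth of $r_C$, of $s$, of $u$ and $v$, and of the $O(1)$ LCAs, together with the running length offset of each segment; this is $O(1)$ part-wise broadcasts.

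Finally we mark $Q$. Because $L$ may be $\Theta(n)$, no vertex waits for information to propagate along $Q$; instead each vertex self-computes. For a segment of $Q$ equal to the $T_C$-path from an ancestor $a$ to a descendant $b$, a vertex $w$ lies on it iff $\leftpi^{T_C}(a)\le\leftpi^{T_C}(w)<\leftpi^{T_C}(a)+n_{T_C}(a)$ and $\leftpi^{T_C}(w)\le\leftpi^{T_C}(b)<\leftpi^{T_C}(w)+n_{T_C}(w)$, a test performed from the broadcast data; its index along that segment is $\lvert d_{T_C}(w)-d_{T_C}(a)\rvert$, and its parent in $Q$ is its $T_C$-neighbour toward $a$ (at a segment boundary, the last vertex of the preceding segment). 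Adding the broadcast segment offsets, every vertex on $Q$ learns its index $t$, hence its depth $d^\star+1+t$ and its parent $q_{t-1}$, while vertices not on $Q$ stay unchanged and $q_0=r_C$ uses its known deepest $T_d$-neighbour. Running this in parallel over all $C_i$ produces the output of \joinproblem; correctness as a partial DFS tree follows from the \dfsrule\ and the planarity argument of \cite{ghaffari:2017}, and each ingredient costs $\biglo{D}$ rounds.

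The heart of the argument is the last step: marking a path of length up to $\Theta(n)$ with per-vertex parent and depth in only $\biglo{D}$ rounds, which naive propagation along the path cannot achieve. The resolution is to express $Q$ as $O(1)$ tree-path segments of $T_C$ and to exploit that the DFS order of an \emph{arbitrary-depth} spanning tree — hence all ancestor/descendant tests, positions, and offsets along tree paths — is computable in $\biglo{D}$ rounds, as established for the proof of \Cref{teo:separatorcongest}, combined with $O(1)$ part-wise broadcasts over the collection $\{C_i\}$. A secondary, embedding-dependent point is choosing the entry vertex and orientation of $S$ so that appending $Q$ preserves the partial-DFS-tree property; this is inherited from the randomized construction of \cite{ghaffari:2017}, with only the implementation of the underlying subroutines made deterministic.
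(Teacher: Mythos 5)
There is a genuine gap, and it lies exactly in the step you call ``the heart of the argument.'' You set $Q$ to be the $T_C$-path from $r_C$ to an entry vertex $s\in S$, concatenated with a traversal of $S$ and then ``reduced to a simple path,'' and you assert both that $Q$ covers all of $S$ and that $Q$ decomposes into $O(1)$ tree-path segments. Neither holds in general. The \dfsrule\ requires the connector from $r_C$ to be a simple path; $S$ is a $T_C$-path between $u$ and $v$, with branching vertex $w=\text{LCA}_{T_C}(u,v)$ after re-rooting at $r_C$; and the root-to-$u$ path in $T_C$ necessarily runs through $w$ and then along the entire $w$-to-$u$ arm of $S$. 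Concatenating with the traversal of $S$ (from $u$ back through $w$ to $v$) doubles back over that arm, and the reduction to a simple path discards it: what remains is exactly the $T_C$-path from $r_C$ to $v$, which misses the $w$-to-$u$ half of $S$. Choosing $s=v$ loses the other arm symmetrically, and placing $r_C$ on $S$ itself only shifts the branching vertex to $r_C$. Attributing the fix to ``the scheme of \cite{ghaffari:2017}'' does not close the gap: the non-tree routing that lets Ghaffari and Parter circle a separator without re-entering it is precisely a $\Theta(n)$-length walk with no $O(1)$ tree-segment structure, which is why they resort to randomized marking there in the first place.

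The paper's proof accepts this per-phase loss and iterates. In each phase it adds a single $T_i$-path from the current $r_{C_i}$ to the leaf of $S_i$ that is farthest in $T_i$ from the LCA of the endpoints of $S_i$; by the branching argument above this covers at least half of the remaining nodes of $S_i$. It then rebuilds spanning trees of the residual components with edge weight $0$ on edges joining the remaining $S_i$-nodes, so that the residue of $S_i$ again sits on a bounded number of tree paths, and repeats, using the $\biglo{D}$-round subroutines of \Cref{subsec:subroutinescongest} (\rerootproblem, \lcaproblem, \minproblem/\maxproblem, \markpathproblem). Since the uncovered part of $S_i$ halves per phase, $\bigo{\log n}$ phases suffice, for $\biglo{D}$ rounds overall. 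Your local marking mechanism via $\leftpi^{T_C}$-ranges, subtree sizes and depths is a sound way to mark a single tree path deterministically and is in the same spirit as the paper's \markpathproblem; what your proposal is missing is the recursive halving that makes one tree path per phase enough.
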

 While the complete proof of \Cref{teo:joinkpaths} is given in Section \ref{subsec:dfscongestalg}, the main idea of the algorithm is explained here. Let $S_i=u_ix_i^1....x_i^kv_i$ be the cycle separator of $C_i$ computed in Step 1. and $T_i$ an spanning tree of $C_i$ rooted in a node $r_i$. Without loss of generality, we can assume the root $r_i$ has the deepest neighbor in $T_d$ (the current partial DFS computed).
 
In order to build \(\tilde{T}_d\) from the output of Step 1, one can grow \(T_d\) by including the \(T_i\)-path from \(r_{i}\) to the furthest node in \(T_i\) between \(u_i\) and \(v_i\). Since \(r_{i}\) is a node with the deepest neighbor in \(T_d\), this path is added to \(T_d\) following the \dfsrule. However, this approach might only partially include the set \(S_i\), as \(S_i\) may not be entirely contained in any \(T_i\)-path starting from \(r_i\). By doing this, it is ensured that at least half of the nodes of \(S_i\) are included in \(T_d\) (see \Cref{fig:virtualface} (b)).

By adding the path from $r_{i}$ to $u_i$ or $v_i$, it is meant that all nodes in this path acknowledge the \id\ of their parent and their depth in $T_d$. Although this information can be easily transmitted through $T_i$, this path can contain up to $\Omega(n)$ nodes. Therefore, to ensure all subroutines execute in $\biglo{D}$ rounds, this path must be added in an efficient manner. In \Cref{sec:congestimplementation} we show that this can done in  $\biglo{D}$ rounds in the \CONGEST\ model.

After the first recursion ends, at least half of the nodes of $S_i$ have been added to $T_d$. Then, recursively, following the same steps but with the set $S_i'$ of the nodes of $S_i$ that have not yet been added to $T_d$, again, half of these remaining nodes can be added to $T_d$, each time using $\biglo{D}$ rounds. As the number of nodes in each $S_i$ not added to the partial DFS tree decreases by at least half in each recursion, the process terminates after $\bigo{\log n}$ iterations.

 \begin{figure}[h] 
\centering
  \scalebox{0.6}{\input{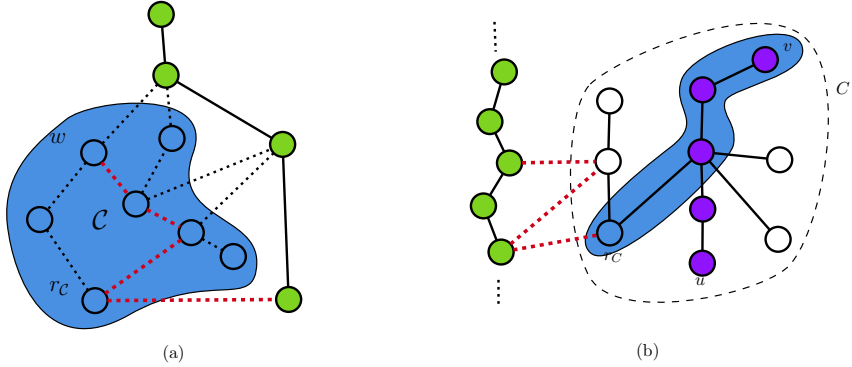}}
	\caption{ (a) Green nodes are nodes already in $T_d$. The nodes in the blue cloud represent a connected component \(\mathcal{C}_i\) of \(G-T_d\). To join node \(w\) to this tree, the red depicted arbitrary \(w, r_i\)-path in \(\mathcal{C}_i\) follows the \dfsrule. (b) Green nodes are in $T_d$. Nodes highlighted with purple are the cycle separator of $\mathcal{C}_i$. The nodes in the blue cloud represent the path from $r_{i}$ to $v_i$, the deepest node in the cycle separator. This path is added to $T_d$.}
	\label{fig:virtualface}
\end{figure}

\section{Fundamental results in planar graph theory.}\label{sec:technicaldetails}

Before describing the algorithm to compute a separator set, we prove formally some result regarding the weights and the augmentation of a face in order to proof \Cref{teo:fundamentalseparator}, which is the core of our distributed algorithm.

First, as explained before, in order to avoid ambiguity in the definition of the inside zone of a face, given planar configuration $(G,\mathcal{E},T)$ and a real fundamental face \(F_e\), the cycle defined by the \(T\)-path composed of the nodes on the border \(C_e\) plus the edge \(e\) defines a Jordan curve that splits the plane into two sets: the \emph{outside} (unbounded) and the \emph{inside} (bounded). The interior of a face \(F_e\) is always considered as the inside portion of the plane defined by the Jordan curve \(C_e\). To formally define this set, consider a graph \(G = (V, E)\) and a spanning tree \(T\) rooted at \(r\), with a virtual root \(r_0\) connected only to the root \(r\), and $r_0$ lies in the external face of $\mathcal E$. The augmented graph is \(\Tilde{G} = (V \cup \{r_0\}, E \cup \{\{r, r_0\}\})\). Then, the real fundamental face defined by a real fundamental edge \(e\) is the face of \(T + e\) that does not contain the virtual root \(r_0\). 

\subsection{Weighting the fundamental faces of a given spanning tree.} \label{sec:weights}

Suppose that \( e = uv \) is a real fundamental edge such that \(\leftpi(u) < \leftpi(v)\) and \( u \) is not an ancestor of \( v \). Let \( w \) denote the lowest common ancestor (LCA) of \( u \) and \( v \) in \( T \). The set \(\tilde{F}_e\) is defined as the set of all nodes in \(\inside{F_e}\) plus the nodes in the \( T \)-path from \( w \) to \( v \). While Figures \ref{fig:weightancestors} and \ref{fig:weightancestors2} illustrate which nodes are counted by $\omega(F_e)$ for each fundamental real face $F_e$, the following lemmas (Lemmas \ref{lem:weighting} and \ref{lem:weighting2}) formally prove that the weight of a fundamental edge provides an approximation of the number of nodes contained within the corresponding fundamental face, and that these nodes correspond exactly to those shown in Figures \ref{fig:weightancestors} and \ref{fig:weightancestors2}.

\begin{lemma}\label{lem:weighting}
Given a planar configuration $(G,\mathcal{E},T)$ and a real fundamental edge $e=uv$, such that $\leftpi(u)<\leftpi(v)$ and $u$ is not an ancestor of $v$. Then $\omega(F_e)$  equals the number of nodes in $\tilde{F}_e$.
     
\end{lemma}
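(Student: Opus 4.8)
The plan is to verify directly that the terms in the formula
\[
\omega(F_e) = p_{F_e}(v)+p_{F_e}(u) + \leftpi(v)- (\leftpi(u) + n_T(u)) +1
\]
count exactly the nodes of $\tilde F_e = \inside{F_e} \cup \{\text{nodes on the $T$-path from } w \text{ to } v\}$, where $w = \mathrm{LCA}(u,v)$. First I would fix a clean description of $F_e$: since $u$ is not an ancestor of $v$, the border cycle $C_e$ consists of the $T$-path from $w$ down to $u$, the $T$-path from $w$ down to $v$, and the fundamental edge $e = uv$. The interior $\inside{F_e}$ is the bounded region cut off by this Jordan curve (with $r_0$ on the outside, by the convention fixed at the start of \Cref{sec:technicaldetails}). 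I would then partition the relevant node set along the two halves of the border: the part of $\tilde F_e$ that lies in $T_u$, the part that lies in $T_v$, and whatever remains, and match these against $p_{F_e}(u)$, $p_{F_e}(v)$, and the order-difference term respectively.

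**The three pieces.**

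The first step is to argue that the nodes of $\tilde F_e \cap T_u$ are precisely the nodes counted by $p_{F_e}(u)$; by definition $p_{F_e}(u) = |F_e \cap T_u|$, so I need that every node of $\tilde F_e$ inside $T_u$ actually lies on $F_e$ (border or interior) and conversely. Here the key observation is that $T_u$ is entirely on the $u$-side: no edge of $T$ leaves $T_u$ except the one to $u$'s parent, and any node of $T_u$ lying inside $F_e$ contributes to $\inside{F_e}$; the $T$-path from $w$ to $v$ meets $T_u$ only possibly at... in fact not at all, since $w \notin T_u$ (as $u$ is not an ancestor of $v$, $w$ is a strict ancestor of $u$). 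So $\tilde F_e \cap T_u = F_e \cap T_u$ and this piece equals $p_{F_e}(u)$. Symmetrically, $\tilde F_e \cap T_v$: this includes the $T$-path from the child of $w$ toward $v$ down to $v$ (which is the border on the $v$-side minus $w$), plus interior nodes of $F_e$ lying in $T_v$; I claim this is exactly $F_e \cap T_v$, giving $p_{F_e}(v)$. The one subtlety is whether $w$ itself should be in the count — it belongs to the $T$-path from $w$ to $v$ and hence to $\tilde F_e$, but $w \notin T_u \cup T_v$, so it is accounted for separately (and this is where the ``$+1$'' in the formula comes from).

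**The order-difference term.**

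The remaining nodes of $\tilde F_e$ are those lying neither in $T_u$ nor in $T_v$: these are the interior nodes of $F_e$ that hang off the region without being descendants of $u$ or $v$, together with the node $w$. The crux is to show their count is $\leftpi(v) - (\leftpi(u) + n_T(u)) + 1$. The idea is that in the \leftorder\ traversal, after finishing all of $T_u$ (which occupies the contiguous block of indices $\leftpi(u), \ldots, \leftpi(u) + n_T(u) - 1$ since $\leftpi$ restricted to $T_u$ is a contiguous interval), the DFS continues and, because the embedding is planar and $e$ is drawn with $\leftpi(u) < \leftpi(v)$, every node visited strictly between ``just after $T_u$'' and ``$v$ itself'' lies geometrically inside the region bounded by $C_e$ — otherwise such a node together with its subtree would have to cross the Jordan curve $C_e$, contradicting planarity of the embedding. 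Conversely each interior node not in $T_u \cup T_v$ gets a \leftorder\ index in exactly this range. Counting: the indices strictly after $\leftpi(u)+n_T(u)-1$ and up to and including $\leftpi(v)$ number $\leftpi(v) - (\leftpi(u)+n_T(u)-1) = \leftpi(v) - (\leftpi(u)+n_T(u)) + 1$; subtracting off the nodes of $T_v$ already counted in $p_{F_e}(v)$ and re-adding $w$ must be checked to land on the stated total. I expect this last bookkeeping — pinning down exactly which indices in the interval $(\leftpi(u)+n_T(u)-1,\ \leftpi(v)]$ correspond to $T_v$ versus to the ``loose'' interior nodes versus to $w$, and confirming the arithmetic closes — to be the main obstacle, and the place where the planarity of $\mathcal{E}$ and the non-ancestor hypothesis are used most essentially. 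A clean induction on the structure of the DFS, or a direct appeal to the contiguity of DFS intervals plus a planarity/non-crossing argument for why ``loose'' subtrees must sit inside $C_e$, should discharge it.
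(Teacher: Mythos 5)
Your decomposition of $\tilde F_e$ by subtree membership, followed by matching each piece against a term of the formula and using DFS-interval contiguity plus a planarity argument for the middle piece, is the same strategy the paper uses. The paper's proof is organized as three claims: Claim~1 characterizes which neighbors of the border cycle $C_e$ lie inside $F_e$ purely in terms of the local cyclic orders $t_{u_i}, t_{v_j}$; Claim~2 records the monotonicity of $\leftpi$ along $P_u$ and $P_v$ and the key fact that all of $T_u$ is exhausted before $v_1$ is reached; Claim~3 then proves that $\tilde F_e \setminus (T_u\cup T_v\cup\{w\})$ is exactly the set of nodes whose $\leftpi$-index falls in a single contiguous interval between the end of $T_u$ and $v$. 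Your sketch is a compressed version of this.

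Two places where the details go wrong or are left open. First, your accounting of the three pieces is internally inconsistent: you assert both $|\tilde F_e\cap T_u|=p_{F_e}(u)$ and $|\tilde F_e\cap T_v|=p_{F_e}(v)$, but whatever convention one adopts for $p_{F_e}(\cdot)$, these two equalities cannot both hold with the same shift, because $u\notin\tilde F_e$ (it is a border node not on the $w$--$v$ path) while $v\in\tilde F_e$ (it is an endpoint of that path). So one of the two must carry a $\pm 1$, and this is exactly where the ``$+1$'' in the formula is absorbed in the paper's bookkeeping, not in $w$ alone. Related to this, your description of the remaining piece as ``the interior nodes of $F_e$\dots together with the node $w$'' omits the border nodes $v_1,\dots,v_{k'}$ of the $w$-to-$v$ path; these are not in $\inside{F_e}$ and not in $T_u\cup T_v$, yet they belong to $\tilde F_e$ and are counted by the order-difference term.

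Second, and more substantively, the planarity step you flag as ``the main obstacle'' really is where the work lies, and your sketch only covers one direction and informally. To close it, you need (a) every node $z$ visited in the DFS strictly between the end of $T_u$'s interval and $v$ is either in $\inside{F_e}$ or on the $w$--$v$ path, \emph{and} (b) every node visited outside this range is excluded, which in particular means handling siblings of the $u_i$'s and $v_j$'s that hang off the boundary paths on the \emph{outer} side. The paper discharges both directions by first proving the cyclic-order characterization (Claim~1: $z\in N(C_e)$ is inside iff its position in $t_{u_i}$ or $t_{v_j}$ sits on the correct side of the path edge), and then translating those cyclic-order conditions into $\leftpi$ bounds (Claim~3). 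Appealing directly to ``the subtree would have to cross $C_e$'' is the right intuition, but as written it does not address direction (b): you must rule out a node at index $\leftpi(u)+n_T(u)$, say a sibling of some $u_i$ whose subtree lies entirely on the outer side of $C_e$, and the only handle on that is the cyclic-order comparison at $u_i$.
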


\begin{proof}
Let $w$ to be the LCA of $u$ and $v$ in $T$. Let us call $P_u: w = u_0, u_1, \dots, u_{k+1} = u$ and $P_v : w = v_0, v_1, \dots, u_{k'+1} = v$ the $T$-path from $w$ to $u$ and $v$, respectively, for appropriate values of $k\geq0$ and $k'\geq0$. In the following, we assume that the planar combinatorial embedding $\mathcal{E}$ satisfies that if $p(z)$ is the parent of $z$ in $T$, then  $t_z(p(z)) = 0$ (if this is not the case, the nodes can shift the local order of their neighbors clockwise to reach this condition without any communication).

\begin{claim}\label{claim:1} Let $z\in N(C_e)$. We have that $z\in \inside{F_e}$ if and only if one of the following conditions holds:
\begin{enumerate}[label=(\roman*)]
\item $z\in N(w)$ and  $t_w(v_1)<t_w(z)<t_w(u_1)$.
\item $z \in N(u)$ and $t_u(z)< t_u(v)$.
\item $z \in N(v)$ and $t_v(u) < t_v(z)$.
\item There is a $i\in  \{1,\dots, k\}$ such that $z\in N(u_i)$ and $t_{u_i}(z) < t_{u_i}(u_{i+1})$.
\item There is a $i\in  \{1,\dots, k'\}$ such that $z\in N(v_i)$ and $t_{v_i}(v_{i+1})<t_{v_i}(z)$.
\end{enumerate}
\end{claim}

\begin{proof}[Proof of \Cref{claim:1}] If $z\in N(\omega)$, condition (i) is obtained because we always assume that  $e$ is an edge that goes \emph{below} $T$, i.e., $F_e$ does not contain the (virtual) parent of $w$. Consider now the orientation of $C_e$ that, starting from $w$, goes over the nodes in $P_u$ in increasing order (with respect to their indices), and then goes over the nodes in $P_v$ in decreasing order.  Condition (i) implies that the interior of the Jordan curve defined by $C_e$ is found on the left of the defined orientation. More precisely, the face $F_e$ is found in the side of the orientation of $C_e$ that covers in a clockwise ordering from the predecessor to the successor of a node in the cycle. From this fact are deduced conditions (ii) and (iii) if $z\in N(u)$ or $z\in N(v)$, respectively. If $z\notin N(\omega)\cup N(u)\cup N(v)$, as $z\in N(C_e)$, there exists an $i\in[k]$ such that $z\in N(u_i)$ or a $i'\in [k']$ such that $z\in N(v_i)$, in any case by the same argument that cases (ii) and (iii) the conditions (iv) and (v). holds. 
\end{proof}

\begin{claim}\label{claim:2}~
\begin{enumerate}[label=(\roman*)]
\item For every $i\in \{1, \dots, k\}$,  $\leftpi(u_i) < \leftpi(u_{i+1})$;
\item  For every $j\in\{1,\dots, k'\}$, $\leftpi(v_j)< \leftpi(v_{j+1})$;
\item  For every node $z$ of $T_{u_1}$, $\leftpi(z) < \leftpi(v_1)$
\item  $\leftpi(u) + n_T(u) < \leftpi(v_1)$
\end{enumerate}
\end{claim}

\begin{proof}[Proof of \Cref{claim:2}]
 (i) is given because $P_u$ is the only $T$-path connecting $w$ and $u$. (ii) holds by the same argument for $v$ and the nodes in $P_v$. (iii) is obtained because $\leftpi(u)<\leftpi(v)$ and then, in a  $\leftorder$, all nodes in $T_{u_1}$ are visited before $v_1$. (iv) is obtained because, in particular, all nodes in $T_u$ are visited before $v_1$, hence $\leftpi(v_1) > \leftpi(u)+n_T(u)$. 
\end{proof}

\begin{claim}\label{claim:3} Every node $z\in V$ satisfies that
  $z \in \tilde{F_e} \setminus (T_u \cup T_v \cup \{w\})$ iff $\leftpi(z) \in [\leftpi(u)+n_T(u)+1, \leftpi(v)-1]$.
\end{claim}

\begin{proof}[Proof of \Cref{claim:3}] Note first that if a node $z$ is in $\inside{F_e}$, then all the nodes in $T_z$ are also in $\inside{F_e}$. Then, a node $\alpha \in V$ belongs to $\inside{F_e}$ if and only if $\alpha \in T_z$  for some $z\in N(C_e)\cap \inside{F_e}$. Hence, from \Cref{claim:1}, we have that $\alpha$ belongs to $\inside{F_e}$ if and only there is a node $z$ satisfying one of the conditions (i)-(v) of the claim such that $\alpha \in T_z$. Since we are considering nodes in $\tilde{F}_e \setminus  (T_u \cup T_v \cup \{w\})$ we only analyze cases (i), (iv) and (v):
\begin{itemize}
\item (Case (i)) $z\in N(w)$ and  $t_w(u_1)<t_w(z)<t_w(v_1)$. Then, we have that in the \leftorder\  the nodes in $T_z$ are visited before after all the nodes in $T_{u_1}$ and before $v_1$. In particular, $\alpha$ is visited after all the nodes in $T_u$ and, by \Cref{claim:2}, before $v$. Hence,  $\leftpi(u) + n_T(u) < \leftpi(\alpha) < \leftpi(v_1) < \leftpi(v)$. 
\item (Case (iv)) There is a $i\in  \{1,\dots, k\}$ such that $z\in N(u_i)$ and $t_{u_i}(z) < t_{u_i}(u_{i+1})$. In this case all the nodes in $T_z$ are visited in the \leftorder\ after the nodes in $T_{u_{i+1}}$  and before $v_1$. Hence,  $\leftpi(u) + n_T(u) < \leftpi(\alpha) < \leftpi(v)$. 
\item  (Case (v)) There is a $i\in  \{1,\dots, k'\}$ such that $z\in N(v_i)$ and $t_{v_i}(v_{i+1})<t_{u_i}(z)$. In this case all the nodes in $T_z$ are visited  in the \leftorder\ after $v_{i}$ and before $v_{i+1}$. By \Cref{claim:2} we deduce that  $\leftpi(u) + n_T(u) < \leftpi(v_i) < \leftpi(\alpha) < \leftpi(v_{i+1}) \leq \leftpi(v)$. 
\end{itemize}
We deduce that a node $\alpha$ belongs to $\inside{F_e} \setminus  (T_u \cup T_v)$ then $\leftpi(\alpha)\in [\leftpi(u)+n_T(u)+1, \leftpi(v)-1]$. Observe that from \Cref{claim:2} all the nodes $\alpha \in P_v \setminus \{w, v\}$ also satisfy that $\leftpi(\alpha)\in [\leftpi(u)+n_T(u)+1, \leftpi(v)-1]$. We deduce that $\alpha$ belongs to $\tilde{F}_e \setminus  (T_u \cup T_v \cup \{w\})$ then $\leftpi(\alpha)\in [\leftpi(u)+n_T(u)+1, \leftpi(v)-1]$.

Now let us show that if $\alpha \in V$ satisfies that $\leftpi(\alpha) < \leftpi(u) + n_T(u)$, then $\alpha$ cannot belong to $\tilde{F_e} \setminus (T_u \cup T_v)$. First, from \Cref{claim:2} we know that the nodes $z$ in $P_v\setminus\{w\}$ satisfy that $\leftpi(z) > \leftpi(u)+ n_T(u)$, so $\alpha$ is not in $P_v\setminus \{w\}$. Then, let us show that $\alpha$ is not contained in $\inside{F_e}$. If $\leftpi(\alpha)<\leftpi(w)$ then $\alpha \notin T_w$. Since we  assume that  $e$ is an edge that goes \emph{below} $T$, this implies that $\alpha \notin \inside{F_e}$.  If $\leftpi(\alpha) \in [\leftpi(u), \leftpi(u)+n_T(u)]$ then  $\alpha \in T_u$, which implies that $\alpha \notin \tilde{F_e} \setminus (T_u \cup T_v)$. It remains the case where $\leftpi(\alpha) \in [\leftpi(w)+1, \leftpi(u)-1]$. Let us pick a node $z\in T_w$ such that $\alpha\in T_z$ and $\leftpi(z) \in [\leftpi(w)+1, \leftpi(u)-1]$ is minimum. Observe that necessarily $z$ is a neighbor of one of the nodes in $P_u$. Let $i\in \{0, \dots, k\}$ be such that $z\in N(u_i)$. Observe that $\leftpi(z) < \leftpi(u_{i+1})$, as otherwise $\leftpi(z) > \leftpi(\beta)$ for all $\beta \in T_{u_{i+1}}$, which in particular would imply that $\leftpi(z) > \leftpi(u) + n_T(u)$. Hence $t_{u_i}(z) > t_{u_i}(u_{i+1})$. From \Cref{claim:1} ((i) and (iv)) we deduce that $z\notin F_e$. Moreover, every node in $T_z$ is outside $F_e$, in particular $\alpha \notin \inside{F_e}$. 

Finally, let us assume that $\leftpi(\alpha) > \leftpi(v)$ and show that $\alpha$ cannot belong to $\tilde{F_e} \setminus (T_u \cup T_v)$. First, from \Cref{claim:2} we know that the nodes $z$ in $P_v\setminus\{w\}$ satisfy that $\leftpi(z)\leq \leftpi(v)$. Then necessarily $\alpha \notin P_v$. Now let us show that $\alpha \notin \inside{F_e}$. If $\leftpi(\alpha)>\leftpi(w)+n_T(w)$ then again $\alpha\notin T_w$, which implies that $\alpha \notin F_e$. It remains the case when $\leftpi(\alpha) \in [\leftpi(v)+1, \leftpi(w) + n_T(w)]$. We pick the node $z\in T_w$ such that $\alpha \in T_z$, $z\in  [\leftpi(v)+1, \leftpi(w) + n_T(w)]$ and $\leftpi(z)$ is minimum. We have that $z$ is a neighbor of one of the nodes in $P_v$. Let $i\in \{0, \dots, k'\}$ be such that $z\in N(v_i)$. We have that $\leftpi(z)>\leftpi(v_{i+1})$ as otherwise, by \Cref{claim:2}, $\leftpi(z)$ would be smaller than $\leftpi(v)$. Therefore $t_{v_{i}}(z)< t_{v_{i}}(v_{i+1})$. From \Cref{claim:1} ((i) and (v)) we deduce that $z\notin \inside{F_e}$, which implies that $\alpha \notin \inside{F_e}$. 
\end{proof}

\Cref{claim:3} implies that the number of nodes in $\tilde{F}_e \setminus (T_u \cup T_v \cup \{w\})$ equals $$(\leftpi(v)-1)-(\leftpi(u)+n_T(u)+1) + 1 = \leftpi(v)-(\leftpi(u)+n_T(u)) -1.$$ To obtain the number of nodes in $\tilde{F}_e$ we have to add to this quantity the number of nodes in $\tilde{F}_e\cap (T_u \cup T_v)$, plus one for $w$.   The number of nodes in $\tilde{F}_e \cap T_u$ equals $P_{F_e}(u)$, and the number of nodes in $\tilde{F}_e \cap T_v$ equals $P_{F_e}(v) + 1$ (as $v$ is included in $\tilde{F}_e$). Therefore, the total number of nodes in $\tilde{F}_e$ is $\omega\left(F_e\right) = P_{F_e}(u) + P_{F_e}(v) + \leftpi(v)-(\leftpi(u)+n_T(u)) +1$.
\end{proof}

\begin{lemma}\label{lem:weighting2}
    Given a graph configuration $(G,\mathcal{E},T)$ and a real fundamental edge $e=uv$, such $u$ is an ancestor of $v$. Then $\omega\left(F_e\right)$ equals the number of nodes in  $\inside{F_e}$.
\end{lemma}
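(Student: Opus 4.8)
The plan is to mirror, mutatis mutandis, the argument for Lemma~\ref{lem:weighting}, with the single downward $T$-path from $u$ to $v$ playing the role that the two branches $P_u,P_v$ had in the non-ancestor case. First, by a left--right mirror symmetry of the embedding (reflecting $\mathcal{E}$ reverses every local order, hence turns $\leftpi$ into $\rightpi$ and interchanges the $\mathcal{E}$-left and $\mathcal{E}$-right orientations, while leaving $F_e$, $p_{F_e}(\cdot)$, $d_T(\cdot)$, $n_T(\cdot)$ unchanged), it suffices to treat the case in which $e=uv$ is $\mathcal{E}$-left oriented; in that case the relevant order is $\pi:=\leftpi$, as prescribed by Definition~\ref{def:weights}, and note that $\leftpi(u)<\leftpi(v)$ holds automatically since $u$ is an ancestor of $v$. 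As in the proof of Lemma~\ref{lem:weighting} we normalise the local orders so that each non-root node has its parent edge in a fixed position (this costs no communication). Write the $T$-path from $u$ to $v$ as $P:\ u=w_0,\ z=w_1,\ w_2,\ \dots,\ w_\ell=v$, where $\ell=d_T(v)-d_T(u)\ge 2$ (because $e\notin E(T)$) and $d_T(v)-d_T(z)=\ell-1$; the node set of $P$ is exactly the border $C_e$, and $F_e$ is the bounded face of $T+e$ (the one not containing the virtual root).

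The first step is the analogue of \Cref{claim:1}: a combinatorial description of which neighbours of $C_e$ lie in $\inside{F_e}$. Using that $F_e$ is the face of $T+e$ avoiding the virtual root, together with the orientation of the Jordan curve $C_e\cup\{e\}$, one shows that (i) a neighbour $\beta$ of an interior path node $w_i$ ($1\le i\le \ell-1$) lies in $\inside{F_e}$ iff $\beta$ is a $T$-child of $w_i$ on the side of the edge $w_iw_{i+1}$ singled out by the $\mathcal{E}$-left orientation; (ii) a neighbour of $u$ lies in $\inside{F_e}$ iff it is a $T$-child of $u$ whose position in $t_u$ lies strictly between those of the edges $uv$ and $uz$; and (iii) a neighbour of $v$ lies in $\inside{F_e}$ iff it is a $T$-child of $v$ on the side of $vw_{\ell-1}$ singled out by the orientation. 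Moreover, if a node lies inside $F_e$ then so does its whole subtree. Hence $\inside{F_e}$ partitions as $A\sqcup B\sqcup C$, where $A=\inside{F_e}\cap(T_u\setminus T_z)$ is the union of the subtrees hanging inside $F_e$ off $u$, $C=\inside{F_e}\cap T_v$ is the union of the subtrees hanging inside $F_e$ off $v$ and its descendants, and $B=\inside{F_e}\cap(T_z\setminus T_v)$ is the union of the subtrees hanging inside $F_e$ off the interior path nodes $w_1,\dots,w_{\ell-1}$. In this (ancestor) case $p_{F_e}(u)$ and $p_{F_e}(v)$ count the nodes of $\inside{F_e}$ lying respectively in $T_u\setminus T_z$ and in $T_v$, so $p_{F_e}(u)=|A|$ and $p_{F_e}(v)=|C|$.

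The second step is the analogue of \Cref{claim:2} and \Cref{claim:3}: in the order $\pi$, the set of nodes with $\pi$-value in $[\pi(z),\pi(v)]$ is exactly $\{w_1,\dots,w_\ell\}\cup B$. Indeed, at $u$ the children producing $A$ have positions on the far side of $z$ in $t_u$, so the whole subtree $T_z$ is visited before any node of $A$; hence $A$ gets $\pi$-values larger than every value in $T_z$, in particular larger than $\pi(v)$. Inside $T_z$, at each interior path node $w_i$ the inside-face children of $w_i$ are precisely the ones explored immediately after $w_i$ and before $w_{i+1}$, so their subtrees (the part of $B$ hanging off $w_i$) receive $\pi$-values in the interval $\bigl(\pi(w_i),\pi(w_{i+1})\bigr)$, while the remaining (outside-face) subtrees off $w_i$ are explored only after $w_{i+1}$, hence after $v$; and the subtrees forming $C$ hang off $v$ and its descendants, so they too are explored after $v$. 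Counting the integers in $[\pi(z),\pi(v)]$ in two ways gives $\pi(v)-\pi(z)+1=\ell+|B|$, i.e. $|B|=\bigl(\pi(v)-\pi(z)\bigr)-\bigl(d_T(v)-d_T(z)\bigr)$, whence
\[
\omega(F_e)=p_{F_e}(v)+p_{F_e}(u)+\bigl(\pi(v)-\pi(z)\bigr)-\bigl(d_T(v)-d_T(z)\bigr)=|C|+|A|+|B|=|\inside{F_e}|,
\]
which is the assertion; the $\mathcal{E}$-right case follows by the mirror symmetry, with $\rightpi$ in place of $\leftpi$.

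The main obstacle is steps (i)--(iii): the geometric bookkeeping that pins down, at each border node and especially at the endpoints $u$ and $v$, which side of the boundary cycle is the interior of $F_e$, and translates this into an inequality among the positions $t_{w_i}(\cdot)$. This is precisely where the hypothesis ``$e$ is $\mathcal{E}$-left oriented'' and the normalisation of local orders are used, and it amounts to tracking the orientation of the Jordan curve $C_e\cup\{e\}$ relative to the virtual root, exactly as in the proof of \Cref{claim:1}. Once that is settled, the remainder is the same interval-counting computation as in \Cref{claim:2}--\Cref{claim:3}, so the whole proof runs closely parallel to the non-ancestor case.
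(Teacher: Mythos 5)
Your proof is correct and follows essentially the same route as the paper: a combinatorial description of which neighbours of the boundary path lie in $\inside{F_e}$ (the analogue of \Cref{claim:1}, and of the paper's \Cref{claim:p1}), an interval characterisation in $\pi$ of the nodes hanging off the interior path vertices $w_1,\dots,w_{\ell-1}$ (the analogue of the paper's \Cref{claim1p2}), and then a count. The only presentational differences are that you make the left--right mirror reduction explicit at the outset (the paper just says the other orientation is ``completely analogous''), and that you state explicitly that in the ancestor case $p_{F_e}(u)$ must be read as $|\inside{F_e}\cap(T_u\setminus T_z)|$ rather than $|\inside{F_e}\cap T_u|$ (which would be all of $\inside{F_e}$, since every interior node is a descendant of $u$); this reading is the one \Cref{def:weights} needs to be correct, and is what the paper's own proof and \Cref{fig:weightancestors} implicitly use.
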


\begin{proof}
Let us call $P_{uv} \colon u = w_0, w_1, \dots, w_{k+1} = v$ the path in $T$ from $u$ to $v$, for appropriate $k\geq 0$. Observe that since $u$ is an ancestor of $v$, then, for each $i \in \{0, \dots, k\}$, $\leftpi(w_i)<\leftpi(w_{i+1})$. Notice also that  the set of nodes in $P_{uv}$ corresponds to the border of $F_e$. 

In the following we assume that $t_u(v) > t_u(w_1)$. Therefore, we aim to prove that the number of nodes in $\inside{F_e}$ equals $\omega\left(F_e\right) = P_{F_e}(v) + P_{F_e}(u) + (\leftpi(v)-\leftpi(w_1)) - (d_T(v)-d_T(w_1))$. The case $t_u(v) < t_u(w_1)$ is completely analogous, switching the orders of some inequalities in the appropriate places. 

\begin{claim}\label{claim:p1}
Let $z\in N(P_{uv})$. We have that $z \in \inside{F_e}$ if and only if one of the following conditions hold:
\begin{enumerate}[label=(\roman*)]
\item $z\in N(u)$ and  $t_w(w_1)<t_w(z)<t_w(v_1)$.
\item $z \in N(v)$ and $t_v(u) < t_v(z)$.
\item There is a $i\in  \{1,\dots, k\}$ such that $z\in N(w_i)$ and $t_{w_i}(w_{i+1})<t_{w_i}(z)$.
\end{enumerate}
\end{claim}

The proof of this claim is completely analogous to the proof of \Cref{claim:1}, and can be obtained by contracting the path $P_u$ defined on the proof of \Cref{lem:weighting} into a single node.

\begin{claim}\label{claim1p2}
        Let \(z\in N(P_{uv})\). We have that \(z\in (F_e \cap T_{w_1}) \setminus T_v\) if and only if \[(\leftpi(z)\in [\leftpi(w_1),\leftpi(v)-1].\]
\end{claim}

\begin{proof}[Proof of \Cref{claim1p2}]

Let us pick an arbitrary \(z\in (F_e \cap T_{w_1}) \setminus T_v\). If $z$ is in $P_{uv}\setminus\{u,v\}$ we directly have that $ \leftpi(z)\in [\leftpi(w_1),\leftpi(v)]$, since for each $i \in \{0, \dots, k\}$, $\leftpi(w_1) \leq \leftpi(w_i)<\leftpi(w_{i+1}) \leq \leftpi(v)$. Then, let us suppose that $z$ is in $z\in (\inside{F_e} \cap T_{w_1}) \setminus T_v\). In this case, there must exist a node $\alpha$ in $N(P_{uv}\setminus \{u,v\}) \cap \inside{F_e}$ such that $z\in T_\alpha$. Let us pick $i\in \{1, \dots, k\}$ such that $\alpha \in N(w_i)$. By \Cref{claim:p1},  $t_{w_i}(\alpha) > t_{w_i}(w_{i+1})$. Hence, in a \leftorder\ the nodes in $T_{\alpha}$ are visited after $w_{i}$ and before $w_{i+1}$. We deduce that $\leftpi(w_i) < \leftpi(\alpha) < \leftpi(z) < \leftpi(w_{i+1})$. Since for every $i \in \{1, \dots, k\}$ we have that $\leftpi(w_1)\leq \leftpi(w_i)$ and $\leftpi(w_{i+1}) \leq \leftpi(v)$, we conclude that $\leftpi(z)\in [\leftpi(w_1), \leftpi(v)]$.

Conversely, let us suppose that $\leftpi(z) \notin [\leftpi(w_1), \leftpi(v)-1]$ and let  us show that $z \notin (F_e \cap T_{w_1}) \setminus T_v$. Since each node in $P_{uv}\setminus{u}$ is visited after $w_1$ and before $v$, we have that $z$ cannot belong to the border. Let us show that $z$ cannot be  $(\inside{F_e} \cap T_{w_1}) \setminus T_v$ either. If $\leftpi(z) < \leftpi(w_1)$ or $\leftpi(z) > \leftpi(w_1) + n_T(w_1)$  we directly have that $z$ is not in $T_{w_1}$ and in particular $z \notin (\inside{F_e}  \cap T_{w_1}) \setminus T_v$. If $z \in [\leftpi(v), \leftpi(v) + n_T(v)]$, then $z$ must be contained in $T_v$, which implies that $z\notin  (\inside{F_e} \cap T_{w_1}) \setminus T_v$. It remains the case where $z \in [\leftpi(v) + n_T(v) + 1, \leftpi(w_1) + n_T(w_1)]$. In this case, there must exist and index $i\in \{1,\dots, k\}$ and a node $\alpha \in N(w_i)$ such that  $z\in T_\alpha$. If $\leftpi(\alpha) < \leftpi(w_{i+1})$ then we would have that in a \leftorder\ the nodes of $T_\alpha$ would be visited before $w_{i+1}$. In particular,  $z$ would be visited before $v$, contradicting the fact that $\leftpi(z) > \leftpi(v)$. We deduce that $t_{w_{i}}(\alpha) < t_{w_i}(w_{i+1})$. By \Cref{claim:p1} this implies that $\alpha \notin \inside{F_e}$. Moreover, by planarity no vertex of $T_{\alpha}$ is included in $\inside{F_e}$. In particular $z\notin (\inside{F_e}  \cap T_{w_1}) \setminus T_v$.
\end{proof}

    From \Cref{claim1p2} we deduce that the number of nodes in $(F_e \cap T_{w_1}) \setminus T_v$ equals $\leftpi(v) - \leftpi(w_1)$. From here we can obtain the number of nodes in $F_e$ by adding the number of nodes in $(T_u \cup T_v) \cap \inside{F_e}$, which equals $P_{F_e}(u) + P_{F_e}(v)$, plus $2$ as we need to include $u$ and $v$. Hence, the number of nodes in $F_e$ is $P_{F_e}(u) + P_{F_e}(v) + (\pi(v) - \pi(w_1)) + 2$ Finally, to obtain the number of nodes in $\inside{F_e}$ we have to subtract the nodes that are in the border of $F_e$, which equals to the number of nodes in the path $P_{uv}$. Since $v$ is a descendant of $v$, we have that this number corresponds to $d_T(v) - d_T(u) + 1 = d_T(v) - d_T(w_1) + 2$. We deduce that the number of vertices in $\inside{F_e}$ is $\omega\left(F_e\right)$.
\end{proof}
See \Cref{fig:weightancestors} and \Cref{fig:weightancestors2} to see a graphical example of \Cref{lem:weighting} and \Cref{lem:weighting2}, respectively. It is direct from definition that if a fundamental face $F_e$ is contained in other fundamental face $F_f$, then $\omega(F_e)\leq \omega(F_f)$. We say that $\omega$ is an increasing function for contained faces.

Finally, if a weight of any fundamental face is between the range $[n/3,2n/3]$, the $T$-path between the extremes of $e$ forms a separator set.

\begin{lemma}\label{teo:balancedface}
    Let $(G,\mathcal{E}, T)$ by a planar configuration and $F_e$ a fundamental face of $T$. If $\omega(F_e)\in [n/3, 2n/3]$ with $n=|V|$, then nodes in the $T$-path $P_e$ between the extremes of $e$ are a separator set of $G$. This path $P_e$ is a cycle separator of $G$.
\end{lemma}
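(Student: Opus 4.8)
The plan is to establish two facts: first, that deleting the $T$-path $P_e := C_e$ splits the vertices of $G$ into those lying strictly inside the face $F_e$ and those lying strictly outside, with no edge of $G$ joining the two sides; and second, that neither side contains more than $2n/3$ vertices, so that every connected component of $G-V(P_e)$ has at most $2n/3$ vertices. Combined with the observation that $P_e$ together with $e$ is a cycle of $G$ (when $e$ is real) or a path of $G$ whose endpoints can be joined by an edge without crossing (when $e$ is virtual), this gives both conclusions of the statement.

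I would start with the topological step. The border $C_e$ is a $T$-path, hence a sequence of $G$-edges drawn without crossings in $\mathcal{E}$; together with $e$ it bounds a Jordan curve $\gamma$ in the plane, and, as fixed in \Cref{sec:technicaldetails}, the inside of $F_e$ is the bounded region cut out by $\gamma$, with $\inside{F_e}$ exactly the set of vertices of $G$ lying in that region. Writing $O := V \setminus \big(V(C_e) \cup \inside{F_e}\big)$ for the set of vertices drawn in the unbounded region, we get a disjoint decomposition $V = V(C_e) \sqcup \inside{F_e} \sqcup O$. Any edge of $G$ with one endpoint in $\inside{F_e}$ and the other in $O$ is drawn as an arc joining the two sides of $\gamma$, hence must cross $\gamma$; but it crosses no edge of $C_e$ (two edges never cross in the planar embedding $\mathcal{E}$), and it does not cross $e$ (if $e \in E(G)$ this is again planarity of $\mathcal{E}$; if $e$ is virtual, then, since $F_e \in \mathcal{F}_e$ comes from an $\mathcal{E}$-compatible insertion of $e$, the inserted edge crosses no edge of $G$) — a contradiction. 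Therefore every connected component of $G - V(P_e)$ lies entirely in $\inside{F_e}$ or entirely in $O$, and it suffices to prove $|\inside{F_e}| \le 2n/3$ and $|O| \le 2n/3$.

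The second step uses the hypothesis $n/3 \le \omega(F_e) \le 2n/3$ together with the characterizations of $\omega(F_e)$ from \Cref{lem:weighting} and \Cref{lem:weighting2}, split according to whether $u$ is an ancestor of $v$. If $u$ is an ancestor of $v$, then \Cref{lem:weighting2} gives $|\inside{F_e}| = \omega(F_e)$, so $|\inside{F_e}| \le 2n/3$ and $|O| = n - |V(C_e)| - \omega(F_e) \le n - \omega(F_e) \le 2n/3$. If $u$ is not an ancestor of $v$, let $w$ be the LCA of $u$ and $v$ in $T$; by \Cref{lem:weighting}, $\omega(F_e) = |\tilde F_e|$ with $\tilde F_e = \inside{F_e} \cup Q$, where $Q$ is the $T$-path from $w$ to $v$, which is a subpath of $C_e$, while $\inside{F_e} \cap V(C_e) = \emptyset$. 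Hence $\inside{F_e} \subseteq \tilde F_e \subseteq \inside{F_e} \cup V(C_e)$, which yields $|\inside{F_e}| \le \omega(F_e) \le 2n/3$ and $|\inside{F_e}| + |V(C_e)| \ge \omega(F_e) \ge n/3$, so $|O| = n - |V(C_e)| - |\inside{F_e}| \le 2n/3$. This shows $P_e$ is a separator set of $G$. Finally, $P_e = C_e$ is a path of $G$ whose endpoints $u, v$ are joined by $e$ — a real edge of $G$ or a virtual edge inserted into $\mathcal{E}$ crossing no edge of $G$ — so $P_e$ is, by definition, a cycle separator of $G$.

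The only delicate point I anticipate is the bookkeeping in the non-ancestor case: one must make sure that $\tilde F_e$ decomposes as the disjoint union of $\inside{F_e}$ and a subpath of the border $C_e$, so that $|\tilde F_e|$ can be sandwiched between $|\inside{F_e}|$ and $|\inside{F_e}| + |V(C_e)|$. Everything else reduces to the Jordan-curve observation that no edge of $G$ crosses $\gamma$, plus elementary arithmetic with the bounds $n/3 \le \omega(F_e) \le 2n/3$.
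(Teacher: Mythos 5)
Your proof is correct and takes essentially the same route as the paper's: decompose $V$ into the border $V(C_e)$, the interior $\inside{F_e}$, and the exterior; invoke the Jordan-curve observation (with the $\mathcal{E}$-compatibility of $e$ covering the virtual case) to show no edge of $G$ joins interior to exterior; and use the sandwich $\inside{F_e}\subseteq\tilde F_e\subseteq\inside{F_e}\cup V(C_e)$ from \Cref{lem:weighting} and \Cref{lem:weighting2} together with $n/3\le\omega(F_e)\le 2n/3$ to bound both sides by $2n/3$. Your case split by ancestry and the explicit disjoint-union bookkeeping are a slightly more careful write-up of what the paper does in one stroke; the substance is identical.
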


\begin{proof}
    Let $F_e$ be a real fundamental face of $T$, $e=uv$ and $P_e$ be the $T$-path between $u$ and $v$. Then, given the combinatorial embedding $\mathcal{E}$, $G-P_e$ is divided in two disconnected (between them) zones  $\inside{F}_e^c$ and $\inside{F}_e$ as depicted in \Cref{fig:zonesinG}.

\begin{figure}[h!]
    \centering
    \scalebox{0.75}{\input{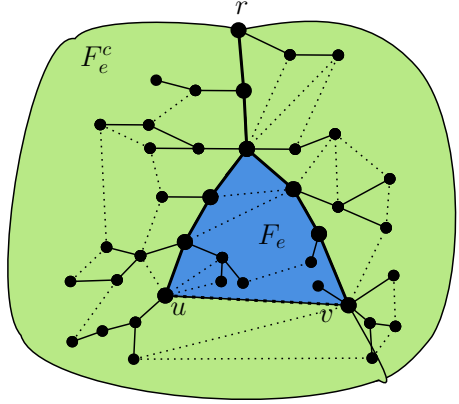}}
    \caption{$\inside{F}_e$ corresponds to the nodes inside $F_e$ in blue zone and $F_e^c$ to the nodes outside $F_e$ in green zone.}
    \label{fig:zonesinG}
\end{figure}

Each connected component of $G-P_e$ is either $\inside{F}_e$ or $\inside{F}_e^c$. As $\omega(F_e)$ upper bound the number of nodes in $\inside{F}_e$ (\Cref{lem:weighting,lem:weighting2}) and $\omega(F_e)\leq 2n/3$, then each connected component in $\inside{F}_e$ has at most $2n/3$ nodes. Since no node of $\inside{F}_e^c$ is counted in $\omega(F_e)$ and $\omega(F_e)\geq n/3$, then $\inside{F}_e^c\leq n -\omega(F_e)\leq 2n/3$, and then each connected component in $\inside{F}_e^c$ has at most $2n/3$ nodes. Then, each connected component of $G-P_e$ has at most $2n/3$ nodes and then the nodes of $P_e$ forms a separator set of $G$.

If $F_e\in \mathcal{F}_e$ for some $\mathcal{E}$-compatible virtual fundamental edge $e=uv$, then if we consider the embedding $\mathcal{E}'$ given by $\mathcal{E}$ plus the edge $e$ inserted in $\mathcal{E}$ such that $F_e$ is formed in $G'=(V,E+e)$ in the embedding $\mathcal{E}'$, then by analogous argument, $G'$ is divided in two zones in $\mathcal{E}'$, the nodes inside $F_e$ and the nodes outside $F_e$. As $\omega(F_e)\in [n/3,2n/3]$, by the same above argument each connected component of $G'- P_e$ has size at most $2n/3$, and finally, each connected component in $G-P_e$ is contained in a connected component of $G'-P_e$. Then, we conclude that the size of each connected component of $G-P_e$ has also at most $2n/3$ nodes and therefore, the nodes in $P_e$ are also a separator set of $G$.
\end{proof}

\subsection{Properties and characterization of real fundamental faces.}\label{prop:insidenotchildren}

It is possible to characterize the nodes inside an specific real fundamental face through the left and right DFS orders of the trees.

\begin{remark}\label{prop:caracinside}
    Given a planar configuration $(G,\mathcal{E},T)$, the left $\leftpi^T$ and right $\rightpi^T$ DFS order of $T$ and a $T$-real fundamental face $F_e$, defined by a $T$-real fundamental edge $e=uv$ such that $\leftpi(u)<\leftpi(v)$, a node $z\notin V(T_u)\cap V(T_v)$ is inside $F_e$ if and only if one of the following conditions holds
    \begin{enumerate}
        \item Node $u$ is not ancestor of $v$ and $\leftpi(u)<\leftpi(z)<\leftpi(v)$.
        \item Node $u$ is ancestor of $v$, $e$ is a $\mathcal{E}$-left oriented edge and $\leftpi(u)<\leftpi(z)<\leftpi(v)$.
        \item Node $u$ is ancestor of $v$, $e$ is a $\mathcal{E}$-right oriented edge and $\rightpi(u)<\rightpi(z)<\rightpi(v)$.
    \end{enumerate}
\end{remark}

\subsection{Properties of augmented faces.}

In the distributed algorithm designed to compute cycle separators, it is necessary at some point to verify whether certain nodes are \((T, F_e)\)-compatible for a given fundamental face \(F_e\). The nodes that need to be verified for \((T, F_e)\)-compatibility are the leaves of \(T\) inside $F_e$. Therefore, we characterize these nodes, which are \((T, F_e)\)-compatible with one endpoint of \(e\).

\begin{definition}\label{def:contained}
    Given a planar configuration $(G,\mathcal{E},T)$ and a fundamental face $F_e$, we say that a  node $z\in \inside{F}_e$  is {\it hidden} in $F_e$ if there exist other real edge $f=z_1z_2$ contained in $F_e$ such that $z$ is also inside $F_f$, i.e., $z\in \inside{F}_f$, and one of the following conditions holds
    \begin{enumerate}
        \item $z_1,z_2\neq u$
        \item $z_1 = u$ or $z_2 = u$ and $V(T_u)\cap V(F_e)\not \subseteq V(F_f)$.
    \end{enumerate}

    In this case we say that the real fundamental edge $f$ \emph{hides} $z$. 
\end{definition}

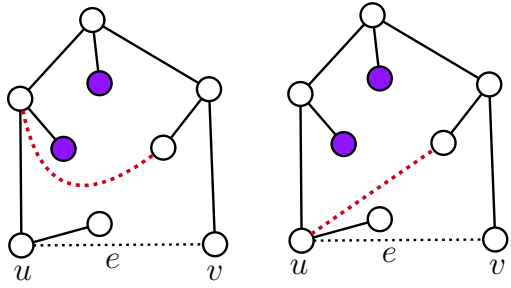
\begin{figure}
    \centering
    \scalebox{0.6}{\tikzset{every picture/.style={line width=0.75pt}} 

\begin{tikzpicture}[x=0.75pt,y=0.75pt,yscale=-1,xscale=1]

\draw  [color={rgb, 255:red, 0; green, 0; blue, 0 }  ,draw opacity=1 ][line width=1.5]   (456, 39) circle [x radius= 10, y radius= 10]   ;
\draw (450,36.4) node [anchor=north west][inner sep=0.75pt]  [font=\tiny]  {$$};
\draw  [color={rgb, 255:red, 0; green, 0; blue, 0 }  ,draw opacity=1 ][line width=1.5]   (553, 97) circle [x radius= 10, y radius= 10]   ;
\draw (547,94.4) node [anchor=north west][inner sep=0.75pt]  [font=\tiny]  {$$};
\draw  [color={rgb, 255:red, 0; green, 0; blue, 0 }  ,draw opacity=1 ][line width=1.5]   (396, 105) circle [x radius= 10, y radius= 10]   ;
\draw (390,102.4) node [anchor=north west][inner sep=0.75pt]  [font=\tiny]  {$$};
\draw  [color={rgb, 255:red, 0; green, 0; blue, 0 }  ,draw opacity=1 ][line width=1.5]   (397, 228) circle [x radius= 10, y radius= 10]   ;
\draw (391,225.4) node [anchor=north west][inner sep=0.75pt]  [font=\tiny]  {$$};
\draw  [color={rgb, 255:red, 0; green, 0; blue, 0 }  ,draw opacity=1 ][fill={rgb, 255:red, 144; green, 19; blue, 254 }  ,fill opacity=1 ][line width=1.5]   (432, 147) circle [x radius= 10, y radius= 10]   ;
\draw (426,144.4) node [anchor=north west][inner sep=0.75pt]  [font=\tiny]  {$$};
\draw  [color={rgb, 255:red, 0; green, 0; blue, 0 }  ,draw opacity=1 ][fill={rgb, 255:red, 144; green, 19; blue, 254 }  ,fill opacity=1 ][line width=1.5]   (462, 92) circle [x radius= 10, y radius= 10]   ;
\draw (456,89.4) node [anchor=north west][inner sep=0.75pt]  [font=\tiny]  {$$};
\draw  [color={rgb, 255:red, 0; green, 0; blue, 0 }  ,draw opacity=1 ][line width=1.5]   (558, 227) circle [x radius= 10, y radius= 10]   ;
\draw (552,224.4) node [anchor=north west][inner sep=0.75pt]  [font=\tiny]  {$$};
\draw  [color={rgb, 255:red, 0; green, 0; blue, 0 }  ,draw opacity=1 ][line width=1.5]   (515, 146) circle [x radius= 10, y radius= 10]   ;
\draw (509,143.4) node [anchor=north west][inner sep=0.75pt]  [font=\tiny]  {$$};
\draw  [color={rgb, 255:red, 0; green, 0; blue, 0 }  ,draw opacity=1 ][line width=1.5]   (462, 210) circle [x radius= 10, y radius= 10]   ;
\draw (456,207.4) node [anchor=north west][inner sep=0.75pt]  [font=\tiny]  {$$};
\draw (395,250) node {\huge $u$};
\draw (560,250) node {\huge $v$};
\draw (470,240) node  {\huge $e$};
\draw  [color={rgb, 255:red, 0; green, 0; blue, 0 }  ,draw opacity=1 ][line width=1.5]   (223, 43) circle [x radius= 10, y radius= 10]   ;
\draw (217,40.4) node [anchor=north west][inner sep=0.75pt]  [font=\tiny]  {$$};
\draw  [color={rgb, 255:red, 0; green, 0; blue, 0 }  ,draw opacity=1 ][line width=1.5]   (320, 101) circle [x radius= 10, y radius= 10]   ;
\draw (314,98.4) node [anchor=north west][inner sep=0.75pt]  [font=\tiny]  {$$};
\draw  [color={rgb, 255:red, 0; green, 0; blue, 0 }  ,draw opacity=1 ][line width=1.5]   (163, 109) circle [x radius= 10, y radius= 10]   ;
\draw (157,106.4) node [anchor=north west][inner sep=0.75pt]  [font=\tiny]  {$$};
\draw  [color={rgb, 255:red, 0; green, 0; blue, 0 }  ,draw opacity=1 ][line width=1.5]   (164, 232) circle [x radius= 10, y radius= 10]   ;
\draw (158,229.4) node [anchor=north west][inner sep=0.75pt]  [font=\tiny]  {$$};
\draw  [color={rgb, 255:red, 0; green, 0; blue, 0 }  ,draw opacity=1 ][fill={rgb, 255:red, 144; green, 19; blue, 254 }  ,fill opacity=1 ][line width=1.5]   (199, 151) circle [x radius= 10, y radius= 10]   ;
\draw (193,148.4) node [anchor=north west][inner sep=0.75pt]  [font=\tiny]  {$$};
\draw  [color={rgb, 255:red, 0; green, 0; blue, 0 }  ,draw opacity=1 ][fill={rgb, 255:red, 144; green, 19; blue, 254 }  ,fill opacity=1 ][line width=1.5]   (229, 96) circle [x radius= 10, y radius= 10]   ;
\draw (223,93.4) node [anchor=north west][inner sep=0.75pt]  [font=\tiny]  {$$};
\draw  [color={rgb, 255:red, 0; green, 0; blue, 0 }  ,draw opacity=1 ][line width=1.5]   (325, 231) circle [x radius= 10, y radius= 10]   ;
\draw (319,228.4) node [anchor=north west][inner sep=0.75pt]  [font=\tiny]  {$$};
\draw  [color={rgb, 255:red, 0; green, 0; blue, 0 }  ,draw opacity=1 ][line width=1.5]   (282, 150) circle [x radius= 10, y radius= 10]   ;
\draw (276,147.4) node [anchor=north west][inner sep=0.75pt]  [font=\tiny]  {$$};
\draw  [color={rgb, 255:red, 0; green, 0; blue, 0 }  ,draw opacity=1 ][line width=1.5]   (229, 214) circle [x radius= 10, y radius= 10]   ;
\draw (223,211.4) node [anchor=north west][inner sep=0.75pt]  [font=\tiny]  {$$};
\draw (165,255) node  {\huge $u$};
\draw (325,255) node  {\huge $v$};
\draw (240,245) node {\huge $e$};
\draw [color={rgb, 255:red, 0; green, 0; blue, 0 }  ,draw opacity=1 ][line width=1.5]    (449.27,46.4) -- (402.73,97.6) ;
\draw [color={rgb, 255:red, 0; green, 0; blue, 0 }  ,draw opacity=1 ][line width=1.5]    (402.51,112.59) -- (425.49,139.41) ;
\draw [color={rgb, 255:red, 0; green, 0; blue, 0 }  ,draw opacity=1 ][line width=1.5]    (396.08,115) -- (396.92,218) ;
\draw [color={rgb, 255:red, 0; green, 0; blue, 0 }  ,draw opacity=1 ][line width=1.5]    (457.12,48.94) -- (460.88,82.06) ;
\draw [color={rgb, 255:red, 0; green, 0; blue, 0 }  ,draw opacity=1 ][line width=1.5]    (464.58,44.13) -- (544.42,91.87) ;
\draw [color={rgb, 255:red, 0; green, 0; blue, 0 }  ,draw opacity=1 ][line width=1.5]    (546.87,104.9) -- (521.13,138.1) ;
\draw [color={rgb, 255:red, 0; green, 0; blue, 0 }  ,draw opacity=1 ][line width=1.5]    (553.38,106.99) -- (557.62,217.01) ;
\draw [color={rgb, 255:red, 0; green, 0; blue, 0 }  ,draw opacity=1 ][line width=1.5]  [dash pattern={on 1.69pt off 2.76pt}]  (407,227.94) -- (548,227.06) ;
\draw [color={rgb, 255:red, 208; green, 2; blue, 27 }  ,draw opacity=1 ][line width=2.25]  [dash pattern={on 2.53pt off 3.02pt}]  (405.21,222.29) -- (506.79,151.71) ;
\draw [color={rgb, 255:red, 0; green, 0; blue, 0 }  ,draw opacity=1 ][line width=1.5]    (452.36,212.67) -- (406.64,225.33) ;
\draw [color={rgb, 255:red, 0; green, 0; blue, 0 }  ,draw opacity=1 ][line width=1.5]    (216.27,50.4) -- (169.73,101.6) ;
\draw [color={rgb, 255:red, 0; green, 0; blue, 0 }  ,draw opacity=1 ][line width=1.5]    (169.51,116.59) -- (192.49,143.41) ;
\draw [color={rgb, 255:red, 0; green, 0; blue, 0 }  ,draw opacity=1 ][line width=1.5]    (163.08,119) -- (163.92,222) ;
\draw [color={rgb, 255:red, 0; green, 0; blue, 0 }  ,draw opacity=1 ][line width=1.5]    (224.12,52.94) -- (227.88,86.06) ;
\draw [color={rgb, 255:red, 0; green, 0; blue, 0 }  ,draw opacity=1 ][line width=1.5]    (231.58,48.13) -- (311.42,95.87) ;
\draw [color={rgb, 255:red, 0; green, 0; blue, 0 }  ,draw opacity=1 ][line width=1.5]    (313.87,108.9) -- (288.13,142.1) ;
\draw [color={rgb, 255:red, 0; green, 0; blue, 0 }  ,draw opacity=1 ][line width=1.5]    (320.38,110.99) -- (324.62,221.01) ;
\draw [color={rgb, 255:red, 0; green, 0; blue, 0 }  ,draw opacity=1 ][line width=1.5]  [dash pattern={on 1.69pt off 2.76pt}]  (174,231.94) -- (315,231.06) ;
\draw [color={rgb, 255:red, 0; green, 0; blue, 0 }  ,draw opacity=1 ][line width=1.5]    (219.36,216.67) -- (173.64,229.33) ;
\draw [color={rgb, 255:red, 208; green, 2; blue, 27 }  ,draw opacity=1 ][line width=2.25]  [dash pattern={on 2.53pt off 3.02pt}]  (165.21,118.76) .. controls (179.06,187.59) and (215.29,199.95) .. (273.89,155.84) ;

\end{tikzpicture}}
    \caption{In both images, purple nodes are covered in $F_e$. To the left the dotted red edge hides the purple nodes and satisfy condition 1. of \Cref{def:contained}. To the right the dotted red edge hides the purple nodes and satisfy condition 2.}
    \label{fig:enter-label}
\end{figure}

Intuitively, let \(F_e\) be the face associated with a \(T\)-fundamental edge \(e=uv\). Any node hidden inside \(F_e\) according to the above definition cannot be connected to \(u\) by an edge without violating planarity. Since our algorithm later simulates edges between the endpoints of a fundamental edge and nodes inside the corresponding face, we must characterize the nodes in \(F_e\) that are \emph{not} hidden; namely, those that can be connected to \(u\) without breaking planarity. In the following lemma, it is shown that the notions of being (not) hidden in a fundamental face $F_e$ for a leaf of $T$ characterize the condition of being $(T,F_e)$-compatible inside a face.

\begin{lemma}\label{lemma:deltacharac}

Given a planar configuration $(G,\mathcal{E}, T)$ and a fundamental face $F_e$ with $e=uv$ and $\leftpi(u)<\leftpi(v)$. A node $z\in \inside{F}_e$ and is a leaf of $T$ is $(T,F_e)$-compatible with node $u$  if and only if it is not hidden in~$F_e$.
\end{lemma}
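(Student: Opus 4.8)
<br>

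The plan is to prove the biconditional in \Cref{lemma:deltacharac} by contraposition in both directions, using \Cref{def:contained} as the pivot: a leaf $z \in \inside{F}_e$ is $(T,F_e)$-compatible with $u$ if and only if it is \emph{not} hidden in $F_e$.

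\textbf{($\Leftarrow$) If $z$ is not hidden, then $z$ is $(T,F_e)$-compatible with $u$.} Suppose $z\in\inside{F}_e$ is a leaf of $T$ that is not hidden in $F_e$. I must exhibit a valid insertion of the virtual edge $f=uz$ into $\mathcal{E}$ satisfying both conditions of \Cref{def:lazyrechable}. For condition (2), since $z$ is a leaf, $V(T_z)\cap V(F_e)=\{z\}$, so the requirement reduces to ensuring that $V(T_u)\cap V(F_e)\subseteq V(F_f)$ — exactly the negation of condition 2 in \Cref{def:contained} applied to $f$. For condition (1), I need to show $f=uz$ can be drawn inside $F_e$ without crossing any edge of $G$. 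Here is where the ``not hidden'' hypothesis does the work: if no real edge inside $F_e$ separates $z$ from $u$ in the sense of \Cref{def:contained}, then the region of $\inside{F}_e$ ``facing'' $u$ and containing $z$ on its boundary is accessible; concretely, I would argue that the sequence of nodes on the boundary of the innermost face of $T+(\text{real edges inside }F_e)$ that contains $z$ must include $u$ (otherwise that innermost face would be bounded by a real edge hiding $z$, with both endpoints $\neq u$ — contradiction with condition 1 of \Cref{def:contained}), and inside that face the chord $uz$ can be routed freely. The $\mathcal{E}$-compatibility of the insertion (preserving local cyclic orders at $u$ and $z$) is automatic since we are inserting the new neighbor into a gap of the current rotation at each endpoint within that free face.

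\textbf{($\Rightarrow$) If $z$ is $(T,F_e)$-compatible with $u$, then $z$ is not hidden.} I prove the contrapositive: if $z$ is hidden, then no $(T,F_e)$-compatible insertion of $uz$ exists. Suppose a real edge $f'=z_1z_2$ inside $F_e$ hides $z$, so $z\in\inside{F}_{f'}$. Consider the two cases of \Cref{def:contained}. In case 1, $z_1,z_2\neq u$, so $u\notin \inside{F}_{f'}$ (as $u\in C_e$ lies on the border of $F_e$, hence outside any proper subface $F_{f'}$ not incident to $u$); then any curve from $u$ to $z$ drawn inside $F_e$ must cross the Jordan curve bounding $F_{f'}$, which consists of $T$-edges plus the real edge $f'$ — a crossing of a $G$-edge, contradicting condition (1) of \Cref{def:lazyrechable}. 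In case 2, one endpoint of $f'$ is $u$ but $V(T_u)\cap V(F_e)\not\subseteq V(F_{f'})$: I would argue that for \emph{any} valid insertion of $uz$ into $\mathcal{E}$ inside $F_e$, the resulting face $F_{uz}^\ell$ is contained in $F_{f'}$ (since $uz$, being drawn from $u$ without crossing $f'$, lies on the $z$-side of $f'$), hence $V(T_u)\cap V(F_e)\not\subseteq V(F_{uz}^\ell)$, violating condition (2) of \Cref{def:lazyrechable}. Either way, $z$ fails to be $(T,F_e)$-compatible with $u$.

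\textbf{Main obstacle.} The delicate point is the planarity/topology argument in the ($\Leftarrow$) direction: turning ``$z$ is not hidden'' into a concrete non-crossing routing of $uz$. I expect to need a clean statement that the boundary walk of the face containing $z$ in the plane graph $T + E_{\text{real}}(F_e)$ (real fundamental edges contained in $F_e$) passes through $u$, which itself requires carefully using \Cref{prop:caracinside} / \Cref{remark:caracinside} to control which nodes lie inside nested real fundamental faces, and using that $z$ is a leaf so it cannot be ``hidden behind'' its own descendants. I would isolate this as a sub-claim (the innermost real face containing the leaf $z$ is incident to $u$ whenever $z$ is not hidden) and prove it by taking a minimal (by inclusion) real fundamental face $F_{f'}$ with $z\in\inside{F}_{f'}$ and $u\notin V(F_{f'})$, then deriving that $f'$ hides $z$ via case 1 of \Cref{def:contained} — the contradiction. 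The rest is bookkeeping with Jordan curves and the definitions.
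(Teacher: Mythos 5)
Your second direction (hidden $\Rightarrow$ not compatible, proved via the two cases of \Cref{def:contained}) is essentially identical to the paper's, and it is correct. The genuine difference lies in the remaining direction: you prove ``not hidden $\Rightarrow$ compatible'' constructively, by routing $uz$ inside an innermost face of $T$ plus the real fundamental edges contained in $F_e$, whereas the paper proves the contrapositive ``not compatible $\Rightarrow$ hidden'' by starting from the canonical full-augmentation insertion of $uz$ — which satisfies condition~(2) of \Cref{def:lazyrechable} \emph{by construction} — and arguing that any crossing of that insertion must be witnessed by a real fundamental edge $f''$ which hides $z$. These are logically equivalent statements but genuinely different arguments: yours builds the insertion from scratch; the paper's starts with a canonical insertion and analyzes its failure mode.

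The gap in your ($\Leftarrow$) direction is sharper than the routing issue you flagged. You write that condition~(2) of \Cref{def:lazyrechable} ``reduces to $V(T_u)\cap V(F_e)\subseteq V(F_f)$ — exactly the negation of condition 2 in \Cref{def:contained} applied to $f$,'' but condition~(2) of \Cref{def:contained} quantifies over \emph{real} edges of $G$ contained in $F_e$, while $f=uz$ is a \emph{virtual} edge you are about to insert. Negating ``some real edge incident to $u$ hides $z$'' tells you that every real $uz'$ with $z\in\inside{F}_{z'u}$ satisfies $V(T_u)\cap V(F_e)\subseteq V(F_{uz'})$; it does not directly give you that there is a \emph{choice of rotation slot at $u$} for the virtual edge $uz$ under which $V(T_u)\cap V(F_e)\subseteq V(F_{uz})$. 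That second claim depends on how the descendants of $u$ sit relative to the $T$-path from $u$ to $z$, and your innermost-face routing handles only the no-crossing part of the construction, not which side of the cycle $C_{uz}$ the various branches of $T_u$ end up on. You flagged the Jordan-curve bookkeeping as your main obstacle, but the containment requirement is a separate, additional obstacle that your sketch does not touch. The paper's choice to work with ``not compatible $\Rightarrow$ hidden'' avoids this entirely: it insists up front that $f'=uz$ be inserted so that $V(T_u)\cap V(\inside{F}_e)\subseteq V(F_{f'})$ (the full augmentation), which exists as a plane drawing possibly with crossings; then only the crossing edge needs to be analyzed. If you want to keep your constructive route, you need an explicit sub-argument showing that the gap in $t_u$ adjacent to the child of $u$ leading toward $z$ (on the side of the innermost face that also contains the remaining $F_e$-children of $u$) makes all of $V(T_u)\cap V(F_e)$ fall inside $F_{uz}$ whenever $z$ is not hidden.
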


\begin{proof}

If \(z\) is a leaf of \(T\), is inside \(F_e\), and is hidden, then there exists a fundamental edge \(f\) contained in \(F_e\) such that \(z\) is inside \(F_f\) and satisfies condition (1) or (2) of \Cref{def:contained}. If condition (1) is met, nodes \(z\) and \(u\) are not \(\mathcal{E}\)-compatible, as any way to insert \(f' = uz\) into \(\mathcal{E}\) will cross edge \(f\), breaking planarity. If condition (2) is met, there exists a node \(u' \in V(T_u)\) that is not \(\mathcal{E}\)-compatible with \(z\). In this case, it is not possible to insert \(f' = uz\) such that all the \(T\)-children of \(u\) inside \(F_e\) are also inside \(F_f\), because this edge crosses \(f\) due to condition (2).

On the other hand, if \(z\) and \(u\) are not \((T, F_e)\)-compatible, then when inserting \(f' = uz\) such that all nodes in \(V(T_u) \cap V(\inside{F}_e)\) are within \(F_{f'}\), a fundamental edge \(f'' = u''v''\) contained in \(F_e\) is crossed. Since \(z\) is a leaf in \(T\), \(z\) must be inside \(F_{f''}\). If \(u''\) and \(v'' \neq u\), then condition (1) of \Cref{def:contained} is met. If \(u'' = u\) or \(v'' = u\), then it must be that \(V(T_u) \cap V(F_e) \not\subseteq V(F_{f''})\), because otherwise \(f'\) would not cross \(f''\), thereby satisfying condition (2).
\end{proof}

Given a planar configuration \((G, \mathcal{E}, T)\) and a fundamental face \(F_e\) of an edge \(e = uv\), a \textit{full augmentation from \(u\)} corresponds to adding all the edges to \(\mathcal{E}\) from \(u\) to the rest of the nodes \(z\) in \(F_e\) such that all the edges are contained in $F_e$ and all the \(T\)-children of both nodes \(u\) and \(z\) that are in \(F_e\), are also in the face defined by this insertion of $f=uz$. For nodes \((T, F_e)\)-compatible with \(u\), this augmentation creates the same face \(F_{uz}^\ell\) as defined by \((T, F_e)\)-compatibility. For nodes \(z\) that are not \((T, F_e)\)-compatible with \(u\), we abuse notation and also refer to the face \(F_{uz}\) created by this full augmentation as \(F_{uz}^\ell\). See \Cref{fig:fullaug}.

\begin{figure}[h!]
    \centering
    \resizebox{\textwidth}{!}{\input{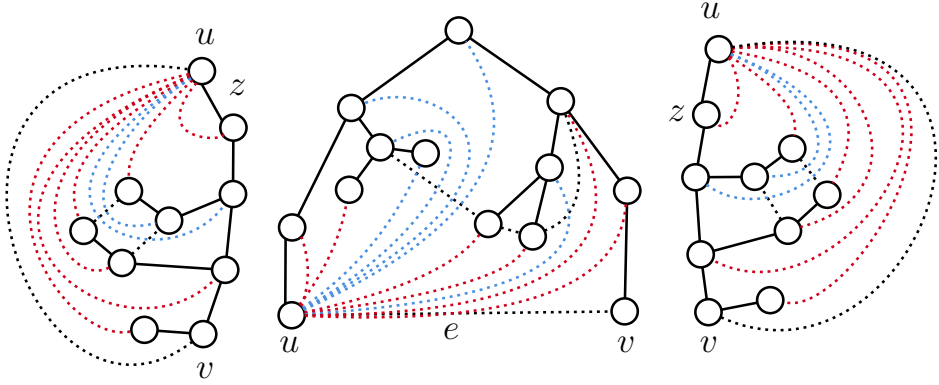}}
    \caption{Full augmentation from a node $u$. Nodes connected with red dotted deges are $(T,F_e)$-compatible, while the ones connected with blue dotted edges  are not.}
    \label{fig:fullaug}
\end{figure}

\begin{remark}\label{lemma:propaug}
    Given a planar configuration $(G,\mathcal{E},T)$ and a $T$-real fundamental face $F_e$ of a $T$-real fundamental edge $e=uv$ such that $\leftpi^T(u)<\leftpi^T(v)$, the following properties holds.

    \begin{enumerate}

        \item Given two nodes $z_1,z_2\in \inside{F}_e$ such that $z_1$ neither is ancestor of $z_2$ nor vice versa, then if $e$ is such that $u$ is not ancestor of $v$, or $e$ is $\mathcal{E}$-left oriented edge, then $\omega\left(F_{uz_1}^\ell\right)\leq  \omega\left(F_{uz_2}^\ell\right)$ iff $\leftpi^T(z_1)\leq \leftpi^T(z_2)$.
        \item Given two nodes $z_1,z_2\in \inside{F}_e$ such that $z_1$ neither is ancestor of $z_2$ nor vice versa, then if $e$ is $\mathcal{E}$-right oriented edge, then $\omega\left(F_{uz_1}\right)\leq  \omega\left(F_{uz_2}\right)$ iff $\rightpi^T(z_1)\leq \rightpi^T(z_2)$.

        \item If $e$ is such that $u$ is not ancestor of $v$, or $e$ is $\mathcal{E}$-left oriented edge, then given a node $z\in \inside{F}_e$, the $T_i$-leaf $t$ that is a descendant of $z$ in $T$ and has the higher position in $\leftpi^T$ satisfy that $\omega\left(F_{uz}\right)=  \omega\left(F_{ut}\right)$
        \item If $e$ is $\mathcal{E}$-right oriented edge, then given a node $z\in \inside{F}_e$, the $T_i$-leaf $t$ that is a descendant of $z$ in $T$ and has the higher position in $\rightpi^T$ satisfy that $\omega\left(F_{uz}\right)=  \omega\left(F_{ut}\right)$.
        
    \end{enumerate}
\end{remark}

\subsection{Proof of \Cref{teo:fundamentalseparator}.}\label{subsec:proofteo}

Recall that \Cref{teo:fundamentalseparator} states that in any planar configuration $(G,\mathcal E,T)$ there exists a (real or virtual) fundamental edge $e$ that defines a cycle separator, and satisfies one of the properties stated in the Lemma.
\Cref{lem:weighting,lem:weighting2} state that our weighting function does not always compute the exact number of nodes inside $F_e$, but instead the number of nodes in $\tilde{F}_e$ when $u$ and $v$ are not descendants of each other. Nevertheless, this is enough to prove \Cref{teo:fundamentalseparator}, but first we prove two lemmas that will be useful.

\begin{lemma}
    \label{lemma:balancedincritical} Given a planar configuration $(G,\mathcal{E},T)$ such that $\omega(F_e)\notin[n/3,2n/3]$ for every real  fundamental face of \(T\), there exists  a $T$-real fundamental edge $e=uv$ such that $\omega(F_e)>2n/3$ and $\omega(F_f)<n/3$ for all real fundamental edges \(f\) contained in $F_e$. There exists a cycle separator of $G$ that is a path of $T$ (plus a non-$T$ edge between the extreme nodes of the path). 
\end{lemma}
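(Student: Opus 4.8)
The plan is to work with the real fundamental faces of $T$ ordered by containment and to exploit that $\omega$ is monotone for contained faces (established after Lemma~\ref{lem:weighting2}). First I would dispose of the degenerate situation: if $G$ has no $T$-real fundamental edge at all then $G=T$ and the claim about a critical face is vacuous in the sense that there is nothing to prove here (this case is handled separately in the proof of \Cref{teo:fundamentalseparator}, via a centroid argument), so assume $G$ has at least one $T$-real fundamental edge. By hypothesis every real fundamental face $F_e$ has weight outside $[n/3,2n/3]$, so each is either \emph{heavy} ($\omega(F_e)>2n/3$) or \emph{light} ($\omega(F_e)<n/3$). I claim a heavy face must exist: pick any real fundamental edge $e_0=u_0v_0$; if $F_{e_0}$ is light, then note that the complement region must be large, and one shows there is another real fundamental edge whose face contains the ``outside'' bulk — more directly, consider an inclusion-maximal real fundamental face $F_{e_{\max}}$. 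If $F_{e_{\max}}$ is light, I would argue that the edge $e_{\max}$, being contained in no other real fundamental face, has at most $2n/3 - 1$ nodes outside it only when it is not light; but if it is light then $\inside{F}_{e_{\max}}^c$ has more than $2n/3$ nodes, and by a planarity/tree argument (the same reduction sketched in \Cref{subsec:proofteo} and formalized in \Cref{lemma:verysmall}) there is then a real fundamental edge hiding that bulk, contradicting maximality unless it is itself heavy. The cleanest route is: among all real fundamental faces, the one maximal under inclusion cannot be light (else the entire remainder $n - \omega(F_{e_{\max}}) > 2n/3$ of the graph sits outside, forcing, by planarity, another real fundamental edge nested around it), hence there exists a heavy face.

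Next, among all heavy real fundamental faces, choose one, $F_e$ with $e=uv$, that is \emph{minimal} under inclusion among heavy faces. I would then verify the key structural property: every real fundamental edge $f$ contained in $F_e$ has $\omega(F_f)<n/3$. Indeed, if $f$ is a real fundamental edge with $F_f \subsetneq F_e$, then by minimality $F_f$ is not heavy, so $\omega(F_f) < n/3$ or $\omega(F_f)\in[n/3,2n/3]$; but the latter is excluded by the standing hypothesis, so $\omega(F_f) < n/3$. Here I must be a little careful that ``$f$ contained in $F_e$'' in the statement means $F_f$ is (a face) contained in $F_e$, which is exactly the relation for which $\omega$ is monotone and inclusion-minimality applies. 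This establishes the first sentence of the lemma.

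For the second sentence — that $G$ has a cycle separator given by a $T$-path plus a non-$T$ edge between its endpoints — I would invoke \Cref{teo:fundamentalseparator} itself, since we are in precisely its case (2): there is a real fundamental edge with $\omega(F_e)>2n/3$ and all real fundamental edges inside have weight $<n/3$. Concretely, as indicated in the high-level sketch of \Cref{subsec:proofteo}, starting from the heavy $e=uv$ one performs the full augmentation from $u$ and uses Remark~\ref{lemma:propaug}(1) and (3): the weights $\omega(F_{uz}^\ell)$ over the $(T,F_e)$-compatible leaves $z$ are monotone in $\leftpi^T(z)$ (or $\rightpi^T$ in the $\mathcal{E}$-right oriented case), they start below $n/3$ (a light inner face or the trivial $uz'$ with $z'$ adjacent to $u$) and reach $\omega(F_e)>2n/3$, so by discreteness some intermediate $(T,F_e)$-compatible edge $f=uz$ has $\omega(F_f^\ell)\in[n/3,2n/3]$ — unless one of the $T$-paths involved already has more than $n/3$ nodes, which is case (3) of \Cref{teo:fundamentalseparator} and again yields a path separator directly. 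In either sub-case, \Cref{teo:balancedface} turns the balanced (possibly virtual) fundamental face into a cycle separator that is a $T$-path together with the (real or virtual) edge joining its endpoints. The main obstacle I anticipate is the existence-of-a-heavy-face step: making rigorous that an inclusion-maximal light real fundamental face forces, by planarity and the tree structure, a strictly larger real fundamental face (or that $G$ would then be a tree), which is why the statement is phrased with the ``this is a particular and easy case'' caveat and why \Cref{lemma:verysmall} and the detailed arguments of \Cref{subsec:proofteo} are needed to back it up.
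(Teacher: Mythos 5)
There are three problems with this proposal.

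\textbf{First, you have misread where the edge $e$ lives in the lemma.} The first sentence of the statement is part of the \emph{hypothesis}, not the conclusion: the paper's proof opens with ``Let $e = uv$ be a real fundamental edge as described in the Lemma,'' and \Cref{teo:fundamentalseparator} invokes the lemma only after it has separately exhibited such an $e$. Consequently your attempt to \emph{derive} the existence of a heavy face from ``$\omega(F_f)\notin[n/3,2n/3]$ for all $f$'' is unnecessary --- and, as written, it does not work. If the inclusion-maximal real fundamental face $F_{e_{\max}}$ is light, it does \emph{not} follow that a larger real fundamental face wraps around it: the complement of a fundamental face need not be a fundamental face of $T$, and the situation where every real fundamental face is light with no heavy face at all is precisely the configuration that \Cref{lemma:verysmall} is designed to handle. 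So the ``heavy face must exist'' claim is simply false in general.

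\textbf{Second, your derivation of the cycle separator is circular.} You propose to ``invoke \Cref{teo:fundamentalseparator} itself, since we are in precisely its case (2),'' but \Cref{teo:fundamentalseparator} is proved \emph{from} \Cref{lemma:balancedincritical} (that is where case (2) is dispatched in \Cref{subsec:proofteo}). You cannot appeal to the theorem to prove one of its own ingredients.

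\textbf{Third, the non-circular ``intermediate value'' sketch you offer glosses over the genuinely hard case.} You argue that the weights $\omega(F_{uz}^\ell)$ over $(T,F_e)$-compatible leaves $z$ are monotone in $\leftpi$ and must cross $[n/3,2n/3]$. But the monotonicity in \Cref{lemma:propaug} is over \emph{all} leaves in the full augmentation, not just the $(T,F_e)$-compatible ones, and the leaf $t$ whose weight lands in $[n/3,2n/3]$ may fail to be $(T,F_e)$-compatible with $u$. This is the case the paper spends most of its proof on: it picks the leaf $t$ with smallest $\leftpi$-value among those with $\omega(F_{ut}^\ell)\geq n/3$, finds a real fundamental edge $f = z_1z_2$ hiding $t$ that is inclusion-maximal among the hiding edges, introduces the adjusted weight $\hat{\omega}(F_{uz_1})$ to split $\inside{F}_{uz_2}$ into the part inside $F_f$ and the part outside, and argues separately that each piece has at most $2n/3$ nodes. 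It also requires the observation that $\hat{\omega}(F_{uz_1}) < n/3$, which follows only from the careful choice of $t$. None of this is present in your plan. Similarly, the subcase where \emph{no} augmented face reaches weight $n/3$ --- which the paper splits into the cases $|\inside{F}_e|\leq 1$ and $|\inside{F}_e|\geq 2$, concluding in both that the $T$-path between $u$ and $v$ itself is long enough to be a separator --- is only gestured at in your sketch as ``case (3).'' A complete proof of this lemma must handle these two situations explicitly; simply appealing to monotonicity does not get you there.
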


\begin{proof}
   Let \(e = uv\) be a real fundamental edge as described in the Lemma such that \(\leftpi(u) < \leftpi(v)\). We split the proof into two cases.

Suppose first that there exists a node \(t\) inside \(F_e\) such that \(\omega(F_{ut}^\ell) \in [n/3, 2n/3]\). Consider \(t\) such that it has the highest position in \(\leftpi\) and \(\omega(F_{ut}^\ell) \in [n/3, 2n/3]\). By \Cref{lemma:propaug}, \(t\) is a leaf of \(T\) (if \(t\) is not a leaf, then rightmost child of \(t\) in $T_t$ has the same weight and a greater left position in \(\leftpi\)). In this case, if \(t\) is \((T, F_e)\)-compatible with \(u\), then by \Cref{teo:balancedface} we conclude that the \(T\)-path between \(u\) and \(t\) is a cycle separator of \(G\).

Now, we should argue how to proceed when all the leaves \(t\) such that \(\omega(F_{ut}^\ell) \in [n/3, 2n/3]\) are not \((T, F_e)\)-compatible with \(u\). In fact, we assume a stronger condition: Assume that all the leaves \(t\) such that \(\omega(F_{ut}^\ell) \geq n/3\) are not \((T, F_e)\)-compatible with \(u\).

Let \(t\) be such a leaf with leftmost position in \(\leftpi\) (i.e., lowest position in $\leftpi$). As we assume that that node $t$ is not  \((T, F_e)\)-compatible with \(u\), then by \Cref{lemma:deltacharac}, there exists a $T$-real fundamental edge \(f\) such that \(t\) is inside \(F_f\) and \(f\) is contained in \(F_e\). Take \(f = z_1z_2\) as the edge that satisfies this condition and $f$ is not contained in any other real fundamental edge \(f'\) (inside $F_e$) that also hides \(t\). See \Cref{fig:4pathseparator}.

    \begin{claim}
        Let $P_{uz_2}$ the $T$-path between $u$ and $z_2$. The nodes in $P_{uz_2}$ are a cycle separator of $G$.
    \end{claim}

    \begin{figure}[h!]
        \centering
        \scalebox{0.7}{\tikzset{every picture/.style={line width=0.75pt}} 

\begin{tikzpicture}[x=0.75pt,y=0.75pt,yscale=-1,xscale=1]

\draw [color={rgb, 255:red, 208; green, 2; blue, 27 }  ,draw opacity=1 ][line width=1.5]    (426.27,58.54) .. controls (487.24,48.4) and (556.22,151.39) .. (539.88,234.09) ;
\draw [shift={(539.88,234.09)}, rotate = 101.18] [color={rgb, 255:red, 208; green, 2; blue, 27 }  ,draw opacity=1 ][fill={rgb, 255:red, 208; green, 2; blue, 27 }  ,fill opacity=1 ][line width=1.5]      (0, 0) circle [x radius= 4.36, y radius= 4.36]   ;
\draw [shift={(426.27,58.54)}, rotate = 350.56] [color={rgb, 255:red, 208; green, 2; blue, 27 }  ,draw opacity=1 ][fill={rgb, 255:red, 208; green, 2; blue, 27 }  ,fill opacity=1 ][line width=1.5]      (0, 0) circle [x radius= 4.36, y radius= 4.36]   ;
\draw [color={rgb, 255:red, 208; green, 2; blue, 27 }  ,draw opacity=1 ][line width=1.5]    (426.27,58.54) .. controls (371.06,70.25) and (309.33,165.43) .. (309.33,235.65) ;
\draw [shift={(309.33,235.65)}, rotate = 90] [color={rgb, 255:red, 208; green, 2; blue, 27 }  ,draw opacity=1 ][fill={rgb, 255:red, 208; green, 2; blue, 27 }  ,fill opacity=1 ][line width=1.5]      (0, 0) circle [x radius= 4.36, y radius= 4.36]   ;
\draw [color={rgb, 255:red, 208; green, 2; blue, 27 }  ,draw opacity=1 ][line width=1.5]    (490.87,84.29) .. controls (448.57,109.61) and (397.85,134.49) .. (399.67,150.09) ;
\draw [shift={(399.67,150.09)}, rotate = 83.36] [color={rgb, 255:red, 208; green, 2; blue, 27 }  ,draw opacity=1 ][fill={rgb, 255:red, 208; green, 2; blue, 27 }  ,fill opacity=1 ][line width=1.5]      (0, 0) circle [x radius= 4.36, y radius= 4.36]   ;
\draw [shift={(490.87,84.29)}, rotate = 149.1] [color={rgb, 255:red, 208; green, 2; blue, 27 }  ,draw opacity=1 ][fill={rgb, 255:red, 208; green, 2; blue, 27 }  ,fill opacity=1 ][line width=1.5]      (0, 0) circle [x radius= 4.36, y radius= 4.36]   ;
\draw [color={rgb, 255:red, 208; green, 2; blue, 27 }  ,draw opacity=1 ][line width=1.5]    (482.47,159.4) .. controls (467.95,145.36) and (471.63,108.29) .. (444.4,111.41) ;
\draw [shift={(444.4,111.41)}, rotate = 173.46] [color={rgb, 255:red, 208; green, 2; blue, 27 }  ,draw opacity=1 ][fill={rgb, 255:red, 208; green, 2; blue, 27 }  ,fill opacity=1 ][line width=1.5]      (0, 0) circle [x radius= 4.36, y radius= 4.36]   ;
\draw [shift={(482.47,159.4)}, rotate = 224.04] [color={rgb, 255:red, 208; green, 2; blue, 27 }  ,draw opacity=1 ][fill={rgb, 255:red, 208; green, 2; blue, 27 }  ,fill opacity=1 ][line width=1.5]      (0, 0) circle [x radius= 4.36, y radius= 4.36]   ;
\draw [color={rgb, 255:red, 208; green, 2; blue, 27 }  ,draw opacity=1 ] [dash pattern={on 0.84pt off 2.51pt}]  (539.88,234.09) .. controls (484.67,245.8) and (347.46,237.21) .. (309.33,235.65) ;
\draw [shift={(309.33,235.65)}, rotate = 182.34] [color={rgb, 255:red, 208; green, 2; blue, 27 }  ,draw opacity=1 ][fill={rgb, 255:red, 208; green, 2; blue, 27 }  ,fill opacity=1 ][line width=0.75]      (0, 0) circle [x radius= 3.35, y radius= 3.35]   ;
\draw [color={rgb, 255:red, 208; green, 2; blue, 27 }  ,draw opacity=1 ][line width=1.5]  [dash pattern={on 1.69pt off 2.76pt}]  (482.86,157.79) .. controls (427.65,169.5) and (434.86,170.79) .. (399.67,150.09) ;
\draw [shift={(399.67,150.09)}, rotate = 210.45] [color={rgb, 255:red, 208; green, 2; blue, 27 }  ,draw opacity=1 ][fill={rgb, 255:red, 208; green, 2; blue, 27 }  ,fill opacity=1 ][line width=1.5]      (0, 0) circle [x radius= 4.36, y radius= 4.36]   ;
\draw  [color={rgb, 255:red, 208; green, 2; blue, 27 }  ,draw opacity=1 ][fill={rgb, 255:red, 184; green, 233; blue, 134 }  ,fill opacity=1 ] (399.47,151.4) -- (416.71,194.09) -- (382.23,194.09) -- cycle ;
\draw  [color={rgb, 255:red, 208; green, 2; blue, 27 }  ,draw opacity=1 ][fill={rgb, 255:red, 184; green, 233; blue, 134 }  ,fill opacity=1 ] (482.47,159.4) -- (499.71,202.09) -- (465.23,202.09) -- cycle ;
\draw [color={rgb, 255:red, 74; green, 144; blue, 226 }  ,draw opacity=1 ][line width=1.5]  [dash pattern={on 1.69pt off 2.76pt}]  (315.33,233.09) .. controls (377.67,213.54) and (470.67,231.09) .. (399.47,151.4) ;
\draw [color={rgb, 255:red, 74; green, 144; blue, 226 }  ,draw opacity=1 ][line width=1.5]  [dash pattern={on 1.69pt off 2.76pt}]  (315.33,233.09) .. controls (393.33,235.09) and (467.43,204.09) .. (478.33,163.09) ;
\draw [color={rgb, 255:red, 208; green, 2; blue, 27 }  ,draw opacity=1 ]   (444.4,111.41) .. controls (436.33,124.09) and (461.33,132.09) .. (436.67,143.09) ;
\draw [shift={(436.67,143.09)}, rotate = 155.97] [color={rgb, 255:red, 208; green, 2; blue, 27 }  ,draw opacity=1 ][fill={rgb, 255:red, 208; green, 2; blue, 27 }  ,fill opacity=1 ][line width=0.75]      (0, 0) circle [x radius= 3.35, y radius= 3.35]   ;

\draw (420.54,25.37) node [anchor=north west][inner sep=0.75pt]  [font=\normalsize]  {$\omega _{1}$};
\draw (314.41,240.93) node [anchor=north west][inner sep=0.75pt]  [font=\normalsize]  {$u$};
\draw (544.96,239.37) node [anchor=north west][inner sep=0.75pt]  [font=\normalsize]  {$v$};
\draw (376.67,133.44) node [anchor=north west][inner sep=0.75pt]  [font=\normalsize]  {$z_{1}$};
\draw (488.96,141.2) node [anchor=north west][inner sep=0.75pt]  [font=\normalsize]  {$z_{2}$};
\draw (390,172) node [anchor=north west][inner sep=0.75pt]   [align=left] {$\displaystyle T_{z_{1}}$};
\draw (473,180) node [anchor=north west][inner sep=0.75pt]   [align=left] {$\displaystyle T_{z_{2}}$};
\draw (426,133) node [anchor=north west][inner sep=0.75pt]   [align=left] {$\displaystyle t$};
\draw (442,166) node [anchor=north west][inner sep=0.75pt]   [align=left] {$\displaystyle f$};
\draw (430,238) node [anchor=north west][inner sep=0.75pt]   [align=left] {$\displaystyle e$};

\end{tikzpicture}}
        \caption{ The two possible cycle separators of Claim 6. Either the $T$-path between $u$ and $z_1$ or the $T$-path between $u$ and $t_f$. }
        \label{fig:4pathseparator}
    \end{figure}
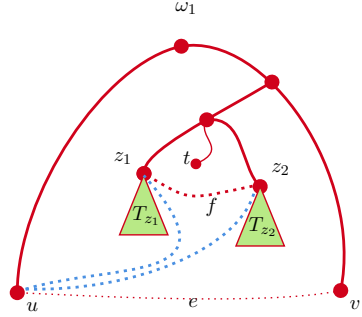

    \begin{proof}
    Define $\hat{\omega}\left(F_{uz_1}\right) = \omega\left(F_{uz_1}\right) - |F_{u,z_1}\cap F_f| $, i.e., $\hat{\omega}\left(F_{u,z_1}\right)$ counts the number of nodes in $F_{uz_1}$ that are not part of $F_f$. 

    First, suppose that $\hat{\omega}\left(F_{u,z_1}\right)\leq n/3$. We prove that in this case $P_{uz_2}$ is a cycle separator of $G$. Assuming that $\hat{\omega}\left(F_{uz_1}\right)\leq n/3$, notice that $\omega\left(F_{ut}\right)\leq \omega\left(F_{uz_2}\right)$, and therefore $\omega\left(F_{uz_2}\right)\geq n/3$.

By definition of $f =z_1z_2$ \footnote{$f$ is not contained in any other real fundamental face that hides $t$}, the virtual edge $uz_2$ is $\mathcal{E}$-compatible if it is inserted in $\mathcal{E}$ such that all the nodes in $T_{z_2}\cap \overline{F_f}$ are outside $F_{uz_2}$ (blue dashed edge between $u$ and $z_2$ in \Cref{fig:4pathseparator}). Therefore, \(G - P_{uz_2}\) is split into two distinct disconnected parts: \(P_1\) and \(P_2\). Here, \(P_1\) corresponds to the nodes outside \(F_{uz_2}\), when $uz_2$ is inserted in $\mathcal{E}$ as explained before, and \(P_2\) to the nodes inside \(F_{uz_2}\). 

Again, by election of edge $f$, there cannot be any $T$-real fundamental edge between a node in $P_2$ and a node in $P_1$.  We prove that the number of nodes in each zone is at most \(2n/3\), to conclude that \(T_{u,z_2}\) is a cycle separator of \(G\).

First, we note that since no node in \(P_1\) is counted in \(\omega(F_{uz_2})\) (according to \Cref{lem:weighting}) and \(\omega\left(F_{uz_2}\right) \geq n/3\), then \(|P_1| \leq n - n/3 \leq 2n/3\). Second, we note that the nodes in \(P_2\) can be divided into two disconnected subset of nodes, \(P_2^1\) and \(P_2^2\). In \(P_2^1\) are the nodes of \(P_2\) inside $F_f$, and in \(P_2^2\) are the nodes of $P_2$ that are outside $F_f$. On one hand, as $P_2^1$ is inside $F_f$ and $\omega\left(F_f\right)\leq n/3$, then $|P_2^1|\leq n/3$. On the other hand, all the nodes in $P_2^2$ are  counted in $\hat{\omega}\left(F_{uz_1}\right)$, therefore $|P_2^2|\leq n/3$. We conclude that $|P_2| = |P_2^1| + |P_2^2|\leq 2n/3$ and then $T_{uz_2}$ is a cycle separator of $G$, with $uz_2$ simulated as explained above.

To conclude the proof, we show that it must necessarily hold that $\hat{\omega}\left(F_{uz_1}\right)< n/3$.

Indeed, assuming for contradiction that $\hat{\omega}\left(F_{uz_1}\right)\geq n/3$, and defining $c_\ell$ as the rightmost leaf in $T_{z_1}\cap V(F_f)^c$ such that $\leftpi(c_\ell)<\leftpi(t)$, by construction we have $\omega\left(F_{uc_\ell}\right) = \hat{\omega}\left(F_{uz_1}\right)$, and hence $\omega\left(F_{uc_\ell}\right)\geq n/3$, contradicting the choice of $t$.

Thus, it must necessarily hold that $\hat{\omega}\left(F_{uz_1}\right)< n/3$, and therefore the nodes of $P_{uz_2}$ form a cycle separator of $G$.
\end{proof}
    
In this point we have prove that if exists a leaf $t$ inside $F_e$ such that $\omega\left(F_{ut}\right)\geq n/3$, there exists a cycle separator of $G$, regardless $t$ is $(T,\mathcal{E})$-compatible with node $u$.

Now analyzing the other case, suppose that \(\omega(F_{ut}^\ell) < n/3\) for all leaves \(t\) inside \(F_e\) in the full augmentation of \(F_e\). This case is split into two subcases.\\

{\it Subcase 1.} Suppose that there is at most one node inside \(F_e\), i.e., \(|\inside{F}_e| \leq 1\). Then, in this case, as \(\omega(F_e) > 2n/3\) and \(e = uv\), \(u\) cannot be an ancestor of \(v\), because in that case, \(\omega(F_e)\) counts exactly the nodes inside \(F_e\), so \(\omega(F_e) \leq 1\), which is a contradiction. Therefore, \(u\) and \(v\) are not descendants of each other, and since \(|\inside{F}_e| \leq 1\), \(\omega(F_e)\) counts the number of nodes in the \(T\)-path between \(LCA(u, v)\) and \(v\), plus at most the only node inside \(F_e\). Since \(\omega(F_e) > 2n/3\), this \(T\)-path between \(LCA(u, v)\) and \(v\) has at least \(2n/3 - 1 > n/3\) nodes, making it a cycle separator of \(G\).\\

For the final subcase, we define \(z_1, \ldots, z_k\) as the nodes on the boundary of \(F_e\)\footnote{These nodes are exactly the nodes in the $T$-path between $u$ and $v$}. For each of these nodes, we define the set \(C(z_j)\) as the set of nodes $s$ inside \(F_e\), such that deepest ancestor of $v$ (in $T$) that is on the boundary of \(F_e\) is \(z_j\).

\begin{remark}\label{remark:partition}
    Some of the sets $C(z_j)$ might be empty. The set of non-empty sets $C(z_j)$ induces a partition of $\inside{F_e}$.
\end{remark}

    {\it Subcase 2.} For all the leaves $t$ inside  $F_e$, $\omega(F_{uz}^\ell) < n/3$ and there exists at least two leaves inside $F_e$.
    
In this case, \( u \) cannot be an ancestor of \( v \) or vice versa, as \(\omega(F_e)\) counts exactly the nodes inside \( F_e \), and since \(\omega(F_e) > \frac{2n}{3}\), the number of nodes inside \( F_e \) is at least \(\frac{2n}{3}\). However, if \(\omega(F_{uz}^\ell) < \frac{n}{3}\) for all nodes \( z \) inside \( F_e \), then the node \( v_r \) with higher position in $\rightpi$ and the node \( v_\ell\) with higher position in $\leftpi$ that are a leaf in \( T \) and inside \( F_e \), fulfill that \(\omega(F_{uv_r}^\ell)\) and \(\omega(F_{uv_\ell}^\ell)\) count all the nodes inside \( F_e \), if \( e \) is an \(\mathcal{E}\)-right oriented edge or an \(\mathcal{E}\)-left oriented edge, respectively. Since \(\omega(F_{uv_r}^\ell), \omega(F_{uv_\ell}^\ell) < \frac{n}{3}\), it follows that the number of nodes inside \( F_e \) is less than \(n/3\), which is a contradiction.

     Therefore, in this case $u$ is not ancestor of $v$. Now let $\overline{z}$ be the leaf of $T$ inside $F_e$ with highest position in $\leftpi$. By \Cref{remark:partition}, there exists a node $z_j$ of the boundary such that $\overline{z}\in C(z_j)$. By \Cref{lemma:propaug}, $\omega(F_{u\overline{z}}^\ell) = \omega(F_{uz_j}^\ell)$. The following claim holds.

     \begin{claim}\label{claim:ns1}
         All nodes inside $F_e$ are in $\tilde{F_{uz_j}^\ell}$. Therefore, all nodes inside $F_e$ are counted in $\omega(F_{uz_j}^\ell)$.
     \end{claim}

     \begin{proof}
         Suppose there exists a node $v'$ inside $F_e$ but outside $F_{u\overline{z}}^\ell$. By \Cref{lemma:propaug}, all the nodes $u'$ inside $F_e$ such that $\leftpi(u')<\leftpi(\overline{z})$ or that are descendant of $\overline{z}$ in $T$ are included in $F_{u\overline{z}}^\ell$. Therefore $\leftpi(u') >\leftpi(\overline{z})$. As $u'$ is inside $F_e$, if we take $u''$ has the rightmost children of $u'$ in $T$, this node $u''$ is a leaf in $T$, inside $F_e$ and with higher position in $\leftpi$ than $\overline{z}$, contradicting the election of $\overline{z}$. Finally, by definition all nodes inside $F_e$ and in $F_{uz_j}^\ell$ cannot be in the $T$-path $\overline{P}$ between $u$ and $LCA(u,z_j)$, because in that case, as $z_j$ is inside $F_e$, the nodes in $\overline{P}$ are in the border of $F_e$, a contradiction. Therefore, all nodes inside $F_e$ are in $\tilde{F_{uz_j}^\ell}$. Then, by \Cref{lem:weighting}, all nodes inside $F_e$ are counted in $\omega\left(F_{uz_j}^\ell\right)$.
    \end{proof}

    Therefore, all nodes inside \(F_e\) are counted in \(\omega(F_{uz_j}^\ell)\). Thus, the nodes counted in \(\omega(F_e)\) but not in \(\omega(F_{uz_j}^\ell)\) correspond to the nodes in the \(T\)-path between \(z_j\) and \(v\), except \(z_j\) which is counted in \(\omega(F_{uz_j}^\ell)\). Given that \(\omega(F_{uz_j}^\ell) < n/3\) and \(\omega(F_e) > 2n/3\), the \(T\)-path $P_{z_jv}$ between \(z_j\) and \(v\) has at least \(2n/3\) nodes, and thus, it forms a path separator of \(G\).

At this point, it is unclear whether the edge $vz_j$ can be added to $G$ while maintaining planarity. To avoid complications, in this case, we know that the $T$-path $P_{uv}$ between $u$ and $v$ contains $P_{z_jv}$, and since it is also a path separator. Moreover, since $uv\in E(G)$, in this case, the nodes of $P_{uv}$ form a cycle separator of $G$.
\end{proof}

\begin{lemma}
    \label{lemma:verysmall} Let $G=(V,E)$ be an $n$-node graph, $T$ an arbitrary spanning tree of $G$ such that $\omega\left(F_e\right)< n/3$ for all $T$-real fundamental edges \(e\) . Then, there exists a path $P$ of $T$ that is a cycle separator of $G$.
    
\end{lemma}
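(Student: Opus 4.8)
\textbf{Proof plan for Lemma \ref{lemma:verysmall}.}

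The plan is to proceed by a greedy ``climbing'' argument on the tree $T$, combined with a handover to the machinery already developed. First I would observe that the hypothesis $\omega(F_e) < n/3$ for every $T$-real fundamental edge $e$ is exactly the situation of case (3) in the discussion preceding \Cref{teo:fundamentalseparator}, so the goal is to reduce this case either to a direct subtree-count argument or to the already-proven \Cref{lemma:balancedincritical}. The key dichotomy is whether there exists a ``maximal'' real fundamental face, i.e.\ a $T$-real fundamental edge $e$ whose face $F_e$ is not strictly contained in $F_{e'}$ for any other $T$-real fundamental edge $e'$, and whether the complement of such a face is small or large.

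The main steps, in order, would be: (i) If $G$ has no $T$-real fundamental edge at all, then $G=T$ is a tree, and I invoke the standard centroid argument — there is a node $v_0$ with $n/3 \le n_T(v_0) < 2n/3$, and the $T$-path from the root (or rather the single edge to the parent of $v_0$, viewed as a degenerate path, together with the bounded component structure) gives a separator; more carefully, I would pick $v_0$ to be a node minimizing $n_T(v_0)$ subject to $n_T(v_0) \ge n/3$, argue each child subtree has $< n/3$ nodes so that $n_T(v_0) < 2n/3$ because $n_T(v_0) = 1 + \sum_{\text{children } c} n_T(c)$ and removing one child subtree keeps it below $n/3 \cdot (\text{number of children})$ — and take the path to be the trivial path at $v_0$'s parent edge; since $G-P$ splits into $T_{v_0}$ minus $v_0$ and the rest, both pieces have size $\le 2n/3$. (ii) If $G$ does have $T$-real fundamental edges, pick a \emph{maximal} one $e=uv$ (with respect to face containment), which exists since the containment relation is a partial order on a finite set. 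By the remark that $\omega$ is increasing for contained faces, every $T$-real fundamental face is contained in some maximal one, and all of them still have weight $< n/3$. (iii) Count the nodes \emph{outside} $F_e$: by \Cref{lem:weighting} and \Cref{lem:weighting2}, $\omega(F_e)$ upper-bounds $|\inside{F}_e|$ and the nodes outside $F_e$ number at most... — here I need to be careful, since $\omega(F_e) < n/3$ does \emph{not} immediately bound the outside. If the number of nodes outside $F_e$ is at most $2n/3$, then by the argument of \Cref{teo:balancedface} (each component of $G - P_e$ lies inside or outside $F_e$, and both sides are $\le 2n/3$ since $\inside{F}_e \le \omega(F_e) < n/3 \le 2n/3$ and the outside is $\le 2n/3$ by assumption), the $T$-path $P_e$ is a cycle separator. (iv) If instead more than $2n/3$ nodes lie outside $F_e$, then I would re-root / re-orient the viewpoint: consider the face ``on the other side'' of $e$, which plays the role of a face with weight $> 2n/3$. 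More precisely, since $e$ is real, $T+e$ has two faces; the outer one contains $>2n/3$ nodes, and all $T$-real fundamental faces contained in it still have weight $< n/3$. This is exactly the hypothesis of \Cref{lemma:balancedincritical} applied to that configuration (possibly after re-rooting $T$ so that this outer face becomes a genuine ``below-$T$'' fundamental face, or by a direct adaptation of its proof), and that lemma then produces the desired cycle separator that is a $T$-path plus a non-$T$ edge.

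The hard part will be step (iv): making precise the claim that the complement of a maximal real fundamental face can be treated as a single ``oversized'' fundamental face to which \Cref{lemma:balancedincritical} applies. The subtlety is that \Cref{lemma:balancedincritical} is stated for a face $F_e$ with $\omega(F_e) > 2n/3$ and all contained real faces having weight $< n/3$, whereas here the oversized region is the \emph{outside} of $F_e$, not a fundamental face in the original rooted configuration. I expect to resolve this either by re-rooting $T$ at a node inside $F_e$ (so that the former ``outside'' becomes the interior of a new fundamental face, using the convention that the virtual root $r_0$ must not lie inside any face), and checking that weights and the ``all contained faces small'' property are preserved under this re-rooting, or by directly mimicking the case analysis of \Cref{lemma:balancedincritical} — the subcases there (at most one node inside; all augmented-leaf weights small; a leaf with augmented weight $\ge n/3$) translate verbatim once the roles of inside/outside are swapped. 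A secondary technical point is confirming that a maximal real fundamental edge exists and that ``all contained real faces have weight $<n/3$'' really does carry over to whichever face we end up analyzing, which follows from monotonicity of $\omega$ but should be stated explicitly.
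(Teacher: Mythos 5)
Your steps (i)--(iii) track the paper's proof: you choose a maximal real fundamental edge $e=uv$ and dispose of the easy cases. The genuine gap is in step (iv), and it is twofold. First, re-rooting $T$ at a node inside $F_e$ cannot by itself turn the old outside of $F_e$ into the interior of the fundamental face of $e$: by the paper's convention, $\inside{F_f}$ is the \emph{bounded} side of the Jordan curve $C_f+f$, a purely geometric notion that does not change when you re-root $T$. You would first need to re-embed $G$ so that the old $\inside{F_e}$ becomes the unbounded region, and then re-verify that the weight formulas of Definition~\ref{def:weights} --- which depend on $\leftpi$, $\rightpi$, $d_T$ and $n_T$, all of which change with the root --- still certify $\omega(F_g)<n/3$ for every contained face $F_g$. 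This last step can genuinely fail: a face $F_g$ whose weight equals $|\inside{F_g}|<n/3$ because one endpoint of $g$ is an ancestor of the other may, after re-rooting, lose the ancestor relation and pick up the boundary-path contribution from $\tilde{F}_g$, pushing its weight above $n/3$. Second, your dichotomy is by the \emph{total} outside size. The case where $|F_\ell^e|+|F_r^e|>2n/3$ but neither piece alone exceeds $2n/3$ falls into your step (iv), yet it should not go through Lemma~\ref{lemma:balancedincritical} at all: by maximality of $e$ there is no fundamental edge between $F_\ell^e$ and $F_r^e$, so every component of $G-P_{uv}$ lies entirely inside one of $\inside{F_e}$, $F_\ell^e$, $F_r^e$, each of size at most $2n/3$, making $P_{uv}$ a cycle separator directly. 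Your step (iii) only covers the case where the total outside is $\le 2n/3$ and would route this intermediate case into the problematic reduction.

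The paper sidesteps the re-rooting machinery entirely. It splits the outside into $F_\ell^e$ and $F_r^e$ (by $\leftpi$ positions relative to $u$ and $v$), and in the only genuinely hard case ($|F_\ell^e|>2n/3$, say) it keeps the \emph{same} rooted configuration and adds a single auxiliary edge $r_T u'$ from the root to a carefully chosen boundary leaf $u'$, so that $F_{r_T u'}$ has interior exactly $F_\ell^e$ in the augmented graph $G'=(V,E\cup\{r_T u'\})$. Because neither the root nor the embedding of the original edges changes, the weights of all pre-existing real fundamental faces are literally unchanged and remain $<n/3$, and Lemma~\ref{lemma:balancedincritical} applies immediately to $(G',\mathcal{E}',T)$. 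That single-edge augmentation, together with the left/right split of the outside, is the missing idea in your plan.
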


\begin{proof}

    Let $e=uv$ be a real fundamental edge such that is not contained in any other real fundamental face. As $G$ is a finite graph and planar, this edge always exists. Assume that $\leftpi(u)< \leftpi(v)$, and as the spanning tree $T$ is fixed, we may omit the subscript in some values to avoid heavy notation.

    For the given embedding \(\mathcal{E}\), if \( u \) is not an ancestor of \( v \), \( F_\ell^e \) is defined as the set of nodes such that (1) are outside \( F_e \) and have a left position in \(\leftpi\) smaller than \(\leftpi(u)\) or  (2) are outside \( F_e \) and in \( V(T_u) \). If \( u \) is an ancestor of \( v \), taking \( z \) as the first node in the \( T \)-path from \( u \) to \( v \) (i.e., \( uz \in E(T) \)), \( F_\ell^e \) is defined as the set of nodes \( v_0 \) outside \( F_e \) that have a left position \(\leftpi\) smaller than \(\leftpi(u)\) or are outside \( F_e \), in \( V(T_u) \), and \( t_u(v_0) > t_u(z) \).

Analogously, \( F_r^e \) is defined as the set of nodes outside \( F_e \) that have a left position \(\leftpi\) greater than \(\leftpi(v)\), regardless of whether \( u \) is an ancestor of \( v \) or not. See \Cref{fig:splitted}.
    
\begin{figure}[h!]
    \centering
    \input{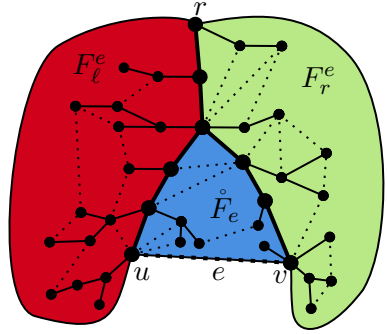}
    \caption{The sets $F_\ell^e$, $F_r^e$ and $\inside{F}_e$, for a fundamental edge $e=uv$. }
    \label{fig:splitted}
\end{figure}

    Given the choice of \( e \), there cannot be any fundamental edge between a node in \( F_\ell^e \) and a node in \( F_r^e \), as such an edge would contain \( e \). Additionally, it is assumed that no real fundamental edge contains the root \( r \) within it.

First, if \( |F_\ell^e|,|F_r^e| \leq n/3 \), then it is clear that that $T$-path $P_{uv}$ between $u$ and $v$ is a cycle separator of $G$.

Now, if $|F_r^e|,|F_\ell^e|\leq 2n/3$ and $|F_r^e|\geq n/3$ (the case when $|F_\ell^e|\geq n/3$ is analogous), then $|\inside{F_e}\cup F_\ell^e|\leq 2n/3$, and therefore the $T$-path $P_{rv}$ between the root $r$ of $T$ and $v$ is a separator set of $G$. Therefore, the $T$-path between $u$ and $v$ is a separator cycle of $G$, because includes a $P_{rv}$ and $uv\in E$.

Finally,  if $|F_\ell^e| > 2n/3$ (the case when $|F_r^e|\geq 2n/3$ is analogous, and both cases cannot happen simultaneously), take $u'$ as the leaf of $T$ with highest position in $\leftpi$ and $u'\in F_\ell^e\cap V(T_u)$. If such a leaf does not exists, take $u'=u$.

    The root $r$ and $u'$ are $\mathcal{E}$-compatible, because $u$ cannot be hidden, and by election of $u'$, if an edge $f$ hides $u'$ then $F_f$ contains $e$, contradicting the election of $e$. Then, we can add $f=ru'$ to $\mathcal{E}$ constructing the combinatorial planar embedding $\mathcal{E}'$ of $G'=(V,E+ru')$, such that $ru'$ is $\mathcal{E'}$-left oriented and $F_{ru}$ is a real fundamental face of $\mathcal{E}'$. Then, $\omega(F_{ru'})\geq 2n/3$, because by election of $u'$, $\omega(F_{ru'}) = |F_\ell^e|$, and $F_{ru'}$ is a real fundamental face of the planar configuration $(G',\mathcal{E}',T)$ and any other real fundamental face contained in $F_{ru'}^\ell$ has weight smaller than $n/3$, because all other real fundamental edges of $(G',\mathcal{E}',T)$ is a real fundamental edge of $T$ in $G$, and by hypothesis all these faces have weight at most $n/3$. Then, we can replicate the prof of \Cref{lemma:balancedincritical} to conclude that there exists a cycle separator $P$, inside $F_{ru'}^\ell$,  of $G'$. As any connected component of $G-S$ is a subgraph of a connected component of $G'-S$, then $S$ is also a cycle separator of $G$.
    
Analogously, if \(|F_r^e| > 2n/3\), then take \(v'\) as the leaf of \(T\) with the lowest position in \(\leftpi\) such that \(v' \in F_r^e \cap V(T_v)\). If such a leaf does not exist, take \(v' = v\). By the same argument as before, \(r\) and \(v'\) are \(\mathcal{E}\)-compatible. Then, we can add \(f = rv'\) to \(\mathcal{E}\), forming the combinatorial planar embedding \(\mathcal{E}'\) of \(G' = (V, E + rv')\), such that \(rv'\) is \(\mathcal{E}'\)-right oriented and \(F_{rv'}\) is a real fundamental face of \(\mathcal{E}'\). Consequently, \(\omega(F_{rv'}) \geq 2n/3\), because by the selection of \(v'\), \(\omega(F_{rv'}) = |F_r^e|\). Since \(F_{rv'}\) is a real fundamental face of the planar configuration \((G', \mathcal{E}', T)\) and any other real fundamental face contained in \(F_{rv'}^\ell\) has a weight smaller than \(n/3\), we can replicate the proof of \Cref{lemma:balancedincritical} to conclude that there exists a \(T\)-path \(P\) (inside \(F_{rv'}\)) that is a cycle separator of \(G'\). Since any connected component of \(G - P_k\) is a subgraph of a connected component of \(G' - P_k\), \(P_k\) is also a cycle separator of \(G\).
\end{proof}

Using \Cref{lemma:balancedincritical,lemma:verysmall} it is direct then to prove \Cref{teo:fundamentalseparator}.

\begin{proof}[Proof of \Cref{teo:fundamentalseparator}]

Let \((G, \mathcal{E}, T)\) be a planar configuration. First, if \(G\) does not have a \(T\)-real fundamental edge, i.e., \(G\) is a tree, then it is known that in any tree, there exists a node \(v\) such that the size (in terms of the number of nodes) of the subtree of \(T\) rooted at \(v\) is within the range \([n/3, 2n/3]\). Therefore, the \(T\)-path between the root of \(T\) and this node \(v\) is a cycle separator of~\(G\)\footnote{As $G$ is a tree, there is not any real fundamental edge and therefore the virtual edge $vr$ can be simulated without breaking planarity.}.

On the other hand, if \(G\) has at least one \(T\)-real fundamental edge, let \(\omega(F_e)\) be the weight of each \(T\)-real fundamental face \(F_e\). If there exists a real fundamental edge \(F_e\) such that \(\omega(F_e) \in [n/3, 2n/3]\), then by \Cref{teo:balancedface}, the \(T\)-path between the nodes of \(e\) are a cycle separator of \(G\). If no real fundamental face has a weight in the range \([n/3, 2n/3]\), we split into two cases.

First, if there exists a real fundamental edge such that \(\omega(F_e) > 2n/3\), then we can choose, among all the real fundamental edges such that \(\omega(F_e) > 2n/3\), the edge \(e\) such that no edge contained in \(e\) has a weight greater than \(2n/3\) (due to planarity and the finiteness of \(G\), such an edge always exists). The planar configuration \((G, \mathcal{E}, T)\) and this edge \(e\) satisfy the hypotheses of \Cref{lemma:balancedincritical}, so in this case, there also exists a cycle separator of \(G\), by \Cref{lemma:balancedincritical}.

Finally, if every fundamental edge has a weight less than \(n/3\), we can choose a fundamental edge \(e\) that is not contained in any other real fundamental face defined by another real fundamental edge. The planar configuration \((G, \mathcal{E}, T)\) and \(e\) satisfy the hypotheses of \Cref{lemma:verysmall}, so there also exists a cycle separator of \(G\).
\end{proof}

\section{Separator cycle computation.}\label{sec:congestimplementation}

To describe in detail the computation of separator cycles in the \CONGEST model, we first outline the necessary subroutines to demonstrate \Cref{teo:separatorcongest}. In \Cref{subsec:shortcuts}, we state known problems and variants specific to this work that can be solved using low-congestion shortcuts in $\biglo{D}$ rounds. Subsequently, in \Cref{subsec:subroutinescongest}, we formalize and demonstrate how to solve various problems in $\biglo{D}$ rounds. These problems are necessary for computing separator cycles.

\subsection{Low-congestion shortcuts and planar embeddings.}\label{subsec:shortcuts}

It is known that a planar combinatorial embedding of a graph can be computed in the \CONGEST\ model.

\begin{proposition}[\cite{ghaffari2016distributed-1}]\label{embeddingcongest}
    There exists a deterministic algorithm in the \CONGEST\ model, which given a planar graph $G=(V,E)$ with diameter $d$, computes a planar combinatorial embedding $\mathcal{E}$ of $G$ in $\biglo{D}$ rounds.
\end{proposition}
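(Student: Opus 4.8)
The plan is to reconstruct the argument of Ghaffari and Haeupler~\cite{ghaffari2016distributed-1}, combined with the deterministic low-congestion shortcut construction of Haeupler, Hershkowitz and Wajc~\cite{haeupler2018round}. The engine underneath is the theory of \emph{low-congestion shortcuts}: every planar graph of diameter $D$ admits shortcuts of quality $\biglo{D}$, and such shortcuts can be built \emph{deterministically} in $\biglo{D}$ rounds. Given these, one has a deterministic $\biglo{D}$-round algorithm for part-wise aggregation over any collection of vertex-disjoint connected subgraphs of $G$ (sums, minima, maxima, broadcast inside each part, etc.), and this is the only communication primitive the embedding algorithm uses, invoked $\bigo{\log n}$ times; hence the whole procedure runs in $\biglo{D}$ rounds.

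\emph{Reduction to the $3$-connected case.} First compute the block-cut (biconnected-component) tree of $G$ and, inside each block, a decomposition into triconnected components (an SPQR-type decomposition). A combinatorial embedding of $G$ is then obtained by independently embedding each triconnected component and gluing the embeddings along the articulation vertices and separation pairs; the gluing data is of constant size per separator and can be disseminated by $\bigo{1}$ broadcasts/aggregations over $G$. The care needed here is that the pieces, viewed in isolation, may have diameter much larger than $D$, so one keeps every piece embedded in $G$ and simulates all computations over the shortcuts of $G$ rather than over the pieces themselves.

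\emph{Embedding a $3$-connected planar graph.} By Whitney's theorem a $3$-connected planar graph has a unique combinatorial embedding up to a global reflection, and this embedding is determined by its set of faces. Fix a BFS tree $T$ of $G$ rooted at an arbitrary vertex; its depth is at most $D$. The faces of $G$ are in bijection with the vertices of a spanning tree $T^*$ of the dual graph whose edge set is exactly the non-$T$ edges of $G$ (the ``interdigitating trees'' structure: $E(T)$ and $E\setminus E(T)$ split $E$ into a primal and a dual spanning tree). One builds $T^*$ deterministically by a part-wise-aggregation-based greedy-merging/contraction procedure analogous to the one used for the MST in~\cite{haeupler2018round}, roots it, and then lets every edge learn the identifiers of the (at most two) faces it borders by aggregating along $T$ and along $T^*$. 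With this information each vertex $v$ can sort its incident edges into the cyclic rotation $t_v$ forced by the face incidences, and a single global orientation pass fixes the reflection ambiguity consistently. All of this is $\bigo{\log n}$ rounds of aggregation and broadcast, i.e.\ $\biglo{D}$ rounds in total.

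\emph{Main obstacle.} The delicate point is not the $3$-connected case but the connectivity reductions: carrying out the triconnected decomposition and the consistent gluing of the component embeddings within $\biglo{D}$ rounds, while the auxiliary graphs manipulated along the way can have diameter far larger than $D$, which forces every step to be simulated over the shortcuts of $G$ together with an argument that only $\bigo{\log n}$ such rounds suffice. For the full details we refer to~\cite{ghaffari2016distributed-1,haeupler2018round}.
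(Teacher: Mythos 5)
The paper gives no proof of this proposition: it is stated as a black-box citation, and the text moves on immediately. Your observation that the \emph{deterministic} $\biglo{D}$-round claim needs the embedding machinery of \cite{ghaffari2016distributed-1} \emph{together with} the later deterministic shortcut construction of \cite{haeupler2018round} is correct (the paper makes this point in its introduction, though the proposition environment cites only the former), and your top-level framing --- the whole thing is $\bigo{\log n}$ rounds of shortcut-supported part-wise aggregation --- is also right.

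However, your sketch of the $3$-connected case is circular and would not work as written. You propose to first construct the dual spanning tree $T^*$ on the non-$T$ edges (interdigitating trees), then aggregate along $T^*$ so that every edge learns its two incident faces, and finally read off the rotation $t_v$. But the interdigitating-trees fact only identifies which edges of $G$ are \emph{edges} of $T^*$; it does not say how those edges are joined to one another, because the \emph{vertices} of $T^*$ are the faces of the embedding, and face incidence is precisely the information you are trying to produce. There is no communication network, and not even a well-defined combinatorial graph, on which to run a Boruvka-style merge before the embedding exists. (One could try to escape this by characterizing the faces of a $3$-connected planar graph intrinsically as Whitney's non-separating induced cycles, but you give no distributed way to find them, and finding them efficiently is the actual content of \cite{ghaffari2016distributed-1}.) The cited algorithm does not route through the dual tree at all; after a reduction to biconnected pieces it simulates a classical parallel planarity/embedding routine whose $\bigo{\log n}$ phases each reduce to $\bigo{1}$ part-wise aggregations. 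So the envelope of your argument is right, but the $3$-connected construction you put inside it does not match the source and, independently, does not stand on its own terms.
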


Notice that the ordering \(t_v\) given by a planar embedding $\mathcal{E}$ can also be defined based on the set of incident edges at \(v\), so that for an incident edge \(e=vu\) at \(v\), \(t_v(e) = t_v(u)\) according to the above definition. Depending on the context, we will use \(t_v\) interchangeably as an ordering of the neighbors or the incident edges of \(v\).

Given a spanning tree \(T\) and a planar combinatorial embedding \(\mathcal{E}\) of \(G\), we can assume without loss of generality that the edge \(e\) connecting node \(v\) to its parent in \(T\) is assigned the position \(t_v(e) = 1\). We assume this every time we use a planar combinatorial embedding and spanning trees. 

In recent years, the study of low-congestion shortcuts has made significant strides. Various works have explored the construction and application of low-congestion shortcuts in different classes of graphs, including planar graphs, bounded-genus graphs, and minor-free graphs.

To make this work self-contained for readers unfamiliar with the state-of-the-art in low-congestion shortcuts, a brief summary of the definition and known results regarding low-congestion shortcuts is included. For a detailed analysis, the reader is referred to the papers cited in each result.
\begin{definition}[\cite{ghaffari2016distributed-2}, Definition 1]
    Given a graph $G=(V,E)$ with diameter $D$ and a partition $\mathcal{P} = \{V_1,...,V_k\}$ of the nodes, such that for all $V_i\in\mathcal P$ the induced subgraph $G[V_i]$ is connected, a collection of subset of edges $H_1,...,H_k$ is said to be a $(c,d)$-low-congestion shortcut for the partition $\mathcal{P}$ if
    \begin{enumerate}
        \item[(i)] The diameter of $G[V_i]+ H_i$ is at most $c$.
        \item[(ii)] For all edges $e\in E$, $|\{i\in [k]\colon e\in H_i\}|\leq d$
    \end{enumerate}
\end{definition}

In the same work \cite{ghaffari2016distributed-2}, the authors prove that for any partition \(\mathcal{P}\) of the nodes of a planar graph \((V,E)\), there exists a \(\left(\biglo{D},\biglo{D}\right)\)-low congestion shortcut.

Since this seminal work by Ghaffari and Haupler, significant advances have been made in the study and applications of low congestion shortcuts. Below, we summarize the most relevant results that will be used in this work.

\begin{proposition}[\cite{haeupler2018round}, simplified]
\label{teo:deterministicshortcut}
There exists a deterministic distributed algorithm in the \CONGEST\ model such that, for any $n$-node planar graph $G=(V,E)$ with diameter $D$, and a given partition $\mathcal P$ of the nodes into connected subgraphs, the algorithm computes in  $\biglo{D}$ rounds, a  $(\biglo{D},\biglo{D})$-low congestion shortcut of $G$ respect the partition $\mathcal P$.
    
\end{proposition}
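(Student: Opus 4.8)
\textbf{Proof proposal for \Cref{teo:deterministicshortcut}.}

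The plan is to cite this as a black box from \cite{haeupler2018round}, but if we wanted to reconstruct the argument, here is the route I would take. The starting point is the existential result of Ghaffari and Haeupler \cite{ghaffari2016distributed-2}: every partition $\mathcal{P}=\{V_1,\dots,V_k\}$ of a planar graph admits a $(\tilde{\mathcal{O}}(D),\tilde{\mathcal{O}}(D))$-low-congestion shortcut, and moreover such a shortcut can be found by a \emph{randomized} algorithm in $\tilde{\mathcal{O}}(D)$ rounds. The whole difficulty is therefore derandomization: the randomized construction picks, for each part $V_i$, a random subset of a BFS-type structure so that, in expectation, no edge is overloaded and each augmented part has small diameter. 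First I would isolate exactly where randomness enters — it is in a load-balancing step where one needs, roughly, a low-congestion routing / tree-packing that simultaneously works for all $k$ parts.

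The second step is to replace that random choice with the \emph{method of conditional expectations} executed distributedly. Concretely, one sets up a potential function that upper-bounds (via a union bound / Chernoff-type estimate turned into a pessimistic estimator) the probability that some edge carries congestion more than $\tilde{\mathcal{O}}(D)$ or some part has diameter more than $\tilde{\mathcal{O}}(D)$. Then one fixes the random bits in small batches, each time choosing the batch so as not to increase the pessimistic estimator; since the estimator starts below $1$, the final deterministic choice satisfies all constraints. The key point that makes this implementable in $\CONGEST$ is that the pessimistic estimator decomposes as a sum of local terms (one per edge, one per part), so after each batch of decisions the new value of the estimator can be recomputed by aggregating local contributions along a global BFS tree of $G$ in $\tilde{\mathcal{O}}(D)$ rounds. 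Because there are only $\tilde{\mathcal{O}}(1)$ batches of random bits (the randomized algorithm uses $\mathrm{poly}\log n$ bits per relevant object, and one can discretize), the total round complexity stays $\tilde{\mathcal{O}}(D)$.

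The third step is bookkeeping: one must check that the deterministic shortcut so produced still satisfies both defining properties, i.e. $\mathrm{diam}(G[V_i]+H_i)=\tilde{\mathcal{O}}(D)$ for every $i$ and every edge lies in at most $\tilde{\mathcal{O}}(D)$ of the $H_i$; this is immediate once the pessimistic estimator was designed to dominate exactly the bad events ``some part has large diameter'' and ``some edge is overloaded.'' One also has to verify that all the auxiliary primitives used — building the global BFS tree, broadcasting and aggregating the $\tilde{\mathcal{O}}(1)$-size batch descriptions, evaluating the estimator — run within $\tilde{\mathcal{O}}(D)$ rounds in $\CONGEST$, which they do since each is a constant number of up-casts/down-casts over a depth-$D$ tree with $\tilde{\mathcal{O}}(1)$-bit messages.

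The main obstacle is the second step: designing a pessimistic estimator that is (i) a valid upper bound on the failure probability of the randomized construction, (ii) additively decomposable into per-edge and per-part local terms so it can be evaluated by a single global aggregation, and (iii) monotone under the batched fixing of random bits. Items (i) and (iii) are standard once (ii) is arranged, so the real work is massaging the Chernoff/union-bound analysis of \cite{ghaffari2016distributed-2} into a form whose ``expected next value'' can be computed from information every node already holds locally after broadcasting the current partial assignment. This is precisely the technical heart of \cite{haeupler2018round}, and for the purposes of the present paper we simply invoke their theorem.
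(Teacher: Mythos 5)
The paper does not prove this proposition at all: it is imported verbatim from \cite{haeupler2018round} and used as a black box throughout. Your opening sentence, treating it as a result to cite rather than reprove, is therefore exactly what the paper does, and that part of your proposal is fine.

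The speculative reconstruction that follows is where I would push back, because it is not how the cited result is actually obtained, and carrying that mental model forward could mislead you. For planar graphs, the low-congestion shortcut construction going back to Ghaffari and Haeupler~\cite{ghaffari2016distributed-2} is a structural, combinatorial procedure: once a planar embedding and a BFS tree are available, the shortcut edges are restricted to the BFS tree, the tree is decomposed into $O(D)$-depth layers, and each part is assigned a union of tree paths; the congestion bound then follows from planarity (paths belonging to different parts cannot interleave too much in the embedding), not from a Chernoff-bounded random choice that has to be hit. There is no random load-balancing step of the kind you describe, so the whole apparatus you build --- a pessimistic estimator decomposing into per-edge and per-part terms, fixing random bits in batches so the estimator does not increase, aggregating the estimator up a BFS tree --- is aimed at a step that does not occur in the construction. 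What was randomized in the earlier pipeline lived mostly in auxiliary primitives (most prominently the planar-embedding computation and some symmetry-breaking subroutines), and the contribution of \cite{haeupler2018round} is to supply deterministic replacements for those and to repackage the framework around tree-restricted shortcuts, not to derandomize a probabilistic construction via conditional expectations. If you want to actually understand the theorem you are citing, the object to study is the tree-restricted shortcut and its layer/block decomposition, not the method of conditional expectations.
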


\begin{proposition}[\cite{haeupler2018round}]
\label{teo:mstcongest}
There exists a deterministic algorithm in the \CONGEST\ model such any, for any planar graph $G=(V,E)$ with diameter $D$, and a weight function $\omega\colon E\to \R_+$, computes a MST of $G$ (respect to $\omega$) in $\biglo{D}$ rounds.
    
\end{proposition}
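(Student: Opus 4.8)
The plan is to run Borůvka's algorithm (equivalently, the GHS fragment-merging scheme), implementing each phase with the deterministic low-congestion shortcuts guaranteed by \Cref{teo:deterministicshortcut}. Throughout, we maintain a partition of $V$ into \emph{fragments}, where each fragment induces a connected subgraph of the edges selected so far; initially every node forms a singleton fragment and knows its fragment identifier (say, the minimum node identifier inside it). We also fix a global total order on $E$ by comparing the tuples $(\omega(e), a_e, b_e)$, where $a_e \le b_e$ are the endpoint identifiers of $e$; this makes the minimum-weight element of any edge set unique, and guarantees that the edges selected in a phase never close a cycle.

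Each phase proceeds as follows. First, every node exchanges its fragment identifier with its neighbors in one round, so each node can mark which incident edges are \emph{outgoing} (lead to a different fragment) and compute the minimum such edge, in the fixed order, among its incident edges. Then, for every fragment $F$, we must compute the global minimum outgoing edge of $F$; this is exactly a part-wise aggregation problem over the current partition, which by \Cref{teo:deterministicshortcut} (compute a $(\biglo{D},\biglo{D})$-shortcut for the partition and aggregate over $G[F]+H_F$) is solvable in $\biglo{D}$ rounds, and the same shortcut lets the fragment leader broadcast the identity of the chosen edge to all members. We add all chosen edges to the forest. Since each fragment selects exactly one outgoing edge, the contracted multigraph of fragments has out-degree one, so each of its connected components has at least two fragments; hence the number of fragments at least halves. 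Finally, to refresh fragment identifiers we run one more part-wise aggregation, now over the coarser partition (whose parts are connected through the previously and newly selected edges), computing inside each new fragment the minimum old fragment identifier and broadcasting it; this again costs $\biglo{D}$ rounds.

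After $O(\log n)$ phases only one fragment remains and the selected edges form a spanning tree; the classical cut/cycle argument for Borůvka, combined with the unique-minimum tie-breaking, shows it is the MST with respect to $\omega$. Since each phase performs a constant number of part-wise aggregations and broadcasts, each costing $\biglo{D}$ rounds by \Cref{teo:deterministicshortcut}, the total round complexity is $O(\log n)\cdot \biglo{D} = \biglo{D}$.

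The main obstacle is the bookkeeping of the dynamically changing partition: the deterministic shortcut construction of \Cref{teo:deterministicshortcut} must be re-run for the new partition in every phase, consistent fragment identifiers must be re-established, and one must observe that part-wise aggregation remains applicable even though the parts are connected only through the already-selected edges rather than through all of $G[F]$. This last point is harmless, since the shortcut framework only requires each part to be connected in $G$ (which it is), and the aggregation and broadcast primitives route over $G[F]+H_F$; the halving step and the acyclicity of the selected edge set then follow from the fixed global edge order exactly as in the standard analysis.
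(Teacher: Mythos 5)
This statement is a cited result (Proposition~3, attributed to Haeupler, Hershkowitz, and Wajc) and the paper itself gives no proof of it — it is used as a black box. Your reconstruction of the intended argument is correct and matches the standard Bor\r{u}vka-with-shortcuts approach from the cited work, which is also the mechanism that the paper's own Lemma~7 (\emph{teo:kmstcongest}) relies on and briefly describes (``the algorithm in \Cref{teo:mstcongest} simulates the well-known Boruvka's algorithm, which in $\mathcal{O}(\log n)$ iterations, merges subtrees''). Your treatment of the three delicate points — the need to recompute the shortcut for the new partition in every phase, re-establishing fragment identifiers, and verifying that the parts remain connected in $G$ so the shortcut framework applies — is accurate; the last point is in fact automatic, since the selected tree edges are a subset of $E$, so fragments are connected in $G[F]$ as well. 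The tie-breaking by $(\omega(e), a_e, b_e)$ to avoid cycles, the halving argument, and the $O(\log n)\cdot \tilde{\mathcal{O}}(D) = \tilde{\mathcal{O}}(D)$ total are all standard and correct.
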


A direct application of \Cref{teo:mstcongest} allows to compute spanning trees, concurrently, for a collection of subgraphs.

\begin{lemma}\label{teo:kmstcongest}
    There exists a deterministic algorithm in the \CONGEST\ model which, given planar graph $G=(V,E)$ with diameter $D$, and a partition $\mathcal{P}=\{P_1,...,P_k\}$ of $V$, such that $G[P_i]$ is connected for each $i\in [k]$, computes a collection of trees $T_1,...,T_k\leq G$, such that $T_i$ is a spanning tree of $G[P_i]$ for each $i\in [k]$.
\end{lemma}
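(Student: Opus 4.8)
\textbf{Proof plan for \Cref{teo:kmstcongest}.}
The plan is to reduce the problem of computing $k$ spanning trees in parallel to a single MST computation on a carefully weighted version of $G$, and then invoke \Cref{teo:mstcongest} as a black box. First I would assign weights to the edges of $G$ as follows: every edge $e=uv$ with both endpoints in the same part $P_i$ receives weight $1$, while every edge $e=uv$ with endpoints in distinct parts $P_i \neq P_j$ receives a ``large'' weight, say $\omega(e) = n$ (any value strictly larger than $n-1$, the maximum possible cost of a spanning tree built entirely of light edges, works). Each node can determine locally whether an incident edge is intra-part or inter-part in $O(1)$ rounds by exchanging part identifiers with its neighbors, so this weight function is computable with no overhead, and since $G[P_i]$ is connected for each $i$, the weights are well-defined on a connected input (we may assume $G$ itself is connected; if not, run the argument per connected component of $G$, which the shortcut machinery handles in parallel anyway).

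Next I would run the algorithm of \Cref{teo:mstcongest} on $(G,\omega)$ to obtain, in $\biglo{D}$ rounds, a minimum spanning tree $T$ of $G$. The key structural claim is that $T$ restricted to each part, $T_i := T[P_i]$, is a spanning tree of $G[P_i]$. To see this: since each $G[P_i]$ is connected, there is a spanning forest of $G$ using only light (weight-$1$) edges, consisting of one spanning tree per part, together with exactly $k-1$ inter-part edges to connect the parts into a single tree; such a spanning tree has total weight $(n-k) + (k-1)n < $ the weight of any spanning tree that uses more than $k-1$ heavy edges. Hence a minimum spanning tree $T$ uses exactly $k-1$ heavy edges — the minimum number needed to connect the $k$ parts — and therefore the light edges of $T$ form a spanning forest of the parts. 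Since $T$ is connected and has $n-1$ edges, its light-edge subgraph has $n-k$ edges distributed among $k$ parts each of which it must span; an edge count shows $T_i$ is connected and acyclic, i.e., a spanning tree of $G[P_i]$.

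Finally, I would address the output format: each node $u\in P_i$ must know the root of $T_i$, its parent in $T_i$, and its distance to the root. The parent is immediate — node $u$ simply discards its $T$-parent edge if that edge is heavy (in which case $u$ becomes a root of its $T_i$), and otherwise keeps it; this is a local decision. Designating the root and computing depths $d_{T_i}(u)$ requires a rooting/broadcast within each $T_i$, but this is exactly a part-wise aggregation (an $\upcast$/$\downcast$ along the forest $\{T_i\}$ whose parts partition $V$), which by \Cref{subsec:shortcuts} and \Cref{teo:deterministicshortcut} runs in $\biglo{D}$ rounds. I expect the main obstacle to be not the algorithm but this last bookkeeping step: one must check that the forest $\{T_i\}$, together with the induced node partition $\mathcal{P}$, still admits $(\biglo{D},\biglo{D})$-shortcuts — which it does, since shortcuts for planar graphs exist for \emph{any} partition of the vertex set and rooting a tree is a standard part-wise aggregation primitive, so the total round complexity remains $\biglo{D}$.
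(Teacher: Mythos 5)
Your proof is correct, but it takes a genuinely different route from the paper's. The paper opens up the black box of \Cref{teo:mstcongest} (which is a Boruvka-style merging algorithm), assigns weight $0$ to intra-part edges and $1$ to inter-part edges, and \emph{halts} each component's merging as soon as its minimum outgoing edge has weight $1$; the resulting components are then shown to be exactly the parts $P_i$. You instead treat \Cref{teo:mstcongest} as a pure black box: assign intra-part weight $1$ and inter-part weight $n$ (actually $0/1$ would suffice, since the MST minimizes the count of inter-part edges regardless of how heavy you make them), run it to completion, and observe that the output MST must contain exactly $k-1$ inter-part edges, so deleting those leaves a forest whose trees are spanning trees of the $G[P_i]$ (your edge-count argument for this is clean and correct). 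What each approach buys: your version avoids modifying the internals of the cited MST algorithm, which is arguably safer as a black-box reduction, at the cost of needing a post-processing step to root each $T_i$ and compute depths; the paper's early-stopping version produces rooted fragments for free out of the Boruvka merges but requires trusting that the truncated algorithm is still well-behaved. You are also right that the final re-rooting is not strictly a single part-wise aggregation (it needs root election, then orienting edges toward the root, then depths), but all of this is standard with the shortcut machinery in $\biglo{D}$ rounds — the paper itself does exactly this kind of thing later (e.g.\ in \rerootproblem), so this is bookkeeping rather than a gap.
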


\begin{proof}

The algorithm in \Cref{teo:mstcongest} simulates the well-known Boruvka's algorithm, which in $\bigo{\log n}$ iterations, merges subtrees of $G$. In the first iteration, each node $v \in V$ forms its own subtree $(\{v\},\emptyset)$. In each iteration $j$, each subtree selects its Minimum Outgoing Edge (MOE), the edge with the smallest weight connecting a node in the subtree to a neighbor in another subtree, and these components are merged. \Cref{teo:mstcongest} proves that each iteration can be completed in $\biglo{D}$ rounds.

    Then, our algorithm simulates the algorithm of \Cref{teo:mstcongest} using the weight function 
    \begin{equation*}
        \omega\left(e\right) = \begin{cases}
            0 & e=uv\text{ and }u,v\text{ are in same subset }P_i\\
            1 & \text{ otherwise}
        \end{cases}
    \end{equation*}

And stop the simulation of \Cref{teo:mstcongest} if the minimum outgoing edge has weight $1$. Then, after $\biglo{D}$ rounds, the algorithm stops with $T_1, \ldots, T_k$ as different trees, each corresponding to a spanning tree of its respective $G[P_i]$, which we now prove.

Let $T_i$ be a subtree of $G$ that contains a node $v \in P_i$ computed at the end of the algorithm described above. If $T_i$ did not merge with another subtree, it was because in a certain iteration $j$ of the simulation of \Cref{teo:mstcongest}, the minimum outgoing edge of $T_i$ to any node in $V - V(T_i)$ had a weight of $1$, causing it to stop. Hence, $P_i \subseteq V(T_i)$, because if not, due to the connectivity of $G[P_i]$, there would exist two nodes $v_1, v_2 \in P_i$ such that $v_1v_2 \in E$, $v_1 \in V(T_i)$, and $v_2 \notin V(T_i)$. In this case, $v_1v_2$ would be an outgoing edge with weight $0$, which is a contradiction.

Moreover, it must hold that $V(T_i) \subseteq P_i$. Indeed, let $u \in V(T_i) \cap P_i^c$. Let $k \leq j$ be the first iteration such that $u$ and a node $v$ of $P_i$ are in the same subtree. Initially, $u$ is its own subtree, so $k > 1$. In this iteration $k$, the subtree containing $u$ merges with a subtree $T_v$ containing $v$. By the rule of our algorithm, the MOE must have weight $0$, meaning it corresponds to an edge between two nodes of $P_j$. Let $u_2$ be the node in $T_v$ whose edge is the MOE. Repeating the argument now for $u_2$, there must exist an iteration $1 \leq k_2 < k$ such that $u_2$ and $v$ are in the same subtree. Continuing this iteration leads to the conclusion that in iteration $k' = 1$, there is a subtree with a node $u_{k'}$ from $P_j$ and a node $v_{k'}$ from $P_i$, which is a contradiction because, in the first iteration, each node is its own subtree. 
\end{proof}

{\it Partwise Aggregation Problems.} Using \Cref{teo:deterministicshortcut} directly, various problems can be resolved in $\biglo{D}$. However, most of these correspond to variations of broadcast problems, where the objective is to send a message to all other nodes or compute certain local functions that depend on the input from the rest of the nodes.

\begin{definition}[ Part-wise aggregation problem]
    Consider a network graph $G = (V, E)$ and suppose that the vertices are partitioned into disjoint parts $P_1 , P_2 , . . . , P_k$ such that the subgraph induced by each part connected. In the part-wise aggregation (PA) problem, the input is a value $x_v$ for each node $v \in V$ . The output is that each node $u \in P_i$ should learn an aggregate function of the values held by vertices in its part $P_i$, e.g., $\Sigma_{ v\in P_i} x_v$, $\min_{v\in P_i} x_v$, or $\max_{v\in P_i} x_v$. Alternatively, exactly one node in each part has a message and it should be delivered to all nodes of the part.
\end{definition}

The main result explained in \cite{haeupler2018round} states (for planar graphs) the following

\begin{proposition}[\cite{haeupler2018round}]\label{broadcasts}
    There exists a deterministic algorithm in the \CONGEST\ model which, given an specific partwise aggregation problem and a planar graph $G=(V,E)$ with diameter $D$, solves the PA problem over $G$ in $\biglo{D}$ rounds.
\end{proposition}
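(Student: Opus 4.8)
The plan is to derive the part-wise aggregation (PA) primitive from the deterministic low-congestion shortcut of \Cref{teo:deterministicshortcut}, by reducing a PA instance to a single bottom-up aggregation followed by a top-down broadcast over a forest of low-depth trees (one per part), and then scheduling all of these tree passes simultaneously.

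First I would invoke \Cref{teo:deterministicshortcut} to compute, in $\biglo{D}$ rounds, a $(c,d)$-low-congestion shortcut $H_1,\dots,H_k$ for the partition $\mathcal{P}=\{P_1,\dots,P_k\}$, with $c,d=\biglo{D}$; by property (i) each $G[P_i]+H_i$ has diameter at most $c$, and by property (ii) every edge of $G$ lies in at most $d$ of the $H_i$. Next, inside each $G[P_i]+H_i$ I would fix a root $r_i$ and a rooted spanning tree $B_i$ of depth at most $c$ --- such trees are furnished by the shortcut construction, or can be extracted by $\bigo{c}$ relaxation rounds within each $G[P_i]+H_i$, a task itself subsumed by the scheduling step below --- so that every edge of $G$ lies on at most $d$ of the trees $B_i$. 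Once the forest $\{B_i\}_i$ is available, PA is conceptually immediate: for the aggregation variant each node combines its value $x_v$ with the partial results received from its $B_i$-children via the associative operator ($\sum$, $\min$, or $\max$) and forwards the result to its $B_i$-parent, so after at most $c$ depth steps the root holds the aggregate of $P_i$, which a symmetric top-down pass then disseminates to all of $P_i$; the single-source broadcast variant is exactly this top-down pass.

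The hard part will be executing all $k$ tree passes concurrently without paying $\bigo{c\cdot d}=\biglo{D^2}$ rounds, since a single edge of $G$ may be required to serve up to $d$ distinct trees. My plan here is the random-delay scheduling: give each part $P_i$ a delay $\delta_i$ drawn uniformly from $\{0,\dots,\lceil d\rceil\}$, start $B_i$'s sweep at time $\delta_i$, and observe that the round in which a fixed edge $e$ transmits $B_i$'s message equals $\delta_i$ plus a fixed offset determined by the depth of $e$ in $B_i$; since at most $d$ trees use $e$ and the delays spread over a range of size at least $d$, a Chernoff bound gives $O(\log n)$ contention per edge per round with high probability, so the schedule runs with $O(\log n)$ slowdown in $\bigo{(c+d)\log n}=\biglo{D}$ rounds. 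To reach the \emph{deterministic} statement I would then replace the random delays by a deterministic assignment achieving the same bound, following the derandomization of \cite{haeupler2018round}: the analysis above uses only Chernoff-type concentration, so it suffices to choose the delays from a small family of $O(\log n)$-wise-independent assignments via the method of conditional expectations, and the shortcut is already deterministic by \Cref{teo:deterministicshortcut}. I expect this scheduling/derandomization step to be the principal obstacle; the rest is standard upcast/downcast over trees of depth $\biglo{D}$, with correctness following directly from associativity of the aggregation operator.
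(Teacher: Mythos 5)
The paper does not prove \Cref{broadcasts}; it is imported verbatim as a black-box result from \cite{haeupler2018round}, with no argument supplied in the text. So there is no ``paper's own proof'' to compare against --- what you have reconstructed is the internal machinery of the cited reference, not something the authors re-derive.

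Your reconstruction follows the standard shortcut-to-PA pipeline and captures the right high-level structure: deterministic $(c,d)$-shortcut with $c,d=\biglo{D}$ from \Cref{teo:deterministicshortcut}, low-depth trees inside each $G[P_i]+H_i$, pipelined upcast/downcast, and a scheduling step to avoid the naive $\bigo{c\cdot d}$ blowup. You also correctly flag the two steps that carry essentially all the difficulty in \cite{haeupler2018round} --- concurrent scheduling and its derandomization --- rather than pretending they are routine.

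Two places where you hand-wave more than you can afford. First, extracting rooted BFS-like trees $B_i$ of depth $\bigo{c}$ inside each $G[P_i]+H_i$ is \emph{not} ``subsumed by the scheduling step below'': running BFS concurrently over all the shortcut-augmented subgraphs is itself a congested routing problem with the same $\bigo{c\cdot d}$ obstruction you are trying to avoid, and in the shortcut literature this is handled either by having the shortcut construction emit the trees directly or by a separate bounded-congestion BFS argument --- it needs to be named, not deferred. Second, the derandomization claim that ``$O(\log n)$-wise independence plus conditional expectations suffices'' is asserted rather than argued; the deterministic scheduling in \cite{haeupler2018round} is a nontrivial contribution in its own right and is not a one-line derandomization of LMR-style random delays, so presenting it as such overstates how much of the work is done once the deterministic shortcut is in hand. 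Neither gap is a wrong turn --- the approach is the right one --- but a complete proof would have to fill both.
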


{\it Broadcasts over a spanning tree $T$}. Many applications and variations of broadcasts or part-wise aggregation problems has been studied. In particular, in \cite{Ghaffari2022}, the authors prove the following theorems

\begin{proposition}[\cite{Ghaffari2022}, rephrased]\label{lemma:hltrees}

There exist two deterministic algorithms $\mathcal{A}_1$ and $\mathcal{A}_2$ in the \CONGEST\ model. Given a planar graph \( G=(V,E) \), a spanning tree \( T \) of \( G \) rooted at \( r \in V \), a function \(\bigoplus\) known by all the nodes, and each node \( v \in V \) have a $\bigo{\log n}$-bit private input \( x_v \), these algorithms solve the following problems:

\begin{itemize}
    \item  $\mathcal{A}_1$ solves the \ancestorsumproblem: Each node \( v \) learns \( p_v \colon = \bigoplus_{\omega \in \text{anc}(v)} x_\omega \), where anc$(v)$ denotes the set of ancestors of \( v \) with respect to \( r \).
    \item $\mathcal{A}_2$ solves the \descendantsumproblem: Each node \( v \) learns \( p_v \colon = \bigoplus_{\omega \in \text{desc}(v)} x_\omega \), where desc$(v)$ denotes the set of descendants of \( v \) with respect to \( r \).
\end{itemize}

\end{proposition}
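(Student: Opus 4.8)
The plan is to derive both $\mathcal{A}_1$ and $\mathcal{A}_2$ from the part-wise aggregation primitive of \Cref{broadcasts}, invoked $O(\log^2 n)$ times, exploiting the single structural fact that every sub-tree and every sub-path of $T$ induces a \emph{connected} subgraph of $G$ (since $T\subseteq G$), so any partition of $V$ into such pieces is a legal input to \Cref{broadcasts}. As a preprocessing step I would compute the subtree sizes $n_T(v)$ for all $v$; with these in hand every vertex locally learns its \emph{heavy child}, hence the heavy-path decomposition of $T$ is known, and the compressed tree $\widehat T$ obtained by contracting each maximal heavy path has depth $O(\log n)$. The subtree sizes themselves I would obtain by $O(\log n)$ rounds of a deterministic tree-contraction scheme (rake and compress) on $T$: a rake folds all current leaves into their parents (partition of $V$ into stars, each a sub-tree of $T$), a compress contracts each maximal degree-two path (partition into those paths, each a sub-path of $T$), each such round is $O(1)$ part-wise aggregations, and an expansion sweep of the same cost distributes the sizes back.

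For the descendant-sum problem I would sweep $\widehat T$ from its leaves up to its root in $O(\log n)$ levels. When a heavy path $P=(u_1,\dots,u_\ell)$ with top $u_1$ is processed, each $u_i$ first receives, over the single light $G$-edge joining it to the top of each child heavy path $P'$, the already-computed value $s(P')=\bigoplus_{w\in T_{\mathrm{top}(P')}}x_w$, and sets $y_{u_i}=x_{u_i}\oplus\bigl(\bigoplus_{P'\text{ at }u_i}s(P')\bigr)$. Then $\bigoplus_{w\in T_{u_i}}x_w$ is exactly the suffix sum $y_{u_i}\oplus\cdots\oplus y_{u_\ell}$, and $s(P)=y_{u_1}\oplus\cdots\oplus y_{u_\ell}$. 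So each level reduces to computing prefix/suffix scans simultaneously along all (vertex-disjoint) current heavy paths; this scan primitive is itself realized by $O(\log n)$ part-wise aggregations via the standard balanced-binary-tree up-sweep/down-sweep on each path, where at every scan level the contiguous sub-paths that must aggregate or broadcast are again sub-paths of $T$, hence connected in $G$. Summing, descendant-sum costs $O(\log^2 n)$ invocations of \Cref{broadcasts}, i.e. $\biglo{D}$ rounds. Algorithm $\mathcal{A}_1$ is the mirror image: sweep $\widehat T$ from the root downward, feed each heavy path $P$ the value $p_{\mathrm{parent}(u_1)}$ already known at $u_1$ through its light edge, and compute along $P$ the prefix scan $p_{u_i}=p_{\mathrm{parent}(u_1)}\oplus x_{u_1}\oplus\cdots\oplus x_{u_i}$ with the same $O(\log n)$-PA scan routine.

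The main obstacle, and where the care must go, is the scan-along-disjoint-paths primitive: one has to phrase each up-sweep and down-sweep level as a genuine part-wise aggregation instance — checking that the families of sub-paths used are honest partitions of subsets of $V$, that each piece is connected in $G$, and that only $O(\log n)$ distinct partitions arise — and to verify that the cross-level interaction over light edges of $\widehat T$ respects the $O(\log n)$-bit bandwidth bound (each vertex communicates across only its light $G$-edges, once per level). A secondary point of attention is the bootstrapping of $n_T(\cdot)$: the tree-contraction computation that produces it must be set up so as not to tacitly presuppose the tree scan it is meant to enable, which is exactly why rake and compress are cast as aggregations over star/path partitions rather than over ancestor sets. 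Everything else is routine polylogarithmic bookkeeping on top of \Cref{broadcasts}, and the determinism is inherited because every subroutine invoked (\Cref{broadcasts}, the contraction steps, and all local computation) is deterministic.
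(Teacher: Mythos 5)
The paper does not prove this proposition; it is imported from \cite{Ghaffari2022} as a black box, so there is no in-paper argument to compare against, and I can only judge your construction on its own terms. Your route — heavy-path decomposition of $T$, an $O(\log n)$-level sweep through the compressed tree $\widehat T$, and parallel prefix/suffix scans along vertex-disjoint heavy paths realized by $O(\log n)$ part-wise aggregations per level — is the standard machinery in the shortcuts literature for lifting \Cref{broadcasts} to rooted tree aggregations. It is sound in outline: every block at every scan level is a contiguous sub-path of $T\subseteq G$, hence connected; heavy paths are vertex-disjoint, so the partitions passed to \Cref{broadcasts} are legal; the cross-level transfers ride single light $T$-edges, one $O(\log n)$-bit value each; and $O(\log^2 n)$ PA calls is still $\biglo{D}$.

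The one place your accounting fails as written is the spot you yourself flag, the bootstrap of $n_T(\cdot)$. If compress contracts a whole maximal degree-two chain in one phase, the expansion sweep must recover $n_T(u_i)$ at \emph{every} internal chain node $u_i$, and that is itself a suffix scan of the accumulated contributions along the chain — $O(\log n)$ PA calls, not $O(1)$, per compress phase. This does not threaten the final $\biglo{D}$ bound (the total PA count remains polylogarithmic), but the stated ``$O(1)$ part-wise aggregations per round'' is incorrect. Either switch to the Miller--Reif variant that contracts only alternate chain nodes, where each round really is $O(1)$ PAs and the un-contraction step is local over a two-hop virtual edge whose connecting vertex can serve as a relay, or simply re-account each compress phase as $O(\log n)$ PAs. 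With that repaired the argument goes through.
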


An important application of part-wise aggregation functions, besides computing the value of the function itself, is enabling nodes to determine the \id\ of a specific node. For example, in $\min$ or $\max$ functions, nodes can identify the \id\ of the node with the input $x_v$ that minimizes/maximizes the function. We summarize some results in the following lemma. To our knowledge, these problems are not explicitly stated in previous works, although some of them are used implicitly and follow a smilar spirit deterministic algorithms presented in \cite{Ghaffari2022}. By completeness, we include the proofs of the results below.
\begin{lemma}\label{lemma:somebroadcasts}
    Given a planar graph $G=(V,E)$ with diameter $D$, a spanning tree $T$ of $G$, a partition $\mathcal{P} = \{P_1,....,P_k\}$ of $V$, $T_1,...T_k$ spanning trees with root $r_i\in P_i$ of each $P_i\in\mathcal{P}$, such that $T_i = T_{r_i}$ for all $i\in [k]$, and a $\log n$-input string $x_v$ for each $v\in V$. There exists deterministic algorithms in the \CONGEST\ model that solves, in parallel, in each $P_i$, the following problems in $\biglo{D}$ rounds.

    \begin{enumerate}[label={\bf \arabic*.}]
        \item \minproblem\ and \maxproblem: all the nodes in $P_i$ acknowledge the \id\ of a node $v$ such that $x_v = arg\min_{v\in P_i} x_v$ and $x_v = arg\max_{v\in P_i} x_v$, respectively.
        \item \sumsubsetproblem: All the nodes in $P_i$ knows the value $n_i = \left|P_i\right|$.
        \item \sumtreeproblem: All nodes $v\in V(T_i)$ acknowledge the number of nodes $|V((T_i)_v)|$, where $(T_i)_v$ is the subtree of $T_i$ rooted in $v$.
        \item \rangeproblem: If the input $x_v$ represents real values, and $R_i =[m_i,M_i]\subseteq \Q$ is a range known by the nodes, all the nodes in $P_i$ acknowledge the \id\ of an arbitrary node $v\in P_i$ such that $x_v\in R$.
        \item \detectancestor\ and \detectdescendant$\colon$ If in each $P_i$ there is a unique node $v_0\in P_i$ known by all nodes in $P_i$, all nodes in $P_i$ can acknowledge if $v_0$ is ancestor or descendant of them in $T_i$, respectively.
    \end{enumerate}
\end{lemma}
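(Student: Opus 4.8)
The plan is to reduce every one of the five problems to a constant number of applications of the primitives already established in the excerpt: the part-wise aggregation problem (\Cref{broadcasts}), the ancestor/descendant sum problems on a rooted spanning tree (\Cref{lemma:hltrees}), and the MST-based subroutine (\Cref{teo:mstcongest} and \Cref{teo:kmstcongest}). The key structural fact I would exploit is that all of the parts $P_1,\dots,P_k$ and their spanning trees $T_1,\dots,T_k$ are node-disjoint, so every primitive can be run simultaneously on all parts with only a constant-factor overhead in the number of rounds; each running time is therefore $\biglo{D}$.

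First I would handle \textbf{1 (\minproblem/\maxproblem)}. Running the PA problem with the aggregation function $\min$ (resp.\ $\max$) on the values $x_v$ lets every node of $P_i$ learn $m_i:=\min_{v\in P_i}x_v$. To also learn the \id\ of a minimizer, I would run a second PA round in which each node $v$ with $x_v=m_i$ submits its own \id\ and every other node submits $+\infty$ (or a sentinel larger than any \id), aggregated again by $\min$; ties are broken by taking the minimum-\id\ node. Two PA rounds, so $\biglo{D}$. For \textbf{2 (\sumsubsetproblem)} I would run the PA problem with the aggregation function $\sum$ on the constant input $x_v=1$, giving every node of $P_i$ the value $n_i=|P_i|$. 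For \textbf{3 (\sumtreeproblem)}, this is exactly the \descendantsumproblem\ of \Cref{lemma:hltrees} applied to the tree $T_i=T_{r_i}$ with input $x_v=1$: node $v$ learns $\sum_{w\in\mathrm{desc}(v)}1=|V((T_i)_v)|$ (descendants are counted including $v$ itself, matching $n_{T_i}(v)$ as used elsewhere in the paper). For \textbf{4 (\rangeproblem)}, I would have each node $v$ with $x_v\in R_i=[m_i,M_i]$ submit its \id\ and every other node submit a sentinel, then aggregate by $\min$ via the PA problem; every node of $P_i$ learns the minimum \id\ among nodes whose value lies in the range (the statement only asks for an arbitrary such node; if none exists, the sentinel is returned and the nodes detect emptiness). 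For \textbf{5 (\detectancestor/\detectdescendant)}, since $v_0$ is known to all nodes of $P_i$, I would run the \ancestorsumproblem\ on $T_i$ with input $x_v=1$ if $v=v_0$ and $x_v=0$ otherwise: then for every node $u$, $p_u=\sum_{w\in\mathrm{anc}(u)}x_w$ equals $1$ precisely when $v_0$ is an ancestor of $u$, and $0$ otherwise, so $u$ detects ancestry; symmetrically, running the \descendantsumproblem\ with the same input lets $u$ detect whether $v_0$ is a descendant of $u$.

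Each reduction uses only $O(1)$ invocations of \Cref{broadcasts} or \Cref{lemma:hltrees}, each costing $\biglo{D}$ rounds, and the node-disjointness of the parts means the parallel executions do not interfere; hence the total is $\biglo{D}$ rounds as claimed. The only mild subtlety I anticipate is the $O(\log n)$-bit bandwidth constraint: identifiers, the counters $n_i$, and subtree sizes are all $O(\log n)$ bits, and the sentinel values can be encoded in $O(\log n)$ bits as well, so all messages fit within a single \CONGEST\ message and no packing argument is needed. I do not expect any genuine obstacle here — the lemma is essentially a bookkeeping corollary of the primitives assembled in \Cref{subsec:shortcuts}, and the proof is a matter of writing down the correct input encoding and aggregation function for each of the five items.
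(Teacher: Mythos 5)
Your proposal is correct and follows essentially the same approach as the paper's proof: each item is reduced to a constant number of calls to the part-wise aggregation primitive (\Cref{broadcasts}) and to the ancestor/descendant-sum primitives (\Cref{lemma:hltrees}), with the structural condition $T_i=T_{r_i}$ ensuring the $T$-based primitives of \Cref{lemma:hltrees} behave correctly inside each part (for the ancestor sum in item 5, ancestors of $r_i$ outside $P_i$ contribute nothing since their input is $0$). The only cosmetic difference is that for items 1 and 4 you invoke the generic PA problem directly, whereas the paper first routes the minimum to the roots via \Cref{lemma:hltrees} and then broadcasts; both give $\biglo{D}$ and your version is, if anything, slightly cleaner.
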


\begin{proof}
    \begin{enumerate}
        \item We explain how to solve \minproblem, as \maxproblem\ is analogous. First, using $\min$ as the partwise aggregation function and the spanning tree $T$, using \Cref{lemma:hltrees}, in $\biglo{D}$ rounds the roots $r_1,...r_k$ can learn $p_{r_i} = \min_{\omega\in anc_T(r_i)}x_\omega$, where $anc_T(v)$ is respect to $T$, but as $T_{r_i} = T_i$, then $anc_{T}(r_i) = anc_{T_i}(r_i)$, and then each root can learn the minimum input value of all the nodes in $T_i$.
        Second, in $\biglo{D}$ rounds thanks to $\Cref{broadcasts}$, each root can spreed the value $p_{r_i}$ in its subset $P_i$.

        Finally, in each component $P_i$, each node $v\in P_i$ can define its input value

        \begin{equation*}
            \overline{x}_v = \begin{cases}
                    \id(v) & x_v = p_{r_i} \\
                    0 &\text{ otherwise}
            \end{cases}
        \end{equation*}

        Then, through a \maxproblem, each root $r_i$ can learn the $\id$ of a node $\omega\in P_i$ such that $\overline{x}_\omega = \id(\omega)$. Using $\Cref{broadcasts}$ this \id\ can be learned by all the nodes in $P_i$. All these can be done in $\biglo{D}$ rounds.

        \item Using \Cref{lemma:hltrees} using $\overline{x}_v = 1$ for each $v\in V$ and addition as the partwise aggregation function, in $\biglo{D}$ rounds each node $v\in V$ can learn $p_v = |T_v|$, but as $T_{r_i} = T_i$, then $T_v = (T_i)_{v}$.
        
        \item Direct application of $\Cref{broadcasts}$ using $\overline{x}_v = 1$ for each node $v\in V$ and addition as the partwise aggregation function.
        \item In each subset $P_i$, each node $v\in P_i$ defines

        \begin{equation*}
            \overline{x}_v = \begin{cases}
                \id(v) & x_v\in R_i\\
                0 &\text{ otherwise}
            \end{cases} 
        \end{equation*}
    \end{enumerate}
        Then, similar to the proof of $1.$, through \maxproblem, each root $r_i$ can learn the \id\ of a node $\omega$ such that $\overline{x}_\omega = \id(\omega)$ and $x_\omega\in R_i$, and through a partwise aggregation problem each root spread this \id\ to the rest of the nodes in $P_i$. All these can be done in $\biglo{D}$ rounds thanks to \Cref{broadcasts} and \Cref{lemma:hltrees}.

        \item Each node can define $\overline{x}_v = 0$ if $v\neq v_0$ and $\overline{x}_v = 1$ if $v= v_0$. Using $\bigoplus$ as addition $+$, using \Cref{lemma:hltrees} all nodes v in $P_i$ can learn $\Sigma_{\omega\in desc(v)}x_v$ and $\Sigma_{\omega\in anc(v)}x_v$, if these additions are equal to one, then $v$ is ancestor or descendant or $v_0$, respectively.

        \item Each node defines $\overline{x}_v = 0$ if $v\neq v_0$ and $\overline{x}_{v_0} = 1$. Then using the addition function and \Cref{lemma:hltrees} each node can know if its ancestor of descendant of $v_0$, because in such cases $\Sigma_{\omega \in \text{desc}(v)} x_\omega = 1$ or $\Sigma_{\omega \in \text{anc}(v)} \overline{x}_\omega = 1$, respectively. 
\end{proof}

\subsection{Deterministic subroutines.}\label{subsec:subroutinescongest}

In this Section we describe several deterministic algorithms that will be used later to obtain our deterministic cycle separator and DFS algorithms. Even some of these problems can by implied by prior work (for which we include the reference), we include a proof for completeness and self-content of our work.

\subsubsection{DFS-orders and computing the weights.}

To describe the deterministic algorithm for computing cycle separators, we first demonstrate that some of the subproblems defined in \Cref{sec:highlevelideas} can be resolved in $\biglo{D}$ rounds.

First, the \dfsorderproblem\ problem is defined as follows.

\begin{algorithmbox}{\dfsorderproblem}
		
		\textbf{Input}: A planar configuration $(G,\mathcal{E},T)$, a partition of the nodes $\mathcal{P}=\{P_1,...,P_k\}$ and a spanning tree $T_i$ of each induced subgraph $G[P_i]$. \\
		
		\textbf{Output:} For each $P_i\in\mathcal{P}$, each node $v\in P_i$ acknowledge $\leftpi(v)$ and $\rightpi(v)$ with respect to $T_i$ and $\mathcal{E}$\footnote{Each induced subgraph $G[P_i]$ uses the induced combinatorial planar embedding given by $\mathcal{E}$ restricted to $G[P_i]$}.

	\end{algorithmbox}

In the following lemma, we show that the \dfsorderproblem\ can be solved in $\biglo{D}$ rounds. 

 \begin{lemma}
 	
 \label{lemma:dfsordercongest}
     
     There exists a deterministic algorithm in the \CONGEST\ model, which solves \dfsorderproblem\ in $\biglo{D}$ rounds.
 \end{lemma}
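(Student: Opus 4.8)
\textbf{Proof plan for \Cref{lemma:dfsordercongest}.}

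The plan is to reduce the computation of $\leftpi^{T_i}$ and $\rightpi^{T_i}$ to a small number of part-wise aggregation problems and ancestor/descendant-sum problems over the trees $T_i$, each of which is solvable in $\biglo{D}$ rounds by \Cref{broadcasts}, \Cref{lemma:hltrees} and \Cref{lemma:somebroadcasts}. The central observation is that the DFS index of a node $v$ in $T_i$, say in the \rightorder, can be written as a sum over the ancestors of $v$ in $T_i$ of quantities that are themselves subtree sizes: for each ancestor $a$ of $v$ (including $v$ itself), one adds $1$ (for $a$), plus the total number of nodes in the subtrees $ (T_i)_{c}$ for every child $c$ of $a$ that is visited \emph{before} the branch of $v$ according to the local clockwise order $t_a$. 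Equivalently, writing $P_{uv}\colon r_i=a_0,a_1,\dots,a_m=v$ for the $T_i$-path from the root $r_i$ to $v$, we have
\[
\rightpi^{T_i}(v) \;=\; \sum_{j=0}^{m}\Big( 1 + \!\!\sum_{\substack{c\in \mathrm{ch}(a_j)\\ t_{a_j}(c) < t_{a_j}(a_{j+1})}} \!\!\! n_{T_i}(c)\Big),
\]
with the convention that for $j=m$ the inner sum is empty, and $n_{T_i}(\cdot)$ denotes subtree sizes in $T_i$. The \leftorder\ is symmetric, replacing $t_{a_j}(c)<t_{a_j}(a_{j+1})$ by $t_{a_j}(c)>t_{a_j}(a_{j+1})$.

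First I would run \Cref{teo:kmstcongest} (if not already given) to make sure every node knows its parent in $T_i$ and that the trees $T_i$ are available; then I would invoke item~3 of \Cref{lemma:somebroadcasts} (the \sumtreeproblem) so that every node $v$ learns $n_{T_i}(v)=|V((T_i)_v)|$, in $\biglo{D}$ rounds. Next, each node $a$, knowing the clockwise order $t_a$ of its incident edges (from $\mathcal{E}$, restricted to $G[P_i]$) and the subtree sizes $n_{T_i}(c)$ reported by each of its children $c$, locally computes for every child $c$ the prefix quantity $\sigma_a(c) := 1 + \sum_{c'\in\mathrm{ch}(a),\, t_a(c')<t_a(c)} n_{T_i}(c')$ (and the analogous suffix quantity for the \leftorder); this is purely local once the children have sent their subtree sizes along tree edges, which costs $\bigo{1}$ rounds since each node has at most one parent-edge of interest per child. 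Each non-root node $v$ then sets $x_v := \sigma_{p(v)}(v)$ where $p(v)$ is its parent, and $x_{r_i}:=1$. Finally I would apply the \ancestorsumproblem\ algorithm $\mathcal{A}_1$ of \Cref{lemma:hltrees} with $\bigoplus = +$ to these inputs over $T_i$; since $\mathrm{anc}_{T_i}(v)$ is exactly the set of $a_0,\dots,a_m$ above, the value $p_v=\sum_{a\in\mathrm{anc}(v)} x_a$ returned to $v$ equals $\rightpi^{T_i}(v)$. Running $\mathcal{A}_1$ a second time with the suffix inputs yields $\leftpi^{T_i}(v)$. All of these steps run in parallel over all parts $P_i$ because they are instances of part-wise aggregation / rooted-tree aggregation, which by \Cref{broadcasts} and \Cref{lemma:hltrees} respect the partition and take $\biglo{D}$ rounds in total.

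The main obstacle I anticipate is not the round complexity but getting the combinatorial bookkeeping of the local orders exactly right: one must be careful that $t_v$ has been normalized so that the parent edge is at position $1$ (as the paper assumes), that "visited before" in the DFS corresponds to the correct direction of the clockwise order (clockwise for \rightorder, counterclockwise for \leftorder), and that the root $r_i$ is treated consistently (it has no parent edge, so its "first child" is the one with smallest or largest $t_{r_i}$ depending on the order). A secondary point to verify is that the inner sums $\sigma_a(c)$ and the $x_v$ values fit in $\bigo{\log n}$ bits, which holds because they are bounded by $|P_i|\le n$, so they can indeed be transmitted in a single \CONGEST\ message; this guarantees that the reduction to \Cref{lemma:hltrees} and \Cref{lemma:somebroadcasts} is legitimate. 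Once these consistency checks are in place, correctness of the formula for $\rightpi^{T_i}$ follows by a straightforward induction on the depth of $v$ in $T_i$, and the lemma follows.
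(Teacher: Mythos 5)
Your proof is correct in its key idea, but it takes a genuinely different route from the paper. The paper computes $\leftpi^{T_i}$ and $\rightpi^{T_i}$ by an iterative merging scheme: $\bigo{\log n}$ phases, in each of which disjoint subtrees of $T_i$ at alternating depths are merged and the partial DFS indices are shifted by broadcasting offsets along the merged subtree (a pointer-doubling/Bor\r{u}vka-style argument built directly on \Cref{broadcasts}). You instead observe the closed-form identity
\[
\rightpi^{T_i}(v) \;=\; \sum_{a\in \mathrm{anc}_{T_i}(v)} x_a,
\qquad
x_{r_i}=1,\quad
x_v = 1 + \!\!\sum_{\substack{c'\in\mathrm{ch}(p(v))\\ t_{p(v)}(c')<t_{p(v)}(v)}}\!\! n_{T_i}(c'),
\]
and thereby reduce the whole problem to a single \sumtreeproblem\ instance (to learn subtree sizes), one round of local exchange over tree edges (to let each parent compute the per-child offsets $\sigma_a(c)$ and push them down), and one \ancestorsumproblem\ invocation. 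Your decomposition is cleaner and, once the primitives are taken as black boxes, avoids the phase-by-phase bookkeeping of the paper's argument.

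There is one technical point you should address to make the reduction airtight. Proposition~\ref{lemma:hltrees} (algorithm $\mathcal{A}_1$) is stated for a single spanning tree $T$ of the whole graph $G$, not for a family $\{T_i\}$ indexed by a partition; the paper's Lemma~\ref{lemma:somebroadcasts} only lifts it to the partitioned setting under the hypothesis $T_i = T_{r_i}$, i.e.\ each $T_i$ is the full descendant subtree of some node $r_i$ in the global tree $T$. Under that hypothesis, the \emph{descendant}-sum ($\mathcal{A}_2$) restricts cleanly to each part, which is why \sumtreeproblem\ is unproblematic. But for the \emph{ancestor}-sum you invoke, running $\mathcal{A}_1$ on $T$ gives $p_v = \sum_{a\in\mathrm{anc}_T(v)} x_a$, which for $v\in P_i$ includes the contributions of the strict ancestors of $r_i$ that lie \emph{outside} $P_i$. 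You therefore need one additional step: $r_i$ computes and (via \Cref{broadcasts}) broadcasts the offset $p_{r_i}-x_{r_i}$ to $P_i$, and every $v\in P_i$ subtracts it, recovering the ancestor-sum restricted to $T_i$. (Alternatively, one must argue directly that a per-part ancestor-sum over the forest $\{T_i\}$ can be executed with the shortcut machinery, which is true in spirit but not what Proposition~\ref{lemma:hltrees} literally says.) This costs one more $\biglo{D}$-round broadcast and does not affect the claimed round complexity, but it is a gap in the argument as written. The rest of the bookkeeping (normalization $t_v(p(v))=1$, clockwise vs.\ counterclockwise, values fitting in $\bigo{\log n}$ bits, constant-time local exchange even at high-degree nodes) is handled correctly.
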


    \begin{proof}

        We explain how the algorithm computes $\leftpi$. The algorithm for computing $\rightpi$ is symmetric.

First, using \Cref{lemma:somebroadcasts}, within $\biglo{D}$ rounds, each node $v$ in each induced subgraph $G[P_i]$ can determine the number of nodes $n_{T_i}(v)$ in its subtree $(T_i)_v$. The algorithm then proceeds through $\bigo{\log n}$ phases, each consisting of $\biglo{D}$ rounds. We will describe how the algorithm operates within a specific subgraph $G[P_i]$ and subsequently demonstrate how this can be executed in parallel across all induced subgraphs by a set $\mathcal{P}$.

Recall that in a distributed representation of a spanning tree, we assume each node knows its depth and the \id\ of its parent. In each phase $j$, we have a collection of disjoint subtrees $T_j(v_1), \ldots, T_j(v_k)$ of $T_i$. Each subtree $T_j(v_h)$ is rooted at $v_h \in P_i$ and has a depth $d(T_j(v_h))$. The nodes in each subtree $T_j(v_h)$ are aware of the (correct) \leftorder\ with respect to $(T_i)_{v_h}$, the subtree of $T_i$ rooted at $v_h$ that contains all the descendants of $v_h$ in $T_i$.\footnote{The subtrees $T_j(v_h)$ may not be the entire subtree $(T_i)_{v_h}$ of $T$ with root $v_h$, but rather a subtree of it.
}

In the first phase, each node starts as its own subtree, i.e., $T_1(v) = ({v}, \emptyset)$ for all $v \in P_i$. The depth $d(T_1(v))$ in the first phase is initialized as the depth of $v$ in $T_i$. The \id\ of $T_1(v)$ is the \id\ of $v$, and each node sets $\leftpi^1(v) = 1$.
In phase $j$, each subtree $T_j(v_h)$ at an odd depth of $2s+1$ joins the subtree $T_j(v_t)$ at an even depth of $2s$, such that the parent $p_i(v_h)$ of $v_h$ in $T_i$ is in $T_j(v_t)$. Let $T_j(v_{h_1}), \ldots, T_j(v_{h_l})$ be all the subtrees at depth $2s+1$ that will join $T_j(v_t)$ in phase $j$. Joining $T_j(v_h)$ to $T_j(v_t)$ means that $v_h$ updates its parent as its parent in $T_i$, and each node in $T_j(v_h)$ needs to update its \leftorder\ and its depth, now according to the subtree $T_{j+1}(v_t) = (V(T_j(v_t)) \cup V(T_j(v_{h_1})) \cup \ldots \cup V(T_j(v_{h_l})), E(T_j(v_t)) \cup E(T_j(v_{h_1})) \cup \ldots \cup E(T_j(v_{h_l})) \cup \{(v_h, p_i(v_h))\})$.

To update the left order position of all nodes in $T_j(v_{h_1}), \ldots, T_j(v_{h_l})$, recall that each node $z$ in $T_j(v_t)$ has the correct position in $\leftpi$ with respect to $(T_i)_{v_t}$. This position considers all nodes in $(T_i)_{v_t}$ even if they are not currently in $T_j(v_t)$. In phase $j$, if $z$ is a leaf in $T_j(v_t)$, then all $T_i$-children $z'$ of $z$, which are roots of a subtree $T_j(z')$, join $(T_i)_{v_t}$. Consequently, $z$ can determine in one round the number of nodes in the subtree of each child in $T_i$. After this, $z$ transmits in one round to each $T_i$ child its correct position in $\leftpi$ in $T_{j+1}(v_t)$ according to $\mathcal{E}$ and the number of nodes that each branch of its subtree has in $T_i$. Specifically, if $z$ has $q$ $T_i$-children that will join it in phase $j$, let $v_1, \ldots, v_q$ be the counterclockwise order of the $q$ nodes in $t_z$ of $\mathcal{E}$. Then, for each node $v_x$ with $x \in [q]$, the left order position $\leftpi(v_x) = \leftpi(z) + 1 + \sum_{y<x} n_{T_i}(v_y)$ is assigned, and the range $[\leftpi(z) + 1 + \sum_{y<x} n_{T_i}(v_y), \leftpi(z) + 1 + \sum_{y<x} n_{T_i}(v_y) + n_{T_i}(v_x)]$ is designated for all the left order positions of the nodes in $(T_i)_{v_x}$. Additionally, each node $v_x$ updates its depth to be the depth of $z$ in $T_j(v_t)$ plus one.

Once each root of all subtrees $T_j(v_{h_1}), \ldots, T_j(v_{h_l})$ that will be joined to $T_j(v_t)$ knows their new depth and left order position in $T_{j+1}(v_t)$, according to \Cref{broadcasts}, within $\biglo{D}$ rounds, they can transmit their new depth $d_{j+1}(v_{h_x})$ and left order position $\leftpi(v_{h_x})$ to the rest of the nodes in their respective subtree $T_j(v_{h_x})$. With this new information, each node $v$ in $T_j(v_{h_x})$ updates its depth and left order position in $T_{j+1}(v_t)$ as $d_j(v) + d_{j+1}(v_x)$ and $\leftpi^{j+1}(v) = \leftpi^j(v) + \leftpi^{j+1}(v_x)$, respectively. All nodes in $T_j(v_t)$ retain the same information they already have (the same parent, depth, and left order position). Also, using \Cref{broadcasts}, each node of a subtree $T_j(v_i)$ joined to a subtree $T_j(v_t)$ acknowledges \id($v_t$) as the \id\ of the subtree, and all the nodes in the new subtree $T_{j+1}(v_t)$ update the depth of the new subtree to $\lfloor \text{depth}_{T_i}(v_j) / 2^j \rfloor$.

Each phase thus requires $\biglo{D}$ rounds to be completed, and the maximum depth of a subtree in each phase $j$ decreases by half. Consequently, after $\bigo{\log n}$ rounds, in the final round $j$ of the algorithm, each subgraph $G[P_i]$ contains a unique subtree $T_j(r_{T_i})$ such that $T_j(r_{T_i}) = (T_i)_{r} = T_i$. Therefore, after $\biglo{D}$ rounds, each subgraph $G[P_i]$ has computed the left order of $T_i$.

    \end{proof}

Using \Cref{lemma:dfsordercongest}, we can compute the weights of any real fundamental edge in $\biglo{D}$ rounds.

\begin{algorithmbox}{\weightproblem}

\textbf{Input}: A planar configuration $(G,\mathcal{E},T)$. A partition $\mathcal{P}=\{P_1,...,P_k\}$ of $G$ and a spanning tree $T_i$ of each $G[P_i]$.\\

\textbf{Output:} In each induced subgraph $G[P_i]$, the endpoints of every $T_i$-real fundamental edge $e$ know the value of $\omega(F_e)$ according to \Cref{def:weights} of its unique real fundamental face $F_e$. 

\end{algorithmbox}

\begin{lemma}\label{teo:weightscongest}
    There exists a deterministic algorithm in the \CONGEST\ model, which solves \weightproblem\ in $\biglo{D}$ rounds.
\end{lemma}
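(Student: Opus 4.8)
The plan is to reduce \weightproblem\ to a bounded number of applications of \Cref{lemma:dfsordercongest}, \Cref{lemma:somebroadcasts}, and the ancestor/descendant sum primitives of \Cref{lemma:hltrees}, together with a constant number of rounds of direct communication along the edge $e$ itself. First I would run the algorithm of \Cref{lemma:dfsordercongest} to have every node $v\in P_i$ learn $\leftpi^{T_i}(v)$ and $\rightpi^{T_i}(v)$ with respect to $T_i$, and simultaneously use \sumtreeproblem\ from \Cref{lemma:somebroadcasts} so every node knows $n_{T_i}(v)$; note also that $d_{T_i}(v)$ is available since the distributed representation of $T_i$ stores each node's depth. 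All of this costs $\biglo{D}$ rounds. The remaining task is, for every $T_i$-real fundamental edge $e=uv$, to have $u$ and $v$ agree on which case of \Cref{def:weights} applies and to assemble the formula.

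Next I would handle the case determination. Using \detectancestor\ / \detectdescendant\ from \Cref{lemma:somebroadcasts} — or more directly, a single exchange across $e$ of $(\leftpi(u), n_{T_i}(u))$ and $(\leftpi(v), n_{T_i}(v))$, since $u\in T_u$'s subtree iff $\leftpi(u)\le\leftpi(v) < \leftpi(u)+n_{T_i}(u)$ — the endpoints determine whether one is an ancestor of the other. If $u$ is an ancestor of $v$, the node $z$ (the first node on the $T_i$-path from $u$ to $v$) is the child of $u$ in $T_i$ that is an ancestor of $v$; $v$ can identify $\leftpi(z)$ and $\rightpi(z)$ and $d_{T_i}(z)$ by noting that among $u$'s children, $z$ is the unique one whose subtree interval contains $\leftpi(v)$, and this information can be propagated down the $T_i$-path from $u$ using the \ancestorsumproblem\ primitive (or simply: $u$ broadcasts, for each child, the child's interval; this is $O(\deg u)$ but can be reorganized into a part-wise aggregation so that each descendant learns the relevant child). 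The orientation ($\mathcal{E}$-left vs.\ $\mathcal{E}$-right) is decided locally at $u$ by comparing $t_u(v)$ with $t_u(z)$. Once $z$ is identified, $v$ knows $\leftpi(z),\rightpi(z),d_{T_i}(z)$, and $v$ itself knows $\leftpi(v),\rightpi(v),d_{T_i}(v)$.

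The last ingredient is $p_{F_e}(u)$ and $p_{F_e}(v)$, the number of nodes of $T_u$ (resp.\ $T_v$) lying in $F_e$. Here I would invoke the characterizations already proved: by \Cref{remark:partition} / the structure in \Cref{lem:weighting,lem:weighting2}, $F_e\cap T_u$ consists exactly of those descendants of $u$ whose position in the appropriate DFS order (left or right, according to orientation, and using the $\leftorder$ in the non-ancestor case) lies in an explicitly computable interval determined by $t_u(v)$, $t_u(z)$ and the neighbor orderings on the $T_i$-path — essentially the descendants of $u$ "on the correct side" of the path to the LCA. Concretely, $p_{F_e}(u)$ can be written as a sum over certain children-subtrees of nodes on the $u$-side path, and this is computed by a \descendantsumproblem\ call where each node contributes $1$ if it is flagged as lying in $F_e$; the flag is determined by one exchange across $e$ (telling $u$ the value $t_u(v)$, i.e.\ on which side of $u$ the edge leaves) plus the locally-known neighbor orderings. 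Symmetrically for $p_{F_e}(v)$. Since each primitive is a part-wise aggregation or ancestor/descendant-sum over the forest $\{T_i\}$, \Cref{broadcasts} and \Cref{lemma:hltrees} give $\biglo{D}$ rounds, and there are only $O(1)$ such calls, for a total of $\biglo{D}$.

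I expect the main obstacle to be computing $p_{F_e}(u)$ and $p_{F_e}(v)$ correctly and in parallel over all fundamental edges $e$ at once: a single node $u$ may be an endpoint of many fundamental edges, and for each the set $F_e\cap T_u$ differs, so one cannot naively run a separate descendant-sum per edge. The resolution is that the relevant quantity for a fixed $u$ is monotone in the "side" parameter $t_u(v)$ — the sets $F_e\cap T_u$ for different $e$ incident to $u$ are nested (this is implicit in \Cref{lemma:propaug}) — so a single \descendantsumproblem\ computing, for every descendant, a prefix-count indexed by position in $t_u$, lets every incident edge read off its value; packing these prefix sums into $\biglo{D}$-round aggregations over $T_i$ is the technical heart, and it is where I would spend the most care, appealing to \Cref{remark:partition} and \Cref{prop:caracinside} for correctness of the interval characterization.
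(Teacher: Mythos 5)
Your outline matches the paper's in its precomputation (DFS orders via \Cref{lemma:dfsordercongest}, subtree sizes via \Cref{lemma:somebroadcasts}, ancestor test by exchanging $(\leftpi,\rightpi)$ and the subtree-interval across $e$, orientation decided locally at $u$ from $t_u$), but you diverge on the step you yourself flag as the ``technical heart'': computing $p_{F_e}(u)$ and $p_{F_e}(v)$. The paper observes that this is a \emph{purely local} computation at $u$ (resp.\ $v$): since any node inside $F_e$ has all its $T_i$-descendants inside $F_e$ as well, the set $F_e\cap T_u$ is exactly $\{u\}$ together with the whole subtrees $T_w$ of those children $w$ of $u$ lying on the $F_e$ side of the path, and $u$ already knows, after the precomputation, both the cyclic order $t_u$ and the subtree sizes $n_{T_i}(w)$ of each of its children (one round of exchange with neighbors suffices). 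Hence $p_{F_e}(u)$ is a local sum requiring no further communication at all.

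This dissolves the concern you spend your last paragraph on. The worry that a node $u$ incident to many fundamental edges would need many descendant-sums — and your proposed fix via nested sets, prefix-counts indexed by $t_u$, and a packed \descendantsumproblem — is machinery aimed at a congestion problem that does not arise once $p_{F_e}(u)$ is seen to be local. As written, that fix is also the least precise part of your argument: you do not actually specify how a single \descendantsumproblem\ returns, at $u$, the correct prefix-count for each incident edge, and you also needlessly propose propagating $z$'s data down the $T_i$-path by \ancestorsumproblem\ when $z$ is a child of $u$ and $u$ learns $(\leftpi(z),\rightpi(z),d_{T_i}(z))$ in one round. Nothing you say is wrong in spirit, but the argument as given is over-engineered and leaves its central step unfinished; the paper's proof sidesteps it entirely with the local-computation observation, and I would encourage you to reformulate your step around that observation rather than around a descendant-sum.
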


\begin{proof}
    First, in $\biglo{D}$ rounds the \leftorder\ and \rightorder\ of each $T_i$ can be computed in $\biglo{D}$.
    Let $P_i\in\mathcal{P}$ and $e=uv$ be a $T_i$-real fundamental edge. In order to compute the weights according to \Cref{def:weights}, $u$ and $v$ need to first determine if one is an ancestor of the other. At the end of the algorithm to compute \leftorder\ and \rightorder, each node $z$ not only know its position in $\leftpi$ and $\rightpi$, respectively, but also know the range $R_z\subseteq [n]$ of left order positions that include all nodes in the subtree of $T_i$ rooted in $z$. Thus, in one round of interaction, $u$ and $v$ exchange their left and right order positions, and hence $u$ (similarly, $v$) can verify if $\leftpi(v)\in R_u$ (similarly, $\leftpi(u)\in R_v$). Moreover, once they determine if one node is an ancestor of the other, given the embedding $\mathcal{E}$, the ancestor node can verify whether condition 2.1 or 2.2 of \Cref{def:weights} is satisfied. Thus, after $\bigo{1}$ rounds, all endpoints of each real fundamental edge can determine which weight function they should compute according to \Cref{def:weights}.

Next, to compute the respective weight function, assuming $\leftpi(u)<\leftpi(v)$, they must calculate and share some of the following values: $p_{F_e}(u), p_{F_e}(v), \leftpi(v), \leftpi(u), \rightpi(v), \rightpi(u), d_{T_i}(u), d_{T_i}(v)$, and $n_{T_i}(u)$. Given that $\leftpi$ and $\rightpi$ are already computed, and in the distributed representation of a spanning tree, each node knows its depth, the only values that still need to be computed are $p_{F_e}(u), p_{F_e}(v)$, and $n_{T_i}(u)$. On the one hand, $p_{F_e}(u)$ and $p_{F_e}(v)$ are easily computed by $u$ and $v$ locally, without communication, since they know the order of their edges according to $t_u$ and $t_v$ of $\mathcal{E}$. On the other hand, the value of $n_{T_i}(u)$, the number of nodes in the subtree of $T_i$ rooted at $u$, is already computed, as nodes calculate it (using \Cref{lemma:hltrees}) during the computation of the left and right orders.

Thus, after computing the left and right orders in $\biglo{D}$ rounds, the endpoints of each $T_i$-real fundamental edge of each spanning $T_i$ for $i\in [k]$ can compute the weight that corresponds to them according to \Cref{def:weights}.

\end{proof}

\subsubsection{Marking a path.} Given a planar configuration $(G,\mathcal{E},T)$, a recurrent problem in our deterministic algorithms will be to mark a path of $T$ between two nodes $u,v$\footnote{Even not strictly needed, we assume $u,v$ are known by all the nodes, since they can transmit its \id\ through PA problems.}. Specifically, at the end of the algorithm, each node should be aware if they are part of the path being marked. Since the path has a maximum length of $n$, this can be done trivially in $\bigo{n}$ rounds, but we will show how it can be resolved in $\biglo{D}$ rounds. This problem relates to deterministic subset-sum tree procedures described in \cite{Ghaffari2022}.

The formal problem is the following.

\begin{algorithmbox}{\markpathproblem}
      \textbf{Input}: A planar configuration $(G,\mathcal{E},T)$ with diameter $D$, a partition of $\mathcal{P}$ of $V$, a spanning tree $T_i$ of each $G[P_i]$, and two nodes $v_i,v_i\in P_i$ in each $P_i\in \mathcal{P}$ known by all the nodes of $P_i$. \\

       \textbf{Output:} All the nodes in each $P_i$ acknowledge if they are in $T$-path between $u_i$ and $v_i$.
\end{algorithmbox}

\begin{lemma}\label{teo:markpathcongest}
    There exists a deterministic algorithm in the \CONGEST\ model, which solves \markpathproblem\ in $\biglo{D}$ rounds.
\end{lemma}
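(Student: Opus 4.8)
The plan is to reduce \markpathproblem\ to two \detectancestor/\detectdescendant-type queries combined with a couple of part-wise aggregation calls, and then handle the "no ancestry relation" case by splitting the $T_i$-path at the LCA. First I would have $u_i$ and $v_i$ broadcast their \id s, depths, and the left/right order ranges $R_{u_i},R_{v_i}$ (computed as in \Cref{lemma:dfsordercongest}) to all of $P_i$ via \Cref{broadcasts}; this takes $\biglo{D}$ rounds and lets every node, in particular $u_i$ and $v_i$, determine whether one of them is an ancestor of the other in $T_i$ by checking containment of order positions in the ranges, exactly as in the proof of \Cref{teo:weightscongest}.

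Next, consider the easy case where (say) $u_i$ is an ancestor of $v_i$ in $T_i$. Then the $T_i$-path between them is precisely the set of nodes $z$ with $u_i\in\mathrm{anc}_{T_i}(z)$ and $z\in\mathrm{anc}_{T_i}(v_i)\cup\{v_i\}$, i.e. $z$ is a descendant of $u_i$ and an ancestor of $v_i$ (inclusive). Both predicates are instances of \detectancestor/\detectdescendant from \Cref{lemma:somebroadcasts} with the single distinguished node being $u_i$ (resp. $v_i$), so each node learns its membership in $\biglo{D}$ rounds. For the general case, I would first compute $w_i=\mathrm{LCA}_{T_i}(u_i,v_i)$: using the left-order, $w_i$ is the deepest node whose subtree range $R_{w_i}$ contains both $\leftpi(u_i)$ and $\leftpi(v_i)$, which can be found by a \maxproblem\ on depth among nodes satisfying that containment condition (again \Cref{lemma:somebroadcasts}), and then broadcast. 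Once $w_i$ is known to all nodes, the $T_i$-path between $u_i$ and $v_i$ is the union of the $w_i$–$u_i$ path and the $w_i$–$v_i$ path, each of which falls into the ancestor case already handled; a node marks itself if it lies on either one.

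Running all these subroutines in parallel across the parts $P_1,\dots,P_k$ is legitimate because each invoked primitive — \Cref{broadcasts}, \detectancestor, \detectdescendant, \maxproblem, \sumtreeproblem — is itself stated for an arbitrary partition and runs in $\biglo{D}$ rounds concurrently over all parts. Since we chain only $\bigo{1}$ such calls, the total round complexity is $\biglo{D}$.

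\emph{Main obstacle.} The delicate point is not the path-marking logic itself but ensuring that the auxiliary ancestry/LCA information needed is genuinely available at $\biglo{D}$ cost over an \emph{arbitrary} spanning tree $T_i$ of depth up to $\Theta(n)$, rather than a BFS tree. This is exactly what \Cref{lemma:dfsordercongest} and \Cref{lemma:somebroadcasts} buy us — the left/right DFS orders and the subtree ranges $R_z$ encode ancestry as an interval-containment test that each node can evaluate locally after $\biglo{D}$ rounds of preprocessing — so the proof reduces to carefully citing those results and checking that the LCA can be extracted from the already-computed order ranges via one additional part-wise \maxproblem. The rest is bookkeeping: verifying that "descendant of $u_i$ and ancestor of $v_i$ (inclusive)" is the correct description of a root-to-node path, and that gluing the two half-paths at $w_i$ covers every node of the $T_i$-path exactly once (with $w_i$ itself marked).
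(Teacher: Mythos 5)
Your proof is correct, and it takes a genuinely different (and cleaner) route than the paper's. The paper proves \Cref{teo:markpathcongest} by a direct combinatorial construction: it repeatedly merges subtrees of $T_i$ along a doubling scheme to locate the "middle" edge of the $T_i$-path, and then recurses on both halves in parallel, yielding $\bigo{\log n}$ phases of $\bigo{\log n}$ iterations, each taking $\biglo{D}$ rounds. It does not invoke the DFS orders at all. You instead reduce path-marking to a local membership test: first compute the DFS order ranges (via \Cref{lemma:dfsordercongest} and \sumtreeproblem), identify the LCA $w_i$ as the deepest node whose subtree interval contains both $\leftpi(u_i)$ and $\leftpi(v_i)$ (a single \maxproblem), and then mark $z$ if and only if $z$ is simultaneously a descendant of $w_i$ and an ancestor of $u_i$ or of $v_i$, both conditions being decidable locally from interval containment or via \detectancestor/\detectdescendant. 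This chains $\bigo{1}$ part-wise aggregation calls, so the $\biglo{D}$ bound holds, and since the primitives you cite (\Cref{broadcasts}, \Cref{lemma:dfsordercongest}, \Cref{lemma:somebroadcasts}) do not rely on \markpathproblem, there is no circularity. What each approach buys: the paper's binary-merging argument works without precomputing DFS orders and is self-contained, whereas yours exploits the preprocessing the paper already carries out elsewhere (indeed your LCA computation is essentially the paper's own proof of \Cref{teo:lcacongest}, just reused one lemma earlier) and is shorter and conceptually simpler, at the small price of depending on that preprocessing. Both are valid; yours is the more economical given the surrounding machinery.
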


\begin{proof}
    The algorithm proceeds through $\bigo{\log n}$ phases, and each phase consists of $\bigo{\log n}$ iterations. The algorithm is explained within a specific induced subtree $G[P_i]$ and then it is argued how this can be done in parallel in all connected components.

    In the first phase, the goal is to mark the edge $e_1$ in the 'middle' of the $T_i$-path between $u_i$ and $v_i$. To achieve this, in each iteration $j$ of the first phase, a partition $T_1^j, \ldots, T_k^j$ of $T$ into $k$ sub-trees is maintained. The nodes in each sub-tree $T_h^j$ know the \id\ of the sub-tree and its depth. The \id\ of each sub-tree $T_i$ is either $\id(u)$ or $\id(v)$, if $u$ or $v$ are in $T_i$, and if not, the \id\ of $T_i$ is the \id\ of the root of $T_i$. If the root of $T_i$ has depth $h$ in $T$, then the depth of $T_i$ is $\lfloor h/2^{j-1}\rfloor$. In each iteration $j$, the sub-trees $T_{i_1}^j$ with even depth $2d$ are merged into the sub-tree $T_{i_2}^j$ that has depth $2d-1$ and where there exists an edge in $T$ between a node of $T_{i_1}^j$ and a node in $T_{i_2}^j$. By merging sub-trees, it is meant that if $T_{i_1}^j$ has depth $2d$ and will merge with a sub-tree $T_{i_2}^j$ of depth $2d-1$, then all nodes in $T_{i_1}^j$ will update the \id\ of their sub-tree to \id$(T_{i_2}^j)$, and all nodes in $T_j$ plus the nodes in sub-trees merged to $T_j$ in round $i$ will update the depth of their new merged sub-tree as $\left\lfloor \text{depth}(T_j)/2 \right\rfloor$.

    The first phase continues until two sub-trees $T_{u'}$ and $T_{u'}$ merge, such that $u_i \in T_{u'}$ and $v_i \in T_{v'}$. At this point, the edge $e_1 = z_1z_2 \in T$ is marked, where $z_1 \in T_{u'}$ and $z_2 \in T_{v'}$ and $e_1$ is the edge that merge $T_{u'}$ with ${v'}$. By construction, this edge $e_1$ is in the $T_i$-path between $u_i$ and $v_i$, and the size of the $T_i$-paths between $u_i$ to $z_1$ and $z_2$ to $v_i$ are half of the size of the $T_i$-path between $u_i$ and $v_i$. In the second phase, the search for the edge in the middle of the $T_i$-path between $u_i$ and $z_1$ and the edge in the middle of the $T$-path between $z_2$ and $v_i$ continues in parallel. The rest of the phases operate analogously.

    In each iteration of any phase, the maximum depth of each sub-tree decreases by half, and initially, the maximum depth is $n$, so after $\log n$ iterations, each phase ends. In phase $t$, the algorithm in parallel searches for $2^{t-1}$ edges to mark, and since the $T_i$-path between $u_i$ and $v_i$ has a length of at most $n$, after $\bigo{\log n}$ phases, the algorithm has marked each edge in each $T_i$-path between $u_i$ and $v_i$.    Finally, in each iteration $j$ of any phase, every sub-tree $T_i^j$ at odd depth merges with the sub-tree $T_h^j$ such that the parent of the root of $T_i^j$ is in $T_h^j$. Then, the root of $T_i^j$ will know in $1$ round the \id\ and depth of $T_h^j$. To complete the merge between $T_i^j$ and $T_h^j$, the root needs to communicate this information to all other nodes in $T_i^j$. This is a classic broadcast problem and can be solved in $\biglo{D}$ rounds using \Cref{broadcasts}.

Then, each iteration can be computed in $\biglo{D}$ rounds, thereby achieving the claimed round complexity of $\biglo{D}$ rounds in total.
\end{proof}

\subsubsection{Detecting LCA between two nodes.} To generalize the algorithm from \Cref{teo:markpathcongest}, it is first necessary to introduce the detection of the lowest common ancestor between two nodes.

\begin{algorithmbox}{\lcaproblem}
      \textbf{Input}: A planar configuration $(G,\mathcal{E},T)$ with diameter $D$, a partition of $\mathcal{P}$ of $V$, a spanning tree $T_i$ of each $G[P_i]$, and two nodes $u_i,v_i\in P_i$ in each $P_i\in \mathcal{P}$ known by all the nodes of $P_i$. \\

       \textbf{Output:} In each $P_i$ the lowest common ancestor node between $u_i$ and $v_i$ in $T$ is marked.
\end{algorithmbox}

\begin{lemma}\label{teo:lcacongest}
    There exists a deterministic algorithm in the \CONGEST\, which solves, in $\biglo{D}$ rounds, the \lcaproblem.
\end{lemma}

\begin{proof}
    First, in each induced subgraph $G[P_i]$, the left and right DFS order of each $T_i$ is computed in $\biglo{D}$. Once the orders have been computed, all nodes $v\in P_i$ also know, for each $T_i$-child $z$, the set of values $R_z^\ell, R_z^r\subseteq [n]$ that the nodes of each subtree $T_z$ take in $\leftpi$ and $\rightpi$, respectively. Therefore, since in $\biglo{D}$ rounds the nodes $u_i$ and $v_i$ can share their left order positions $\leftpi(u_i), \leftpi(v_i)$ in $\leftpi$, each node $v\in P_i$ can determine if it is on the $T_i$-path between $u_i$ and the root of $T_i$, and also if it is on the $T_i$-path between $v_i$ and the root of $T_i$. Given this, each node $v\in P_i$ defines its input $x_v$ as

    \begin{equation*}
        x_v = \begin{cases}
            d_{T_i}(v) + 1 &v \text{ is in both }T_i\text{-paths}\\
            0 &\text{~}
        \end{cases}
    \end{equation*}

    The LCA between $u_i$ and $v_i$ in $T_i$ satisfy that is the deepest node $v\in P_i$ such that $x_v >0$, then, through a \maxproblem, this node can be detected in all subgraphs $G[P_i]$ in $\biglo{D}$ rounds using \Cref{lemma:somebroadcasts}. Then, in each subgraph $G[P_i]$ the LCA between $u_i$ and $v_i$ has been detected using constant number of subroutines, each one with a a round complexity of $\biglo{D}$ rounds.
\end{proof}

\subsubsection{Subroutines in a real fundamental face.}
In the algorithm designed for computing cycle separators, it is necessary at a certain point to solve, deterministically, specific problems inside a specific face $F_e$. The first of these problems consists in each node to detect whether it is part of a real fundamental face or not.
\begin{algorithmbox}{\detectfaceproblem}
     \textbf{Input}: A planar configuration $(G,\mathcal{E},T)$ with diameter $D$, a partition of $\mathcal{P}$ of $V$, a spanning tree $T_i$ of each $G[P_i]$ and the \id of the two nodes $u_i,v_i$ of a $T_i$ real fundamental edge $e=u_iv_i$ known by all the nodes of $P_i$. \\

       \textbf{Output:} All the nodes in each $P_i$ acknowledge if they are in $F_e$ or not.
\end{algorithmbox}

\begin{lemma}\label{teo:detectfacecongest}
    There exists a deterministic algorithm in the \CONGEST\ model, which solves \detectfaceproblem\ in $\biglo{D}$ rounds.
\end{lemma}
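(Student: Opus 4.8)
The target is: every node of every part $P_i$ must decide whether it belongs to $V(F_e)=C_e\cup\inside{F}_e$, where $C_e$ is the $T_i$-path between the two endpoints $u,v$ of the given real fundamental edge $e$ and $\inside{F}_e$ is its interior. I would split $V(F_e)$ into three pieces and mark each with $O(1)$ invocations of the $\biglo{D}$-round primitives already available: (i) the border $C_e$; (ii) the interior nodes lying outside $V(T_u)\cup V(T_v)$; (iii) the interior nodes lying inside $V(T_u)$ or $V(T_v)$ (plus, in the ancestor case, the interior descendants of the internal vertices of the $u$--$v$ path). Everything runs in parallel over the parts since each primitive is already stated for a partition with a family of spanning trees.

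First I would compute $\leftpi$ and $\rightpi$ of each $T_i$ via \Cref{lemma:dfsordercongest}; as in its proof, and using \Cref{lemma:somebroadcasts}, every node also learns $n_{T_i}(\cdot)$ of its subtree and the contiguous $\leftpi$- and $\rightpi$-intervals occupied by its subtree. Next, $u$ and $v$ exchange their order positions and subtree sizes across $e$; from this $u$ decides whether it is an ancestor of $v$ (iff $\leftpi(v)$ lies in $u$'s subtree interval) or conversely, and in the ancestor case the ancestor endpoint decides locally from $\mathcal{E}$ --- exactly as in the orientation definition preceding \Cref{def:weights} --- whether $e$ is $\mathcal{E}$-left or $\mathcal{E}$-right oriented. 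These $O(\log n)$ bits (the two endpoints' $\leftpi,\rightpi$, their $n_{T_i}$ values, the ancestor relation, the orientation) are then broadcast to all of $P_i$ via \Cref{broadcasts}, and $C_e$ is marked via \Cref{teo:markpathcongest}. All of this is $\biglo{D}$ rounds.

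For piece (ii): a node $z$ that is off $C_e$ and (checkable from the broadcast data) not in $V(T_u)\cup V(T_v)$ belongs to $\inside{F}_e$, by \Cref{prop:caracinside} and the range identities established in the proof of \Cref{lem:weighting}, exactly when $\leftpi(u)+n_{T_i}(u)+1\le\leftpi(z)\le\leftpi(v)-1$ in the case where neither endpoint is an ancestor, and never when one endpoint is an ancestor (since then $\inside{F}_e\subseteq V(T_u)$); so $z$ decides locally. For piece (iii): by planarity a child's subtree is wholly inside or wholly outside $F_e$, so $u$ (and symmetrically $v$) can name locally the set of its children whose subtree lies in $F_e$ from its parent, the edge $e$, the rotation $t_u$, and the broadcast orientation bit; in the ancestor case the first vertex $z'$ of the $u$--$v$ path additionally broadcasts $\leftpi(z')$ (or $\rightpi(z')$), and applying the same interval test as in (ii) but inside $V(T_{z'})$ captures the interior descendants of the internal path vertices (this is the content of the claim used in the proof of \Cref{lem:weighting2}). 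To flood ``$z\in F_e$'' down a marked child's subtree in $\biglo{D}$ rather than $\Theta(n)$ rounds, I would set $x_z=1$ for each marked child, $x_z=0$ otherwise, and run the \ancestorsumproblem\ algorithm of \Cref{lemma:hltrees} with $\bigoplus=+$: a node is a descendant of a marked child iff its ancestor-sum is positive (marking the children directly if ``anc'' excludes the node itself). Finally each node outputs ``in $F_e$'' iff it is on $C_e$, or passes the (ii)-test, or has positive ancestor-sum; correctness is the assertion that the union of the three marked sets equals $C_e\cup\inside{F}_e$, which is exactly the decomposition appearing in the proofs of \Cref{lem:weighting} and \Cref{lem:weighting2}. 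Since the algorithm makes $O(1)$ calls to $\biglo{D}$-round primitives, the total is $\biglo{D}$.

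\textbf{Main obstacle.} The only genuinely delicate point is step (iii): pinning down, at $u$, $v$, and $z'$, which child subtrees fall inside $F_e$ in a way that is uniform across the three regimes (neither endpoint an ancestor, $\mathcal{E}$-left, $\mathcal{E}$-right) and consistent with which side of $C_e$ is the interior, and then checking that the three marked sets cover $V(F_e)$ exactly without marking any node outside $F_e$ --- this is pure bookkeeping, but it is where a sign error would creep in. Once it is settled, the single algorithmic trick (replacing naive subtree flooding by one \ancestorsumproblem\ call) is immediate, and the round bound follows.
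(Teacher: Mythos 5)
Your plan is correct, and the overall structure matches the paper's proof: compute the left/right DFS orders and subtree intervals via \Cref{lemma:dfsordercongest} and \Cref{lemma:somebroadcasts}, mark the border $C_e$ with \markpathproblem, mark the interior descendants of the two endpoints using their locally-known rotations, and mark the ``remaining'' interior vertices by a purely local range test --- which is exactly what \Cref{prop:caracinside} packages. The differences from the paper are implementational rather than conceptual. Where the paper has $u_i$ and $v_i$ compute and broadcast the contiguous $\leftpi$-interval $I(u_i)$, $I(v_i)$ occupied by their interior descendants and lets each node test membership locally, you propose to mark the relevant children at $u$, $v$ (and $z'$) and then propagate that mark down the subtree with one \ancestorsumproblem\ call; both are single $\biglo{D}$-round primitives, and yours is arguably more robust because it never relies on contiguity of the marked set. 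You are also more explicit than the paper about the ancestor case: you isolate $T_{z'}$ and run the \Cref{claim1p2}-style interval test there (supplemented by $v$'s child-naming for $T_v$), whereas the paper leans directly on \Cref{prop:caracinside} for the ``remaining'' vertices. One small point you should state explicitly: for the interval test inside $T_{z'}$ to be applied by every node, $z'$ should broadcast not just $\leftpi(z')$ but enough to delimit $T_{z'}$'s range (e.g.\ $n_{T_i}(z')$); in fact the upper bound $\leftpi(v)-1$ already confines the test to $T_{z'}$ since $v \in T_{z'}$, but making this observation explicit closes the loop. None of this affects the $\biglo{D}$ round count, which you justify by counting $O(1)$ calls to the stated primitives, the same accounting the paper uses.
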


\begin{proof}
    Let $P_i \in \mathcal{P}$ be a subset of nodes, $T_i$ its spanning tree, and $u_i, v_i$ the endpoints of the real fundamental edge $e = u_i v_i$ given in the input. The nodes in $F_e$ can be either nodes on the border of the $T_i$-path between $u_i$ and $v_i$, or nodes that are inside $F_e$, in $\inside{F}_e$.

First, in $\biglo{D}$ rounds, the nodes can compute the \leftorder\ and \rightorder\ in $T_i$ (\Cref{lemma:dfsordercongest}). Additionally, in $\biglo{D}$ rounds, all the nodes on the $T_i$-path between $u_i$ and $v_i$ can be marked using \Cref{teo:markpathcongest}. To mark the nodes inside $F_e$, these nodes can either be descendants of $u_i$ or $v_i$ (if they are nodes in the subtrees $T_{u_i}$ or $T_{v_i}$ inside $F_e$), or they may not be descendants of $F_e$. For the nodes that are descendants of $u_i$ or $v_i$, since both nodes know their combinatorial planar embedding $t_{u_i}$ and $t_{v_i}$, respectively, they can determine which of their $T_i$-children are inside $F_e$. Moreover, by computing the left order $\leftpi^{T_i}$, nodes $u_i$ and $v_i$ can determine the range of values $R_\omega \subseteq [n]$ taken by the nodes $T_\omega$ in $\leftpi^{T_i}$ for all their $T_i$-children $\omega$. Thus, $u_i$ and $v_i$ can calculate the range of values $I(u_i), I(v_i) \subseteq [n]$ taken by all their descendants inside $F_e$. In $\biglo{D}$ rounds (two calls to \Cref{broadcasts}, one for $u_i$ and one for $v_i$), $u_i$ and $v_i$ can inform all other nodes in $P_i$ of these intervals $I(u_i), I(v_i)$, so that each node $z$ such that $\leftpi^{T_i}(z) \in I(u_i)$ or $\leftpi^{T_i}(z) \in I(v_i)$ is marked as being inside $F_e$, as they are descendants of $u_i$ or $v_i$ inside $F_e$.

Finally, for the remaining nodes inside $F_e$, by \Cref{prop:caracinside}, if all nodes in $P_i$ know whether $u_i$ is an ancestor of $v_i$, or if $u_i v_i$ is $\mathcal{E}$-left-oriented or $\mathcal{E}$-right-oriented, each node also know the left and right positions of $u_i$ and $v_i$, they can locally decide whether they are inside $F_e$ according to each case in \Cref{prop:caracinside}. All this information from $u_i$ and $v_i$ can be transmitted in $\biglo{D}$ rounds using \Cref{broadcasts} and \Cref{lemma:somebroadcasts}.
\end{proof}

The last sub-problem involves detecting whether a particular node is covered or hidden in a real fundamental face.

\begin{algorithmbox}{\hiddenproblem}
     \textbf{Input}: A planar configuration $(G,\mathcal{E},T)$ with diameter $D$, a partition of $\mathcal{P}$ of $V$, a spanning tree $T_i$ of each $G[P_i]$, the \id of two nodes $u_i,v_i$ of a $T_i$ real fundamental edge $e=u_iv_i$ known by all the nodes of $P_i$, and the \id\ of a $T_i$-leaf $z_i\in \inside{F}_e$. \\

       \textbf{Output:} All the nodes in each $P_i$ acknowledge all the adjacent real fundamental edges (if any exists) that hiddens $z_i$ in $F_{e_i}$.
\end{algorithmbox}

\begin{lemma}
    There exists a deterministic algorithm in the \CONGEST\ model, which solves \hiddenproblem\ in $\biglo{D}$ rounds.
\end{lemma}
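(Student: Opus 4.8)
The plan is to reduce \hiddenproblem\ to a small, fixed number of deterministic subroutines that we have already shown run in $\biglo{D}$ rounds, together with purely local computations. First, recall the structural characterization from \Cref{def:contained} and \Cref{lemma:deltacharac}: since $z_i$ is a $T_i$-leaf inside $F_e$, it is hidden iff there is a real fundamental edge $f = z_1 z_2$ contained in $F_e$ with $z_i \in \inside{F}_f$ satisfying condition (1) ($z_1, z_2 \neq u_i$) or condition (2) ($z_1 = u_i$ or $z_2 = u_i$ and $V(T_{u_i}) \cap V(F_e) \not\subseteq V(F_f)$). The key observation is that the only edges $f$ that can hide $z_i$ and that are \emph{adjacent} to $z_i$ — which is all the output asks for — are incident to one of $z_i$'s $O(\log n)$-bit-describable neighbors; in fact $f$ must be incident to a node on the boundary $C_e$ or to an ancestor of $z_i$ in $T_i$. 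So each candidate edge $f$ can be tested by its own endpoints once those endpoints know (a) whether they lie in $F_e$, (b) whether $z_i \in \inside{F}_f$, and (c) in the $u_i$-incident case, whether $V(T_{u_i}) \cap V(F_e) \subseteq V(F_f)$.

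The steps, in order. Step 1: compute $\leftpi^{T_i}$, $\rightpi^{T_i}$ and the subtree-range information $R_z^\ell, R_z^r$ for every node via \Cref{lemma:dfsordercongest} and \Cref{lemma:somebroadcasts}; broadcast $\id(u_i), \id(v_i), \id(z_i)$ and their left/right positions and depths by part-wise aggregation. Step 2: run \detectfaceproblem\ (\Cref{teo:detectfacecongest}) so every node knows whether it is in $F_e$ and whether it is on $C_e$; also have $u_i$ and $v_i$ broadcast the intervals $I(u_i), I(v_i)$ of $\leftpi^{T_i}$-values covered by their descendants inside $F_e$ — these determine $V(T_{u_i}) \cap V(F_e)$ as a union of $\leftpi^{T_i}$-ranges, which is exactly what condition (2) needs to check. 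Step 3: for a candidate real edge $f = z_1 z_2$ with both endpoints in $F_e$, the endpoints determine which fundamental face $F_f \in \mathcal{F}_f$ it induces inside $F_e$ using their local embeddings $t_{z_1}, t_{z_2}$ and the ancestry information; then they test $z_i \in \inside{F}_f$ using \Cref{prop:caracinside} (the case distinction there is decidable from $\leftpi^{T_i}, \rightpi^{T_i}$, ancestry, and $\mathcal{E}$-orientation, all of which are known locally or from Step 1). Step 4: each such $f$ that hides $z_i$ broadcasts its $\id$ to all of $P_i$ via a part-wise aggregation problem. Since $z_i$ is a leaf, and since $f$ must be adjacent to $z_i$ to be reported, the endpoints of $f$ are at bounded description size and only $O(\deg(z_i))$ many candidates arise at $z_i$; but to output all of them, $z_i$ itself can gather this information in $O(1)$ rounds over its incident edges and then broadcast the list. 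Each of Steps 1--4 uses a constant number of invocations of \Cref{broadcasts}, \Cref{lemma:hltrees}, \Cref{lemma:somebroadcasts}, \Cref{lemma:dfsordercongest}, \Cref{teo:markpathcongest}, and \Cref{teo:detectfacecongest}, so the total is $\biglo{D}$ rounds, and all these run in parallel across the parts $P_i$ by the part-wise aggregation framework of \Cref{teo:deterministicshortcut}.

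\textbf{Main obstacle.} The delicate point is checking condition (2) of \Cref{def:contained}: deciding whether \emph{all} of $V(T_{u_i}) \cap V(F_e)$ lies inside $F_f$ when $f$ is incident to $u_i$. This is not a single-node local check — it is a statement about the whole subtree $T_{u_i}$ relative to a virtual face. The resolution is to encode $V(T_{u_i}) \cap V(F_e)$ as the $\leftpi^{T_i}$-interval $I(u_i)$ broadcast in Step 2, and to observe that, for a fixed way of inserting $f = u_i z_2$ into $\mathcal{E}$, the set $V(F_f) \cap V(T_{u_i})$ is again a union of $\leftpi^{T_i}$-subtree-ranges determined by which $T_i$-children of $u_i$ fall inside $F_f$; $u_i$ knows this from $t_{u_i}$. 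So $u_i$ can decide containment locally by comparing two explicitly-known unions of intervals, and then broadcast the verdict. A secondary subtlety is that a virtual edge $f$ may admit several valid insertions $F_f \in \mathcal{F}_f$; but for the hiding relation we need $z_i$ to be inside \emph{some} such $F_f$ contained in $F_e$, and among insertions the relevant one (the ``innermost'' from $u_i$'s side, in the sense of $F_{f}^\ell$) is canonically determined by the same leftmost/rightmost-descendant rule already used in \Cref{lemma:propaug}, so no search over insertions is needed. Once these two points are handled, the rest is bookkeeping over already-established primitives.
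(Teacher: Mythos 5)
Your Steps 1--3 are essentially the paper's proof: compute the DFS orders and subtree ranges, run \detectfaceproblem{} and broadcast the descendant intervals of $u_i,v_i$, and then let the two endpoints of each candidate real fundamental edge $f$ determine locally (via the inside-face characterization of \Cref{prop:caracinside}, the left/right positions and ancestry of $z_i$, and the local embedding) whether $z_i \in \inside{F}_f$ and whether condition (2) of \Cref{def:contained} holds. The ``Main obstacle'' paragraph also correctly identifies how $u_i$ verifies condition (2) locally by comparing unions of $\leftpi$-intervals of its $T_i$-children inside $F_e$ against those inside $F_f$. Up to that point, the argument matches the paper.

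The gap is in Step 4 and in the interpretation of the output that motivates it. The problem specification (as confirmed by the paper's own proof and by the way \hiddenproblem{} is invoked in Sub-phase 4.1) asks that \emph{each} node decide which of \emph{its own} incident real fundamental edges hide $z_i$ --- a purely local verdict once Steps 1--3 are done, requiring no further communication at all. You instead read ``adjacent'' as ``adjacent to $z_i$'' and therefore add a global broadcast of all hiding edges. This is wrong in two independent ways. First, an edge incident to $z_i$ can \emph{never} hide $z_i$: if $f = z_i w$, then $z_i$ lies on the border $C_f$ of the face $F_f$ and so by definition $z_i \notin \inside{F}_f$; the ``key observation'' and the claim that hiding edges are incident to $C_e$ or to an ancestor of $z_i$ are both false. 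Second, even setting aside the misreading, the broadcast you propose does not fit the bandwidth: there can be $\Theta(n)$ real fundamental edges hiding $z_i$, and a single part-wise aggregation cannot deliver that many distinct $\bigo{\log n}$-bit identifiers to every node of $P_i$ in $\biglo{D}$ rounds. So the step would both be based on a false premise and, taken literally, fail. Dropping Step 4 entirely and stating the output as a local decision at each node (exactly what Steps 1--3 already furnish) gives the paper's proof. There are also small slips you should clean up: $f$ here is a \emph{real} fundamental edge, so it has a unique face $F_f$ (not a family $\mathcal F_f$) and there is no ``insertion'' into $\mathcal E$ to choose; that language belongs to the virtual-edge setting of \Cref{def:lazyrechable} and \Cref{lemma:propaug}, not to \hiddenproblem.
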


\begin{proof}
    First, by using \Cref{teo:detectfacecongest}, in all subgraphs $G[P_i]$, the nodes can determine if they are in $F_{e_i}$ or not. All nodes that are not in $F_{e_i}$ know that they cannot have an edge adjacent to them that hides $z_i$, as by definition, only nodes in $F_{e_i}$ can do so.

In $\biglo{D}$ rounds, the nodes can compute $\leftpi^{T_i}$, and the node $z_i$ can share its position in $\leftpi^{T_i}$ with all other nodes in $P_i$. Additionally, by computing \detectancestor\ and \detectdescendant\ for the node $z_i$, all nodes can also determine if they are ancestors or descendants of $z_i$. With this information, each node can decide if the real fundamental edge incident to them hides $z_i$ by communicating with their neighboring nodes in $G[P_i]$ in one round.

Effectively, node $u$ can verify if a fundamental edge $f = uv'$ incident to it hides $z_i$, as it can locally check if $V(T_u) \cap V(\inside{F}e) \subseteq V(F_f)$. Node $u$ knows its combinatorial embedding $t_u$ and can verify if $z_i$ is inside $F_f$ by checking if $v'$ is not an ancestor of $z_i$ and $\leftpi(z_i) < \leftpi(v')$, or if $v'$ is an ancestor of $z_i$ and $z_i \in V(T{v'}) \cap V(F_f)$. Node $v'$ can verify this as it knows $t_{v'}$, can identify which of its children in $T_i$ are inside $F_f$, and knows the interval of positions $I_\omega \subseteq [n]$ that all its descendants in a subtree $T_\omega$ (with $\omega$ a $T_i$-child of $v'$) occupy. Therefore, $v'$ can verify if $z_i$ has a position in $\leftpi^{T_i}$ that lies within some $I_\omega$ with $\omega$ inside $F_f$. Nodes $u$ and $v'$ share this information to decide if $z_i$ is hidden by $f$ or not.

If $u' \neq u$, let $f = u'v'$ be a fundamental edge incident to $u'$ such that $\leftpi^{T_i}(u') < \leftpi^{T_i}(v')$. Node $u'$, similarly to the previous case, can verify if $\leftpi(u') < \leftpi(z_i)$, and $v'$ can verify if $z_i$ is inside $F_f$ in an analogous manner. Nodes $u'$ and $v'$ share this information to decide if $z_i$ is hidden by $f$ or not.\end{proof}

Along with \hiddenproblem, at some point in the algorithm of \Cref{teo:separatorcongest}, it is necessary to select a real fundamental edge that is not contained within any other. Formally, the problem is defined as follows.

\begin{algorithmbox}{\notcontainedproblem}
     \textbf{Input}: A planar configuration $(G,\mathcal{E},T)$ with diameter $D$, a partition of $\mathcal{P}$ of $V$, a spanning tree $T_i$ of each $G[P_i]$, and a collection of real fundamental edges $E_i = \{e_1,...,e_k\}$. This means that both extremes of each $e_j$ know that the edge $e_j$ is in $E_i$. \\

       \textbf{Output:} All the nodes in each $P_i$ acknowledge the \id\ of both extremes of a real fundamental edge $e\in E_i$ that is not contained in any other real fundamental edge of $E_i$.
\end{algorithmbox}

\begin{lemma}\label{teo:notcontainedcongest}
    There exists a deterministic algorithm in the \CONGEST\ model, which solves \notcontainedproblem\ in $\biglo{D}$ rounds.
\end{lemma}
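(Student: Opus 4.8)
The plan is to reduce \notcontainedproblem{} to a sequence of primitives already established in this section: DFS-order computation, \detectfaceproblem{}, and the part-wise aggregation toolbox of \Cref{lemma:somebroadcasts}. The key structural observation is that a real fundamental face $F_{e_j}$ is contained in another real fundamental face $F_{e_{j'}}$ exactly when, in the left (or right) DFS order of $T_i$, the interval of positions spanned by the border-plus-interior of $F_{e_j}$ is nested inside that of $F_{e_{j'}}$. By \Cref{prop:caracinside} and \Cref{lem:weighting,lem:weighting2}, the set of nodes lying ``inside or on'' a real fundamental face is characterized by an interval in $\leftpi^{T_i}$ or $\rightpi^{T_i}$ (depending on whether the endpoints are ancestor-related and on the $\mathcal{E}$-orientation of the edge). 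So each edge $e_j\in E_i$ can be assigned a pair of numbers $(a_j,b_j)$ — the minimum and maximum DFS-order positions of nodes in $F_{e_j}$ — and ``$F_{e_j}$ contained in $F_{e_{j'}}$'' becomes ``$[a_j,b_j]\subseteq[a_{j'},b_{j'}]$''.

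First I would run \Cref{lemma:dfsordercongest} to compute $\leftpi^{T_i}$ and $\rightpi^{T_i}$ in all components in parallel in $\biglo{D}$ rounds; after this each node knows, for each $T_i$-child $z$, the ranges $R_z^\ell, R_z^r$ covered by the subtree rooted at $z$. Next, for each $e_j = u_jv_j \in E_i$, the two endpoints determine locally (using their embeddings $t_{u_j}, t_{v_j}$, their depths, and the DFS positions, exactly as in the proof of \Cref{teo:weightscongest}) whether one is an ancestor of the other and the $\mathcal{E}$-orientation, hence which of the characterizations in \Cref{prop:caracinside} applies; from this and the subtree ranges they compute the interval $[a_j,b_j]$ of DFS positions spanned by $F_{e_j}$. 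This is $\bigo{1}$ rounds of local interaction along the edges $e_j$. Then the goal is: find an edge $e\in E_i$ whose interval is maximal under inclusion. Since all the intervals that pairwise interact are nested (two fundamental faces are either disjoint or one contains the other — this follows from planarity and the Jordan-curve structure of fundamental faces), a maximal interval is simply one whose length $b_j - a_j + 1$ is maximum among a suitable family; more carefully, it suffices to pick, among all edges $e_j$ with no other $e_{j'}$ satisfying $a_{j'}\le a_j$ and $b_{j'}\ge b_j$ strictly, any one — and an edge of maximum interval length always qualifies. So I would have each endpoint of each $e_j$ broadcast its tuple via part-wise aggregation and run a \maxproblem{} (\Cref{lemma:somebroadcasts}) on the key $b_j - a_j$ (breaking ties by \id) so that every node in $P_i$ learns the \id{}s of the two endpoints of the winning edge; finally a standard broadcast over $T_i$ disseminates these \id{}s. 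Each step is a constant number of part-wise aggregations, broadcasts over a spanning tree, or applications of \Cref{lemma:dfsordercongest}/\detectfaceproblem{}, each costing $\biglo{D}$ rounds, so the total is $\biglo{D}$.

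The main obstacle I expect is getting the interval characterization exactly right across all the cases of \Cref{def:weights} and \Cref{prop:caracinside} — in particular, making sure that the interval I attach to $e_j$ really is ``inclusion-faithful'', i.e. that $F_{e_j}\subseteq F_{e_{j'}}$ if and only if the corresponding intervals nest \emph{in the same DFS order}. The subtlety is that one face may be naturally described by a $\leftpi$-interval and another by a $\rightpi$-interval, so before comparing them I need to note that when $F_{e_j}\subseteq F_{e_{j'}}$ the containment is visible in \emph{either} order, and that it is enough to compare, say, the pair $(\,$leftmost $\leftpi$ position, leftmost $\rightpi$ position$)$ together with subtree sizes, which is exactly the data already broadcast. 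Once that bookkeeping is pinned down, the rest is a routine assembly of the primitives cited above, and I would keep the written proof at the level of ``compute the DFS orders; each edge locally forms its interval via \Cref{prop:caracinside}; run a \maxproblem{} on interval length; disseminate the winner'', citing \Cref{lemma:dfsordercongest}, \Cref{lemma:somebroadcasts}, and \Cref{broadcasts} for the round bounds.
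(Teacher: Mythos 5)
Your approach is genuinely different from the paper's: you try to attach to each edge $e_j\in E_i$ a DFS-interval $[a_j,b_j]$ and then select the edge with maximum interval length, whereas the paper does a two-step greedy selection (the node with minimum $\leftpi$ position among those incident to $E_i$, then the incident $E_i$-edge whose other endpoint has maximum $\leftpi$), observes that the resulting edge can only be contained in edges whose endpoints lie on the ancestor chains of its own two endpoints, and then localizes the search to those ancestors via \detectancestor.

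Before getting to the real problem, one inaccuracy worth flagging: the biconditional ``$F_{e_j}\subseteq F_{e_{j'}}$ iff $[a_j,b_j]\subseteq[a_{j'},b_{j'}]$'' is false. Two fundamental faces drawn on opposite sides of a shared tree path can have nested intervals without either containing the other (e.g.\ $T$ a path $1\text{--}\cdots\text{--}6$ with $e=\{2,5\}$ and $f=\{1,6\}$ on opposite sides: intervals nest, faces do not). Only the direction ``$F_{e_j}\subseteq F_{e_{j'}}\Rightarrow[a_j,b_j]\subseteq[a_{j'},b_{j'}]$'' holds, but that direction is actually enough for your selection rule: an edge maximizing $b_j-a_j$ cannot be contained in any other. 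So the logical claim needs to be restated, but the selection rule would be sound.

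The genuine gap is computational, not structural. You assert that the two endpoints of $e_j=u_jv_j$ can compute $[a_j,b_j]$ ``locally, exactly as in the proof of \Cref{teo:weightscongest}''. They cannot. When $u_j$ is an ancestor of $v_j$ everything in $F_{e_j}$ lies in $T_{u_j}$ and indeed $a_j=\leftpi(u_j)$ is local; but when $u_j$ and $v_j$ are \emph{not} ancestor-related, the minimum $\leftpi$-position in $F_{e_j}$ is $\leftpi\bigl(\text{LCA}(u_j,v_j)\bigr)$, since the border $C_{e_j}$ passes through the LCA and every interior node is a descendant of it. The endpoints do not know this value. The weight formula of \Cref{def:weights} was deliberately engineered to avoid ever referring to the LCA (it uses $n_T(u)$ and the positions of $u,v$ instead), so the reduction to \Cref{teo:weightscongest} does not go through, and \Cref{prop:caracinside} only characterizes nodes \emph{outside} $T_u\cup T_v$ --- it does not hand you the left endpoint of the interval. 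The paper's \lcaproblem{} resolves a single designated pair per part; invoking it for every non-tree edge of $T_i$ (of which there may be $\Theta(n)$) is not a constant number of $\biglo{D}$ primitives. The paper's two-step min/max selection is precisely what sidesteps this: after picking $f=v'u$ greedily, any containing face must have an endpoint that is an $T_i$-ancestor of $v'$ or of $u$, and those nodes can check the containment condition locally after two calls to \detectancestor, with no LCA computation anywhere.
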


\begin{proof}
    First, in $\biglo{D}$ rounds, the \leftorder\ of each $T_i$ can be computed, and all nodes in each $P_i$ acknowledge the size of $|P_i|$ (\Cref{lemma:somebroadcasts}).

Then, in each $P_i$, all nodes $v \in P_i$ define $x_v$ such that

\begin{equation*}
x_v = \begin{cases}
\leftpi^{T_i}(v) & \exists e \in E_i \text{ such that } v \in e \\
n & \text{ otherwise}
\end{cases}
\end{equation*}

Through a \minproblem, the nodes identify the \id\ of the node $v'$ such that $x_{v'} = \min_{v \in P_i} {x_v}$. This node $v'$ then transmits to all other nodes in $P_i$ the \id\ of itself and the other node $u$ such that $f = v'u \in E_i$. If more than one edge $f$ exists, node $v'$ selects $u$ to be the one with the highest position in $\leftpi^{T_i}$.

At this point, the nodes know the \id\ of an edge $f = v'u$ that can only be contained by another real fundamental edge $f = v'u'$, where $v'$ is an ancestor of $v$ or $u'$ is an ancestor of $u$. Otherwise, $f$ would have a position in $\leftpi^{T_i}$ less than $v$ or a position in $\leftpi^{T_i}$ greater than $u$, contradicting the selection of $v$ or $u$, respectively. By applying \detectancestor\ twice (once for $u$ and once for $v$), all nodes in $P_i$ can detect if they are ancestors of $u$ and/or $v$. With this information, all nodes that are ancestors of $u$ or $v$ can detect if they are incident to a fundamental edge in $E_i$ that contains $f$.

The nodes $v'$ that are ancestors of $v$ verify if they are incident to an edge $f' = v'u' \in E_i$ such that $\leftpi^{T_i}(u') > \leftpi^{T_i}(u)$, or $u'$ is an ancestor of $u$ and $t_{v'}(f') > t_{v'}(e)$, with $e$ being the $T$-edge between $v'$ and its $T$-child $\omega$ such that $v' \in T_\omega$. If there exists an edge $f' = v'u'$ that meets the above conditions, through procedures analogous to the selection of $f$ (using \Cref{broadcasts} and \Cref{lemma:somebroadcasts}), in $\biglo{D}$ rounds, the nodes in $P_i$ can know the \id\ of the endpoints of an edge $f' \in E_i$, which by construction cannot be contained in any other edge of $E_i$.
\end{proof}

An analogous problem consists in detect an edge that does not contain any other edge between an specific subset of edges.

\begin{algorithmbox}{\notcontainsproblem}
     \textbf{Input}: A planar configuration $(G,\mathcal{E},T)$ with diameter $D$, a partition of $\mathcal{P}$ of $V$, a spanning tree $T_i$ of each $G[P_i]$, and a collection of real fundamental edges $E_i = \{e_1,...,e_k\}$. This means that both extremes of each $e_j$ knows the edge $e_j$ is in $E_i$. \\

       \textbf{Output:} All the nodes in each $P_i$ acknowledge the \id\ of both extremes of a real fundamental edge $e\in E_i$ that does not contain any other real fundamental edge of $E_i$.
\end{algorithmbox}

\begin{lemma}
    There exists a deterministic algorithm in the \CONGEST\ model, which solves \notcontainsproblem\ in $\biglo{D}$ rounds.
\end{lemma}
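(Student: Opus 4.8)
The plan is to mirror the argument used for \notcontainedproblem\ (\Cref{teo:notcontainedcongest}), swapping the roles of "outermost" and "innermost" edges, since an edge $e\in E_i$ contains no other edge of $E_i$ precisely when, among the edges of $E_i$, it is "deepest" in the nesting order of real fundamental faces. First I would compute the \leftorder\ $\leftpi^{T_i}$ of each $T_i$ in $\biglo{D}$ rounds via \Cref{lemma:dfsordercongest}, and have every node learn $|P_i|$ via \Cref{lemma:somebroadcasts}. Recall from \Cref{prop:caracinside} that membership of a node in a real fundamental face $F_{uv}$ (with $\leftpi(u)<\leftpi(v)$) is characterized by left/right order positions together with the ancestor relation and the $\mathcal E$-orientation of $uv$; this is exactly the local test each endpoint can perform once the orders and a constant amount of broadcast information about $u,v$ are available.

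The key step is selecting a candidate. For each node $v\in P_i$ incident to some edge of $E_i$, set
\[
x_v=\begin{cases}\leftpi^{T_i}(v) & \exists\, e=vu\in E_i \text{ with }\leftpi^{T_i}(v)<\leftpi^{T_i}(u),\\ 0 & \text{otherwise.}\end{cases}
\]
Through a \maxproblem\ (\Cref{lemma:somebroadcasts}) the nodes learn the \id\ of the node $v'$ attaining $\max_{v\in P_i}x_v$; then $v'$ broadcasts (using \Cref{broadcasts}) the \id\ of itself and of the partner $u$ with $v'u\in E_i$, breaking ties by choosing $u$ with the \emph{lowest} position in $\leftpi^{T_i}$ among admissible partners. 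The resulting edge $f=v'u$ can only \emph{contain} another edge $f'=v''u''\in E_i$ if $v''$ is a descendant of $v'$ (and $u''$ a descendant of $u$), because otherwise $f'$ would have a left endpoint with $\leftpi^{T_i}$-position strictly larger than $\leftpi^{T_i}(v')$, or a right endpoint with position strictly smaller than $\leftpi^{T_i}(u)$, contradicting the choice of $v'$ or $u$. Applying \detectancestor\ twice (once for $v'$, once for $u$) via \Cref{lemma:somebroadcasts}, every node learns whether it is a descendant of $v'$ and of $u$; each such node then checks locally, using its embedding $t_{v''}$ and the $\leftpi^{T_i}$-ranges of its $T_i$-children inside the relevant face (exactly as in the proof of \Cref{teo:notcontainedcongest}), whether it is incident to an edge $f'\in E_i$ contained in $F_f$. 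If some such $f'$ exists, a further \maxproblem\ (again picking the deepest such edge, with the same tie-breaking by $\leftpi^{T_i}$) identifies an edge of $E_i$ that, by construction, contains no other edge of $E_i$; its endpoints' \id s are broadcast to all of $P_i$. All of this runs on the node-disjoint parts $P_1,\dots,P_k$ in parallel.

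The main obstacle I anticipate is the same subtlety as in \Cref{teo:notcontainedcongest}: the candidate $f$ obtained from the first \maxproblem\ need not already be minimal in the containment order — it is only guaranteed that any edge of $E_i$ it contains must be nested "inside the descendant cone" of both its endpoints. One must argue carefully that, having located $f$, a single additional round of local checks among descendants of $v'$ and $u$ (enabled by the precomputed $\leftpi^{T_i}$-ranges and the characterization of \Cref{prop:caracinside}) suffices to detect whether $f$ contains some $f'\in E_i$, and that picking the deepest such $f'$ yields an edge with no contained edge. Since each of the primitives invoked (\dfsorderproblem, \minproblem/\maxproblem, \detectancestor, generic broadcasts) runs in $\biglo{D}$ rounds and only $\bigo{1}$ of them are used, the total round complexity is $\biglo{D}$.
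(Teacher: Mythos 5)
Your high-level plan is the right shape (the paper gives no detailed reference proof --- it says only ``Analogous'' --- so reconstructing the mirror \emph{is} the task), but the key structural claim is inverted. You assert that the candidate $f=v'u$ (with $v'$ maximizing $\leftpi^{T_i}$ among left endpoints) can only \emph{contain} an edge $f'=v''u''\in E_i$ if $v''$ is a descendant of $v'$. In fact it must be the opposite: $v''$ is a non-strict \emph{ancestor} of $v'$. Maximality of $\leftpi^{T_i}(v')$ forces $\leftpi^{T_i}(v'')\le\leftpi^{T_i}(v')$, yet $v''$ must lie in $F_f$ because $F_{f'}\subseteq F_f$; every node of $F_f$ that is strictly inside $F_f$, or that sits on the border on the $u$-side of $\mathrm{LCA}(v',u)$, has $\leftpi^{T_i}$-position strictly larger than $\leftpi^{T_i}(v')$, so $v''$ can only sit on the border segment from $\mathrm{LCA}(v',u)$ to $v'$ --- i.e., $v''$ is an ancestor of $v'$. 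A concrete counterexample to the claim as written: take $T$ rooted at $r$ with two paths $r\!-\!a\!-\!b$ and $r\!-\!a'\!-\!b'$, and $E_i=\{bb',aa'\}$, with $\leftpi(r)=1,\ \leftpi(a)=2,\ \leftpi(b)=3,\ \leftpi(a')=4,\ \leftpi(b')=5$. Then \maxproblem\ picks $b$ and the candidate is $bb'$, while $F_{aa'}$ (on vertex set $\{r,a,a'\}$) is contained in $F_{bb'}$ (on $\{r,a,a',b,b'\}$); but $a,a'$ are \emph{ancestors} of $b,b'$, not descendants, so your second pass over descendants of $b,b'$ would not discover $aa'$ and the algorithm would incorrectly output $bb'$.

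Consequently the second pass should be driven by \detectdescendant\ (so that each node learns whether it is an ancestor of $v'$, resp.\ of $u$), not \detectancestor\ as you wrote. After that fix, the rest of your skeleton --- having the relevant nodes locally test their incident $E_i$-edges for containment in $F_f$ via the precomputed $\leftpi^{T_i}$-ranges, selecting the innermost such edge with another aggregation, and broadcasting --- is a correct mirror of \Cref{teo:notcontainedcongest}, and the round complexity stays $\biglo{D}$ since only $\bigo{1}$ shortcut-based primitives are invoked. One smaller gap: your tie-break for $u$ (lowest $\leftpi^{T_i}$ among partners of $v'$) only directly contradicts a putative contained edge $f'$ in the case $v''=v'$; when $v''\neq v'$, constraining $u''$ needs a separate symmetric argument along the $\mathrm{LCA}$-to-$u$ branch (or via $\rightpi^{T_i}$), not the tie-break on $u$.
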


\begin{proof}
    Analogous to \Cref{teo:notcontainedcongest}.
\end{proof}

\subsection{Separator cycle computation: proof of \Cref{teo:separatorcongest}.}\label{subsec:separatoralgcongest}

Recall that formally, the problem to be solved is the following

\begin{algorithmbox}{\separatorproblem}
	
	\textbf{Input}: A graph $G = (V,E)$, a planar combinatorial embedding $\mathcal{E}$ of $G$, a partition $\mathcal{P}=\{P_1,...,P_k\}$ of $V$.
 \medskip
	
	\textbf{Output:} A set of marked nodes $ S_i\subseteq P_i$ in each subset of nodes, such that $S_i$ is a cycle separator of $G[P_i]$.

\end{algorithmbox}

 The algorithm that solves \separatorproblem\ proceeds through $6$ main phases which we call {\tt Precomputation Phase, Phase 1, Phase 2, Phase 3, Phase 4} and {\tt Phase 5}. The separator set of each subgraph $G[P_i]$ is computed at then end of one of the last four phases. We explain how the algorithm proceeds in each subgraph $G[T_i]$ in parallel.\\

\noindent  {\bf Phase 1.} ({\tt Precomputation Phase.}) First, a planar combinatorial embedding $\mathcal{E}$ is computed in $\biglo{D}$ rounds. Then, in each subgraph $G[P_i]$ a spanning tree $T_i$ is computed (\Cref{teo:mstcongest}), along with the \leftorder\ and \rightorder\ of each $T_i$ (\Cref{lemma:dfsordercongest}), and the weight of each real fundamental face $F_e$ of $T_i$ (\Cref{teo:weightscongest}). Additionally, each node $v \in P_i$ computes the number of nodes $n_{T_i}(v)$ in each subtree $(T_i)_v$ and $n_i := |P_i|$ (\Cref{lemma:somebroadcasts}). All these computations can be done in $\biglo{D}$ rounds. \\

\noindent  {\bf Phase 2.} In this phase, each component $P_i$ aims to verify if $G[P_i]$ is a tree. If $P_i$ is a tree, a centroid node is selected and acknowledged by all nodes in each subgraph.

To achieve the desire verification, each subgraph $G[P_i]$, each node $v \in P_i$ defines $x_v = 1$ if a $T_i$-real fundamental edge is incident on $v$, and $0$ otherwise. Then, through a partwise-aggregation problem using $\bigoplus = \max$, all nodes in $P_i$ can acknowledge the value of $x(T_i) := \max_{v \in P_i} x_v$. Using \Cref{broadcasts} and \Cref{lemma:somebroadcasts}, this is done in $\biglo{D}$ rounds.

If $x(T_i) = 1$ (i.e., $G[P_i]$ is not a tree), the nodes in $P_i$ wait until the next phase starts. If $x(T_i) = 0$, then the nodes acknowledge that $G[P_i]$ is a tree, i.e., $G[P_i] = T_i$, and they seek a centroid node of $T_i$. To do so, all nodes define $x_v = n_{T_i}(v) := \left| V(T_i)v \right|$, and through a \rangeproblem\ with $R_i = [n_i/3, 2n_i/3]$, all nodes learn the \id\ of a node $v_0 \in P_i$ such that $n{T_i}(v_0) \in R_i$ (it is known that in every tree, such a node $v_0$ exists). This is also done in $\biglo{D}$ rounds (\Cref{lemma:somebroadcasts}).

Then, the nodes in $G[P_i]$ proceed to a \markpathproblem\ to mark all the nodes in the $T_i$-path between the root of $T_i$ and $v_0$. This can be done in $\biglo{D}$ rounds, and by the selection of $v_0$, the $T_i$-path between $v_0$ and the root of $T_i$ is a separator set of $G[P_i]$. Nodes belonging to sets $P_i$ that have computed a separator set in this phase do not perform further computations and wait until all phases are completed. \\

\noindent  {\bf Phase 3.}  If the nodes in a subgraph $G[P_i]$ reaches this phase, then there exists at least one $T_i$-real fundamental edge.

Since the endpoints of each $T_i$-real fundamental edge $e = uv$ have calculated the weight $\omega(F_e)$ of the real fundamental face $F_e$, through a \rangeproblem\ with $R_i = [n_i/3, 2n_i/3]$, and $x_v = \omega(F_e)$ if $v \in e$ and $\omega(F_e) \in R_i$, and $x_v=0$ otherwise, the nodes of $P_i$ learn the \id\ of a node $v$ (if it exists) that is part of a $T_i$-real fundamental edge $e$ such that $\omega(F_e) \in R_i$. This can be done in $\biglo{D}$ rounds (\Cref{lemma:somebroadcasts}).

If a subgraph $G[P_i]$ does not have any $T_i$-real fundamental edge with weight in the range $[n_i/3, 2n_i/3]$, then the nodes of $P_i$ wait until the start of the next phase. On the other hand, if the nodes of \( P_i \) acknowledge the \id\ of a node \( v \in P_i \) that is part of a fundamental $T_i$-real edge \( e \) such that \( \omega(F_e) \in [n_i/3, 2n_i/3] \), then this node \( v \) can communicate the IDs of both endpoints of \( e \) to all other nodes in \( \biglo{D} \) rounds (\Cref{broadcasts}).

Finally, all the nodes on the $T_i$-path between the endpoints of \( e \), which are now known by all nodes in \( P_i \), are marked in \( \biglo{D} \) rounds using \Cref{teo:markpathcongest}. According to \Cref{teo:balancedface}, this path forms a cycle separator of \( G[P_i] \). All subgraphs \( G[P_i] \) that computed a cycle separator in this phase will wait until the end of the algorithm.\\

\noindent  {\bf Phase 4.} All sets \( P_i \) that reach this round do so because they have at least one \( T_i \)-real fundamental edge, and no face defined by these real fundamental edges has a weight in the range \([n_i/3, 2n_i/3]\). Any set \( P_i \) participating in this phase begins by searching for the existence of a \( T_i \)-fundamental edge \( e \) such that \( \omega(F_e) > 2n_i/3 \) (if it exists). This is done analogously to the previous part. Each node \( v \in P_i \) defines

\begin{equation*}
    x_v = \begin{cases}
        \omega(F_e) &v\in e, e\text{ is a }T_i\text{-real fund. edge}, \omega(F_e)> 2n_i/3 \\
        0 &\text{ otherwise}
    \end{cases}
\end{equation*}

And through a \rangeproblem\ using $R_i = [2n_i/3, n_i]$, all nodes in $P_i$ determine the ID of a node $v_0 \in P_i$ such that $x_{v_0} \in R_i$. On one hand, if such a node does not exist, then all nodes in $P_i$ wait until the next and final phase begins. 

On the other hand, if such a node $v_0$ exists, each node in $P_i$ identifies all $T_i$-real fundamental edges that have a weight greater than $2n_i/3$. Then, through a \notcontainsproblem, the nodes in each $P_i$ determine the endpoint's ID of a real fundamental edge $f^i = u_iv_i$ that has a weight greater than $2n_i/3$, and $f^i$ does not include any other $T_i$-fundamental edge with weight greater than $2n_i/3$. Furthermore, since no fundamental edge in this phase has a weight between $[n_i/3, 2n_i/3]$, all real fundamental edges contained in $F_{f^i}$ (if any exist) have weights less than $n_i/3$.

Subsequently the nodes in $P_i$, using \detectfaceproblem, determine whether they are in $F_{f^i}$ or not. Additionally, each subgraph $G[P_i]$ computes \fullagumentationproblem\ of the face $F_{f^i}$ so that all nodes $z$ in $F_{f^i}$ know the weight of $F_{u_1z}^\ell$.

Once all nodes in $F_f$ have computed the weight of a full augmentation from $u_i$, each node $s\in P_i$ defines $x_s = \omega\left(F_{u_is}^\ell\right)$ if $s\in \inside{F}_{f^i}$ and $x_s = 0$ otherwise. Through a \rangeproblem\ with $R_i = [n_i/3, 2n_i/3]$ the nodes in $P_i$ acknowledge, if exists, a node $s_i\in F_f$ such that $x_{s_i} = \omega\left(F_{u_is_i}^\ell\right)$ and $\omega\left(F_{u_is_i}^\ell\right)\in [n_i/3, 2n_i/3]$. If such a node exists, the nodes of $P_i$ goes to {\tt Sub-phase 4.1}, and if does not exists, then nodes wait until {\tt Sub-phase 4.2} begins.\\

\noindent  {\bf Sub-phase 4.1.} If the nodes in $P_i$ reach this subphase, there exists a node $s^i \in P_i$ such that $\omega\left(F_{u_is_i}^\ell\right) \in [n_i/3, 2n_i/3]$ within a $T_i$-real fundamental face $F_{f^i}$. Without loss of generality, $s_i$ is a leaf of $T_i$ by \Cref{lemma:propaug}. If $s_i$ is not a leaf of $T_i$, then according to \Cref{lemma:propaug}, the $T_i$-leaf $t_i$ that is a descendant of $s_i$ with the highest position in $\leftpi^{T_i}$, provided that $f^{i}$ is such that $u_i$ is not an ancestor of $v_i$ or $f^{i}$ is $\mathcal{E}$-left oriented edge, satisfies $\omega\left(F_{u_is_i}^\ell\right) = \omega\left(F_{u_it_i}^\ell\right) \in R_i$. Using \Cref{broadcasts} and \Cref{lemma:somebroadcasts}, this leaf $t_i$ can be acknowledged by all nodes, and we proceed through this phase with $s_i = t_i$. If $f^{i}$ is a $\mathcal{E}$-right oriented edge, the $T_i$-leaf $t_i$ that is a descendant of $s_i$ satisfies $\omega\left(F_{u_is_i}^\ell\right) = \omega\left(F_{u_it_i}^\ell\right) \in R_i$, and again this phase proceeds with $s_i = t_i$. This adjustment of $s_i$ to ensure it is a leaf in $T_i$ takes $\biglo{D}$ rounds.

Then, through \hiddenproblem\ for this node $s_i$, all nodes in $F_{f^i}$ can determine if they are incident to any $T_i$-real fundamental edge that hides to $s_i$ in $F_{f^i}$. Following this, using \Cref{broadcasts}, the nodes can determine if there exists a $T_i$-real fundamental edge that hides to $s_i$. If such an edge does not exist, then $s_i$ is $(T,F_e)$-compatible with $u$, and by \Cref{lemma:balancedincritical}, the $T$-path between $s_i$ and $u_i$ serves as a cycle separator of $G$, which can be marked in $\biglo{D}$ rounds using \Cref{teo:markpathcongest}.

If there exists at least one fundamental edge that hides to $v_i$, then through a \notcontainedproblem, the nodes in $P_i$ can determine the IDs $\text{id}(z_1^i)$ and $\text{id}(z_2^i)$ such that $f'' = z_1^iz_2^i$ is a $T_i$-real fundamental edge that hides to $s_i$ and $f''$ is not contained in any other fundamental edge that also hides to $s_i$. Then, assuming $\leftpi(z_1^i)<\leftpi(z_2^i)$, according to \Cref{lemma:balancedincritical}, the nodes proceed to mark the nodes in the $T_i$-path between $u_i$ and $z_2^i$. The correctness of the cycle separator marked in this case is guarantee thanks to \Cref{lemma:balancedincritical}\\

\noindent  {\bf Sub-phase 4.2.}   If no leaf $z\in \inside{F}_{f^i}$ has weight $\omega\left(F_{u_1z}^\ell\right)\in [n_i/3, 2n_i/3]$, then all nodes have $\omega\left(F_{u_1z}^\ell\right)<n_i/3$. Recall that in the proof of \Cref{teo:fundamentalseparator}, it is proven that there exists a cycle separator of $G$ by analyzing two different cases. In both cases, the conclusion is that the $T_i$-path between $u_i$ and $v_i$ is a separator cycle of $G[P_i]$. Therefore, using the \markpathproblem, the nodes in the $T_i$-path between $u_i$ and $v_i$ can be marked in $\biglo{D}$ rounds.\\

\noindent {\bf Phase 5.} All sets $P_i$ that reach this round do so because they have at least one $T_i$-real fundamental edge, and all faces defined by these real fundamental edges have a weight smaller than $n_i/3$. In this final Phase, the nodes simulate the proof of \Cref{lemma:verysmall}.

First, by an analogous application of \notcontainedproblem, the nodes in each $P_i$ can acknowledge the endpoint's \id\ of a real fundamental edge $e_i=u_iv_i$ that is not contained in any other $T_i$-real fundamental edge.

Locally, $u_i$ and $v_i$ can compute the value of $|F_\ell^{e_i}|$ and $|F_r^{e_i}|$, respectively,\footnote{Definition of these values in proof of \Cref{lemma:verysmall}}, because by definition

\begin{align*}
    \left|F_\ell^{e_i}\right| &= \leftpi(u_i)-d_{T_i}(u_i) + p_{u_i}(v_i)\\
    \left|F_r^{e_i}\right| &= \rightpi(v_i)-d_{T_i}(v_i) + p_{v_i}(u_i)
\end{align*}

In $\biglo{D}$ rounds, these values $|F_\ell^{e_i}|, |F_r^{e_i}|$ can be acknowledge by all the nodes in $P_i$. If $|F_\ell^{e_i}|, |F_r^{e_i}|\leq n_i/3$ or $n_i/3\leq |F_\ell^{e_i}|, |F_r^{e_i}|\leq 2n_i/3$, then by \Cref{lemma:verysmall}, the $T_i$-path between $u_i$ and $v_i$ is a cycle separator of $G[P_i]$.

Finally, if $|F_r^{e_i}|\geq 2n_i/3$ (resp. $|F_\ell^{e_i}|\geq n_i/3$), then the nodes simulates Phase 4 using the face $F_{v_ir_{T_i}}$ such that $\inside{F}_{v_ir_{T_i}} = F_r^{e_i}$ (resp. the face $F_{u_ir_{T_i}}$ such that $\inside{F}_{u_ir_{T_i}} = F_\ell^{e_i}$). In these case a cycle separator is computed (\Cref{lemma:balancedincritical}).

In any case, in this point has been marked a set of nodes in each $P_i$ that are a cycle separator of $G[P_i]$. As a constant number of subroutines are needed, and each of them needs $\biglo{D}$ rounds, then the overall round complexity of the algorithm is of $\biglo{D}$ rounds.
\qed

\section{Application: DFS tree construction.}\label{sec:dfscongest}

Although Ghaffari and Parter~\cite{ghaffari:2017} gave a randomized DFS algorithm based on cycle separators, following essentially the same high-level idea that we use, we cannot simply replace their randomized cycle-separator subroutine with our deterministic one and treat their algorithm as a black box. The reason is that their approach also relies on additional randomized subroutines (beyond cycle separators), such as mark-a-path procedures. We therefore revisit the construction in detail and formally describe each deterministic protocol needed to obtain a deterministic DFS algorithm.
Analogous to the previous section, we first formalize and provide algorithmic solutions to two sub-problems necessary for computing a deterministic DFS. We then describe in detail the deterministic algorithm that computes a DFS in $\biglo{D}$ rounds.

\subsection{Deterministic subroutines for the DFS algorithm.}\label{subsec:dfscongestalg}

\subsubsection{Re-root a tree.}
As discussed in \Cref{subsec:dfshighlevelideas}, to achieve an algorithm with a round complexity of $\biglo{D}$ in the \CONGEST\ model, \joinpath\ must be resolved in $\biglo{D}$ rounds. Before describing the solution to this problem, a more fundamental problem is presented and solved: how to re-root a tree in $\biglo{D}$ rounds.

\begin{algorithmbox}{\rerootproblem}
     \textbf{Input}: A planar configuration $(G,\mathcal{E},T)$ with diameter $D$, a partition of $\mathcal{P}$ of $V$, a spanning tree $T_i$ of each $G[P_i]$, and the \id\ of a node $v_i\in P_i$ known by all the nodes in $P_i$ \\

       \textbf{Output:} Each $P_i$ has update their tree $T_i$ such that $T_i$ is rooted in $v_i$
\end{algorithmbox}

By re-rooting \( T_i \) at \( v_i \), it is meant that all nodes update their distances in \( T_i \) to be their distance to \( v_i \) in \( T_i \) and update their parents accordingly.

\begin{lemma}
    There exists a deterministic algorithm in the \CONGEST\ model. which solves \rerootproblem\ in $\biglo{D}$ rounds.
\end{lemma}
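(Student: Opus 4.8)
The plan is to reduce \rerootproblem\ to a small number of invocations of the primitives already established in \Cref{subsec:shortcuts} and \Cref{subsec:subroutinescongest}. The key observation is that when we re-root a tree $T_i$ at a new node $v_i$, the only edges whose orientation changes are those on the $T_i$-path from the old root $r_i$ to $v_i$; every other edge keeps the same parent/child relationship. So the first step is to compute, for every node, whether it lies on the $T_i$-path between $r_i$ and $v_i$. Using the \leftorder\ and \rightorder\ of $T_i$ (computable in $\biglo{D}$ rounds via \Cref{lemma:dfsordercongest}), together with the subtree left-order ranges $R_z$ each node learns during that computation, a node $z$ can locally test whether $v_i \in (T_i)_z$ (i.e. $z$ is an ancestor of $v_i$) after $u_i := r_i$ and $v_i$ exchange their \leftorder\ positions; this is exactly the ancestor test already used in the proof of \Cref{teo:lcacongest}. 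The ancestors of $v_i$ are precisely the nodes on the path to be reversed.

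Next I would update the parent pointers. For a node $z \ne v_i$ on the $r_i$-to-$v_i$ path, its new parent in the re-rooted tree is its unique $T_i$-child $w$ that is also an ancestor of $v_i$ (equivalently, the child $w$ with $v_i \in (T_i)_w$); $z$ can identify this $w$ in one round by having each neighbor report whether it is on the path and whether it is a $T_i$-child of $z$. Node $v_i$ itself becomes the root (parent $\bot$). Every node off the path keeps its old parent. This requires only $\bigo{1}$ communication rounds once path membership is known.

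The remaining, and slightly more delicate, step is recomputing depths, since after re-rooting the depth of a node can change by an amount as large as $\Theta(n)$ and the path being reversed can have up to $\Theta(n)$ nodes, so we cannot afford to propagate depths hop-by-hop. The structure of the new distances is: $d_{\mathrm{new}}(z) = d_{\mathrm{old}}(r_i \text{-to-}v_i\text{-path})$ restricted appropriately plus a local offset. Concretely, let $P$ be the $r_i$-to-$v_i$ path and let $\ell = d_{T_i}^{\mathrm{old}}(v_i)$ be its length. For a node $z \in P$ at old depth $d$, its new depth is $\ell - d$. For a node $z \notin P$, let $a(z)$ be the deepest (in the old tree) ancestor of $z$ that lies on $P$; then $d_{\mathrm{new}}(z) = (\ell - d_{T_i}^{\mathrm{old}}(a(z))) + (d_{T_i}^{\mathrm{old}}(z) - d_{T_i}^{\mathrm{old}}(a(z)))$. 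Each node $z$ already knows $d_{T_i}^{\mathrm{old}}(z)$; it needs $\ell$ (broadcast from $v_i$ in $\biglo{D}$ rounds via \Cref{broadcasts}) and $d_{T_i}^{\mathrm{old}}(a(z))$. To distribute the latter, observe that the sets of descendants of the distinct path nodes partition $V(T_i) \setminus P$, so this is exactly a part-wise aggregation / broadcast along $T_i$: each path node $p$ sends the value $d_{T_i}^{\mathrm{old}}(p)$ down to all its descendants that are not below a deeper path node. This can be implemented with a constant number of \ancestorsumproblem\ or \descendantsumproblem\ calls (\Cref{lemma:hltrees}) combined with the path-membership labels — for instance, set $x_p = d_{T_i}^{\mathrm{old}}(p)$ for $p$ on $P$ and $x_z = 0$ otherwise, and have each node take the value associated with its \emph{deepest} path ancestor, which one extracts using a max-over-ancestors aggregation keyed by depth. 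Each such call costs $\biglo{D}$ rounds, and all of this runs in parallel over the parts $P_1,\dots,P_k$ since every subroutine invoked is of the part-wise type. I expect the main obstacle to be precisely this depth-recomputation step: one must argue carefully that the "deepest path ancestor" value can be delivered to every node through a single part-wise aggregation rather than a linear-length propagation, and that the bookkeeping of which edges reverse is consistent with the planar embedding data that downstream phases of the DFS algorithm rely on.
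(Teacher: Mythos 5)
Your proof is correct and shares the paper's high-level structure: classify each node relative to the old root $r_i$ and the new root $v_i$, flip parent pointers along the $r_i$-to-$v_i$ path, broadcast $\ell = d_{T_i}(v_i)$, and update depths by a local arithmetic formula. Where you go further --- and where the paper's own proof is actually incomplete as written --- is the depth update for nodes that are neither ancestors nor descendants of $v_i$. The paper sets $d_{\mathrm{new}}(v) = d_{\mathrm{old}}(v) + d_{\mathrm{old}}(v_i)$ for such nodes, which is only valid when $\mathrm{LCA}(v, v_i) = r_i$; in general the $-2\,d_{\mathrm{old}}(\mathrm{LCA}(v, v_i))$ correction you include is needed, and the paper's formula gives incorrect new depths whenever $v$ hangs off an intermediate node of the $r_i$-to-$v_i$ path. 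Your formula $d_{\mathrm{new}}(z) = d_{\mathrm{old}}(z) + \ell - 2\,d_{\mathrm{old}}(a(z))$, with $a(z)$ the deepest on-path ancestor of $z$ (equivalently $\mathrm{LCA}(z,v_i)$), is the right one, and your mechanism for delivering $d_{\mathrm{old}}(a(z))$ to every node --- a single max-over-ancestors aggregation via \Cref{lemma:hltrees}, with $x_p = d_{\mathrm{old}}(p)$ for $p$ on the path and $0$ off it (fine here since depths are non-negative and $r_i$ lies on the path) --- runs in $\biglo{D}$ rounds and parallelizes over the parts of $\mathcal{P}$ as required. So the ``main obstacle'' you flagged is genuine: your version, at the cost of one extra part-wise aggregation pass beyond what the paper describes, is the one that is actually complete.
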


\begin{proof}
   In the distributed construction of each $T_i$, each node $v \in P_i$ knows its parent $p_{T_i}(v)$, which corresponds to the first node $z \neq v$ on the $T_i$-path between $v$ and $r_{T_i}$, starting from $v$, and its depth $d_{T_i}(v)$, which corresponds to the distance of the node $v$ to the root $r_{T_i}$ in $T_i$. To re-root $T_i$ at $v_i$, while keeping the same edges, the depth of each node $v$ must be updated to the length of the $T_i$-path between $v$ and $v_i$, and its parent $p_{T_i}(v)$ must be updated to the first node on the $T_i$-path between $v$ and $v_i$, different from $v$ and starting from $v$.

First, in $\mathcal{O}(D)$ rounds, all nodes in $P_i$ can determine whether they are ancestors or descendants of the node $v_i$, or neither, using \detectancestor\ and \detectdescendant. Then, also in $\mathcal{O}(D)$ rounds (\Cref{broadcasts}), $v_i$ can inform all other nodes in $P_i$ of its original depth $d_{T_i}(v_i)$. After sending this information, $v_i$ updates its own information to $d_{T_i}(v_i) = 0$ and $p_{T_i}(v_i) = \perp$.

All nodes $v \in P_i$ that are descendants of $v_i$ in the original $T_i$ keep their same parent but update their depth to $d_{T_i}(v) = d_{T_i}(v) - d_{T_i}(v_i)$. All nodes $v \in P_i$ that are ancestors of $v_i$ in the original $T_i$ update their parent $p_{T_i}(v)$ to be the unique one of their $T_i$-children (in the original $T_i$) who is also an ancestor of $v_i$, and update their depth to $d_{T_i}(v) = d_{T_i}(v_i) - d_{T_i}(v)$. All nodes $v \in P_i$ that are neither descendants nor ancestors of $v_i$ in the original $T_i$ keep their same parent but update their depth to $d_{T_i}(v) = d_{T_i}(v) + d_{T_i}(v_i)$.
\end{proof}

\subsubsection{Adding a cycle separator: proof of \Cref{teo:joinkpaths}.}\label{subsec:proofjoinkpaths}

Recall that formally, the problem to be solved is the following

\begin{algorithmbox}{\joinproblem}
	
	\textbf{Input}: A partial DFS tree $T_d$ of $G$, the connected components $\mathcal{C}=\{C_1,...,C_k\}$ of $G-T_d$, and a set of marked nodes $S_i$ in each $P_i$ such that each $S_i$ is a cycle separator of $G[P_i]$ \\
	\medskip
	\textbf{Output:} A partial DFS tree $\tilde{T}_d$ of $G$ that contains $T_d$ and all  the nodes in $\cup_{i\in [k]} S_i$. \footnote{Other nodes can also be added to $T_d$ besides the nodes in the cycle separators}

\end{algorithmbox}

Recall that, by definition of the problem, \(G\) comes with a partial DFS tree \(T_d\). Let \(C_1,\dots,C_k\) be the connected components of \(G - T_d\). In principle, \(k\) can be as large as \(\Theta(n)\). However, the low-congestion shortcut framework allows us to process all these components in parallel.

Crucially, although this is not an intrinsic property of the problem, it is a consequence of our approach based on cycle separators that each component \(C_i\) has size at most a constant fraction of \(G\). This fact is crucial later for the recursive application of the algorithm described below. Indeed, it guarantees that the recursion terminates after \(\bigo{\log n}\) levels, and hence the final DFS tree is computed in $\biglo{D}$ round.

To add all the nodes in each $S_i$ to $T_d$ following the \dfsrule, the process for joining a particular subset $S_i$ to $T_d$ is explained, because all the subroutines developed allow all the sets $S_1,...,S_k$ to be added in parallel. The algorithm operates in $\bigo{k \cdot \log n}$ phases, and each phase requires $\biglo{D}$ rounds.

In the first round, using \Cref{broadcasts}, the nodes in $P_i$ acknowledge the \id\ of the node $v_i$ with the deepest node in $T_d$. Then, using the \rerootproblem, $T_i$ is re-rooted at $v_i$ if necessary. Note that each node knows whether it is in $S_i$ or not, but the nodes in $P_i$ do not know the complete list of nodes in $S_i$. Then, with at most $k$ calls to \minproblem, all the nodes in $P_i$ can learn the \id\ of the at most $k$ marked nodes in $S_i$ that are leaves of $T_i$. This is achieved because each node $v\in P_i$ can define $x_v = 1$ if $v\in S_i$, is a $T_i$ leaf, and has not yet been acknowledged by the other nodes, and call \minproblem\ $k$ times with these inputs.\footnote{Note that in the \rerootproblem\ the same edges of $T_i$ are maintained.}

Then, each node can locally form the lexicographical order of the leaves $(v_1,v_2)$ $,(v_1,v_3)$ ,$ $...$ $, $ (v_{k-1},v_k)$ such that $(v_i,v_j)\leq(v_h,v_g) \iff \id(v_i)<\id(v_h)\lor (\id(v_i)=\id(v_h)\land \id(v_j)<\id(v_g))$, and since all the nodes know all the leaves of $S_i$, the same order is formed by all. Then, with at most $k^2$ uses of the \lcaproblem, in lexicographical order for each pair of leaves $(v_i,v_j)$, the nodes learn the \id\ of LCA$(v_i,v_j)$ for all $v_i,v_j$ leaves of $T_i$ in $S_i$. Given this, with a straightforward application of \minproblem, all the nodes in $P_i$ can acknowledge which node $z_1 = LCA(v_i,v_j)$ has the least depth in $T_i$. Since $S_i$ is a cycle separator of $T_i$, this node $z_1$ is unique and $z_1\in S_1$.

Next, using \maxproblem, it can be decided which leaf $h$ of $T_i$ in $S_i$ has the greatest distance from $z_1$, since each node $v'$ in $S_i$ knows that the distance to $z_1$ is $d_{T_i}(v')-d_{T_i}(z_1)$. To complete the first phase, using \markpathproblem, all the nodes in the $T_i$-path $P_1$ between the root $v_i$ and $h$ (which by construction contains $z_1$) are marked.

These marked nodes are then added to $T_d$. In $\biglo{D}$ rounds, the node $v_i$ can inform all other nodes in $P_i$ of the depth $p_i$ of its deepest neighbor in $T_d$. With this information, all marked nodes $v'$ in $P_1$ update their depth in $T_d$ as $d_{T_d}(v') = p_i + 1 + d_{T_i}(v')$ and update their parent in $T_d$ to their parent in $T_i$, i.e., $p_{T_d}(v') = p_{T_i}(v')$. With this, all nodes in $P_1$ are added to $T_d$ and since $v_i$ is the node with the deepest neighbor in $T_d$, the \dfsrule\ is followed.

Note that in this first phase, at least half of the nodes in the cycle separator $S_i$ of $T_i$ are added to $T_d$.

For the second phase, the same procedure is repeated, but now in each connected component $C_1^2,...,C_{j_2}^2$ of $G[P_i]-P_1$ such that $S' \cap V(C_h^2)\neq \emptyset$ with $S' = S_1 - V(P_1)$. In each of these connected components, an MST is first constructed with the weight function $\omega$ such that $\omega(uv) = 0$ if $u,v\in S'$ and $\omega(uv)=1$ otherwise. These MSTs $T_h^2$ can be constructed in $\biglo{D}$ rounds (\Cref{teo:mstcongest}) and the weight function ensures that all nodes in $S'$ in the same connected component $C_h^2$ will form a $j$-$T_h^2$-path in the MST $T_h^2$ with $j\leq k$. From this point, the same procedure as in the first phase is followed, so by the end of the second phase, after $\biglo{D}$ rounds, a path $P_h^2$ has been joined to $T_d$ in each connected component $C_h^2$, containing at least half of one of the $j\leq k$ paths that form the $j$-$T_h^2$-path marked in $C_h^2$.

Subsequently, in each phase $\ell$, the connected components $C_1^\ell,...,C_{j_\ell}^\ell$ of $G[P_i]-P_1-\cup_{s< \ell,t\leq j_s} P_t^s$ are considered such that $S' \cap V(C_h^\ell)\neq \emptyset$ with $S' = S_1 - V(P_1) - \cup_{s< \ell,t\leq j_s} V(P_t^s)$. In each of these connected components, the procedure is analogous to the second phase. Thus, in each phase $\ell$, at least half of the nodes in one of the remaining paths in $S' = S_1 - V(P_1) - \cup_{s< \ell,t\leq j_s} V(P_t^s)$ are added to $T_d$. Since there are at most $k$ paths, after $\bigo{k \times \log n}$ rounds, all the nodes in $S_i$ have been added to $T_d$, achieving the round complexity of $\biglo{D}$ rounds for a constant $k$. \qed

Given all these results, all the subroutines required to prove \Cref{teo:separatorcongest}
 and \Cref{teo:dfscongest} are available.

\subsection{DFS tree construction: proof of \Cref{teo:dfscongest}.}\label{dfscongestproof}

Given a planar input graph $F=(V,E)$ with diameter $D$, first the algorithm computes a planar combinatorial embedding $\mathcal{E}$ of $G$ in $\biglo{D}$ rounds. As explained in \Cref{subsec:dfshighlevelideas}, the deterministic DFS algorithm, also called \emph{main algorithm}, is the following: In each recursive call, referred to as a \emph{phase}, we have a partial DFS-tree $T_{d}$ already constructed. this means that each node in $T_d$ knows its distance of the root and its parent in $T_d$. This information does not change once a one joins $T_d$.

In the first phase, $T_{d}$ consists just of the node $r$ without any edge. Then, in  each phase we compute the following steps, in parallel, in each connected component $C_i$ of $G-T_d$:

\begin{enumerate}[align= left,label=\textbf{Step \arabic*.}, ref = {\bf Step \arabic*}.]
	\item  Compute a cycle separator $S_i$ of $C_i$. More precisely, we mark a set of nodes $S_i\subseteq V(C_i)$ that induce a separator set of $C_i$.
	\item  Add the marked cycle separator \(S_i\) to the partial DFS-tree $T_{d}$ following the \dfsrule.
\end{enumerate} 

By \Cref{teo:separatorcongest} (whose proof is in \Cref{subsec:proofteo}), it can be shown that Step 1 can be completed in $\biglo{D}$ rounds. Since the nodes of the computed cycle separators forms a path of a spanning tree $T_i$ of each connected component $C_i$ of $G - T_d$, Step 2 can also be resolved in $\biglo{D}$ rounds by \Cref{teo:joinkpaths} (whose proof is in \Cref{subsec:proofjoinkpaths}). In each phase $j$, a cycle separator from each connected component is added to $T_d$. Consequently, in the next phase $j+1$, the maximum size of a connected component decreases by at least a factor of $1/3$ compared to the maximum size of a connected component in phase $j$. Therefore, after $\mathcal{O}(\log n)$ phases, no nodes remain to be added to $T_d$. Thus, in $\biglo{D}$ rounds, a DFS tree of $G$ rooted at $r$ is computed.\qed

\section{Other deterministic applications}
\label{sec:otherapp}

Our deterministic algorithm for computing cycle separators does not, by itself, yield a deterministic DFS algorithm, since additional subproblems must be handled (\Cref{subsec:dfscongestalg}). In contrast, several state-of-the-art randomized algorithms for planar graphs can be derandomized directly using our cycle separator algorithm, as their only source of randomness is the computation of cycle separators. For completeness, we briefly outline the main ideas of these algorithms and state formally the deterministic algorithms that we obtain, but refer to the original papers for full details.

{\bf Bounded Diameter Decompositions.} Li and Parter (Definition 4.1., \cite{parter}) introduced the notion of Bounded Diameter Decomposition (BDD) of a planar graph. A BDD of an $n$-node planar graph $G=(V,E)$ with diameter $D$ is a rooted tree $\mathcal T=(V_{\mathcal T}, E_{\mathcal T})$ such that each node $X\in V_{\mathcal T}$ is a subset of nodes of $G$, i.e. $X\subseteq V$, and therefore they are called bags. A BDD decomposition $\mathcal T$ satisfies important properties that allow solving certain problems recursively over $\mathcal T$ in $\biglo{poly(D)}$ rounds, even dispensing with the use of low-congestion shortcuts. For example, the tree $\mathcal T$ has low depth ($\bigo{\log n}$), each bag $X\in V_{\mathcal T}$ induces a connected graph in $G$ with diameter close to $D$ (namely, diameter of $\bigo{D\log n}$). See \cite{parter} for an extensive description.

Algorithmically, it was shown in \cite{parter} that a BDD can be computed recursively as follows: Given an $n$-node planar graph $G$ with diameter $D$.

\begin{itemize}
    \item Start with $G=G'$, compute a cycle separator $S$ of $G'$.
    \item Let $V_1,\dots, V_k$ be the nodes of each connected component of $G'-S$. Each bag $X_i$ of this iteration is given by a set of nodes $V_i$ plus some nodes of $S$, in order to guarantee that the bag satisfies the properties of Definition 4.1. in  \cite{parter}.
    \item Continue recursively, to define the children bags of each bag $X_i$, in each connected component $G[V_i]$.
\end{itemize}

In their algorithm to compute a BDD, step $2.$ of the recursion consists of straightforward deterministic algorithms to decide which nodes of the separator will join to each connected component to define the bags. Therefore, given that for step $1.$ we can use the deterministic cycle separator given by \Cref{teo:separatorcongest}, we obtain the following Corollary.

\begin{corollary}
     There exists a deterministic algorithm in the \CONGEST\ model which, given a planar graph $G~=~(V,E)$ with diameter $D$, computes a BDD $\mathcal T$ of $G$ in $\biglo{D}$ rounds.
\end{corollary}

\begin{proof}
    Follows directly by using the deterministic cycle separator algorithm given by \Cref{teo:separatorcongest} in the algorithm given by Theorem 4.2. in \cite{parter}.
\end{proof}

{\bf Application to different problems.} Using cycle separators or BDDs, several randomized algorithms are known whose only source of randomness (so far) is the procedure that computes the cycle separator or the BDD. We describe a collection of problems for {\it directed} planar graphs and the complexity of the new deterministic algorithm for them. This list of problems is not extensive and we refer to the references for other applications or modifications of the problems.

\begin{itemize}[label=$\triangleright$]
    \item {\it Single Source Shortest Path (SSSP)}. Given an $n$-node directed planar graph $G=(V,E)$, source node $s\in S$ and a weight function $\omega\colon V\to \mathbb Q$, the objective is that each node $v$ in $G$ learns its weighted distance to $s$ (respect to $\omega$).
    \item {\it $s$--$t$-Maximum Flow}. Given an $n$-node directed planar graph $G=(V,E)$, two nodes $s,t\in V$ and an integer capacity assignment in the edges $c\colon E\to \mathbb N$, the objective is to compute the maximum $s$--$t$ flow and the respective assignment of flow in the edges.
    \item {\it Reachability.} Given an $n$-node directed planar graph $G=(V,E)$ and a source node $s\in V$, the objective is to identify all the nodes of $G$ that are reachable from $s$ through a directed path.

    \item {\it Strongly Connected Components (SCCs).} Given an $n$-node directed planar graph $G=(V,E)$, the goal is to compute all the strongly connected components (SCCs) of $G$. In the distributed setting, this means that each node must output the identifier ($\id$) of the strongly connected component to which it belongs.
\end{itemize}

\begin{corollary}
    There exist deterministic algorithms $\mathcal A _ {1}, \mathcal A _ {2}, \mathcal A_3 $ and $\mathcal A_4$ in the \CONGEST\ model such that, given a directed planar graph $ G~ ~(V,E)$ with diameter $D$.

    \begin{itemize}
        \item $\mathcal A _ {1} $ solves the SSSP problem in $\biglo{D^2}$ rounds and, moreover, computes a directed SSSP tree with root in the source node $s$.
        \item $\mathcal A_2$ solves {\it Reachability} problem in $\biglo{D}$ rounds.

        \item $\mathcal A_3$ solves $s$--$t$-Maximum Flow problem in $\biglo{D^2}$ rounds.
        
        \item $\mathcal A_4$ solves {\it SCCs} problem in $\biglo{D}$ rounds.
        
    \end{itemize}
\end{corollary}

Each deterministic algorithm is obtained by plugging our deterministic cycle separator algorithm from \Cref{teo:separatorcongest} into the corresponding randomized algorithm from the references listed in Table~1, replacing their cycle separator routine.

\section{Conclusion.}\label{sec:conclusion}

In this work, we have developed a deterministic algorithm in the \CONGEST\ model to compute non-trivial cycle separators in planar graphs. This result enabled the creation of a deterministic, nearly-optimal algorithm in the \CONGEST\ model that computes a DFS tree in undirected planar graphs in 
$\biglo{D}$ rounds.

Several interesting questions remain open. On one hand, this paper initiates the study and application of cycle separators to solve problems that still do not have deterministic algorithms matching their lower bound, or their randomized counterpart. While for some of these problems a deterministic algorithm can be obtained directly, since their only source of randomness lies in the computation of cycle separators, for others this is not the case. A notable example is diameter computation \cite{parter}, which relies on randomization through the use of hash functions to compress the computed information; hence, the mere existence of deterministic cycle separator algorithms is not sufficient to derive a fully deterministic solution.

Another promising research direction is to investigate the existence of deterministic algorithms for computing non-trivial separator sets in graph classes beyond planar graphs—such as bounded-genus graphs, minor-free graphs, and graphs with bounded treewidth. Interestingly, for these graph families, several works have already established the existence of deterministic algorithms for computing low-congestion shortcuts \cite{gyhdense, haeupler2018round, haeupler2016low, haeupler2016near}. However, computing separator sets in these classes may be significantly more challenging, as one cannot rely on tools like those developed by Lipton and Tarjan for computing cycle separators in planar graphs.

\bibliographystyle{plain}
\bibliography{bibliografia}

\end{document}